\title{Insignificant Choice Polynomial Time}
\author{Klaus-Dieter Schewe}
\institute{Linz, Austria, \email{kd.schewe@gmail.com}}
\begin{document}
\maketitle\thispagestyle{plain}

\begin{abstract}

In this article choiceless polynomial time (CPT) is extended using non-determini\-stic Abstract State Machines (ASMs), which are restricted by three conditions: (1) choice is restricted to choice among atoms; (2) update sets in a state must be isomorphic; (3) for any two isomorphic update sets on states $S$ and $S^\prime$, respectively, the sets of update sets of the corresponding successor states are isomorphic. The restrictions can be incorporated into the semantics of ASM rules such that update sets are only yielded, if the conditions are satisfied. Furthermore, the conditions can be checked in polynomial time on a simulating Turing machine. Finally, the conditions imply global insignificance, i.e. the final result is independent from the choices. These properties suffice to show that the ASMs restricted this way define a logic capturing PTIME, which we call insignificant choice polynomial time (ICPT).\\[1ex]
\textbf{Keywords:} abstract state machine, non-determinism, insignificant choice, polynomial time, PTIME logic, descriptive complexity, Gurevich's conjecture

\end{abstract}

\section*{Remarks on Previous Versions}

This report presents a revised version of my work dedicated to the capture of the complexity class PTIME by a computation model on structures. The approach was first published in \cite{schewe:arxiv2020v1} and corrected in \cite{schewe:arxiv2020v4} as part of research that also addressed the P/NP problem\footnote{In this version all aspects of this second problem have been removed because of two remaining open problems, which will be dealt with in a later version.}. Anuj Dawar found several gaps in the proof \cite{dawar:communication2020}, which were mostly addressed in \cite{schewe:arxiv2021v7}. In this version a remaining gap  concerning {\em local and global insignificance} is addressed and closed.

The most decisive step in the whole approach is the definition of a logic or computation model on structures capturing PTIME. The computation model (i.e. the ``sentences'' of the logic) is defined by restricted non-deterministic Abstract State Machines (ASMs), so the work follows the work by A. Blass, Y. Gurevich and S. Shelah, who exploited deterministic ASMs to define the choiceless fragment of PTIME \cite{blass:apal1999}, which is strictly included in PTIME. In order to define PTIME machines we preserve the restriction to consider only base sets that are defined as sets of hereditarily finite sets over some finite set of atoms. This restriction is merely used to be able to ``count'' atomic steps by taking the size (formalised by the cardinality of the transitive closure) of objects in updates into account. Arbitrary base sets can be encoded using hereditarily finite sets, so there is no loss of generality.

It is rather straightforward to show that every PTIME problem can be solved by a non-deterministic PTIME ASM. If the PTIME problem is represented by a Boolean query $\Phi$ and a signature for input structures $I$ for $\Phi$, then there exists a PTIME Turing machine accepting an ordered version of $I$ iff $I$ satisfies $\Phi$. Hence the ASM only needs to non-deterministically generate an order on the set of atoms, compute deterministically a standard encoding of the ordered version of the input structure $I$ with respect to this order, and finally simulate deterministically the given Turing machine.

Obviously, this cannot be simply reversed, because it is not known how such Turing machines look like (if they were known, we would most likely have an easy direct way to solve the P-NP probem). So the approach does \textbf{not} rely on generating an order, but instead we ask, which conditions are satisfied by the ASMs solving PTIME problems that allow the proof to be reversed, i.e. restricted ASMs are sought\footnote{A comment by A. Dawar in \cite{dawar:communication2020}, which has been blindly copied by an anonymous reviewer of ACM ToCL, criticises that for a Turing machine that takes a graph as input and outputs \textbf{true} iff there is an edge between the first and the second vertex the simulating ASM would not be order-invariant. However, such a Turing machine is ill-defined, as it does not decide a PTIME problem. The comment ignores that the reverse direction of the proof does not depend on any order, but is based on restricted ASMs. The ASMs resulting from generating an order and simulating a given (well-defined) PTIME Turing machine are subsumed, but e.g. the ASMs in Examples 3.1 and 3.2 in \cite{schewe:arxiv2021v7} do not use order, but nonetheless are valid insignificant choice ASMs. What Dawar really meant to say is that by adding choice alone, regardless how the choice is restricted, it is impossible to obtain global insignificance, which however is essential for a simulation by a deterministic Turing machine (see the next footnote).}.

Originally, the only restriction for choice-rules in \cite{schewe:arxiv2020v1} was that in every state $S$ all yielded update sets must be isomorphic. This alone does not imply global insignificance, i.e. different choices may lead to different final results: a chosen value might simply be stored in some location $\ell$ that is only evaluated $k$ steps later (with arbitrary $k$). If the evaluation depends (directly or indirectly) on the input structure (e.g. evaluating $R(\ell)$ for some unary input relation $R$) different final results may come out. In \cite{schewe:arxiv2020v4} another restriction for choice-rules was added requiring that an isomorphism between update sets on some state $S$ should induce an isomorphism between the sets of update sets on the corresponding successor states, and in \cite{schewe:arxiv2021v7} I added the restriction that a choice can only be made among atoms, which ensures that the conditions can be checked in polynomial time on a simulating Turing machine.

However, these restrictions are still insufficient; ASMs satisfying the local insignificance condition in \cite{schewe:arxiv2021v7} may not be globally insignificant, i.e. the result could still be different for different choices made in some state of a run\footnote{Thus, the critique by A. Dawar that global insignificance cannot be obtained is correct, though the given counter-example was not helpful. The fact is that simply the argument in the induction step in the proof of Lemma 3.6 in \cite{schewe:arxiv2021v7} does not work in general. This was a stupid mistake.}.

In principle, this gap can be easily closed, if the second part of the local insignificance condition in \cite{schewe:arxiv2021v7} is generalised from sets of update sets on successor states of the {\em same} state $S$ to sets of update sets on successor states of {\em arbitrary} states $S$ and $S^\prime$, provided these successor states result from applying isomorphic update sets. More precisely, condition (ii) in \cite[Definition~3.5]{schewe:arxiv2021v7} needs to be generalised as follows:

\begin{quote}

For isomorphic update sets $\Delta$ on a state $S$ and $\Delta^\prime = \sigma(\Delta)$ on a state $S^\prime$ the isomorphism $\sigma$ must also be an isomorphism between the sets of update sets on the corresponding successor states $S + \Delta$ and $S^\prime + \Delta^\prime$.

\end{quote}

With this generalised condition---we will later call it the {\em branching condition}, as de facto it does not tame choice- but if-rules---the induction step argument in the proof in \cite{schewe:arxiv2021v7} becomes valid, and we obtain global insignificance. In particular, in a simulating Turing machine the choices can be easily removed.

However, this raises a different serious problem. In order to obtain a logic capturing PTIME the local insignificance and branching conditions must be checked by a simulating Turing machine in polynomial time. At first (superficial) sight this appears to be impossible for the branching condition, as it involves arbitrary pairs of states. In this revised version we will show for a given state $S$ how to reduce the problem to finitely many states $S^\prime$ independent from the input structure---that is, the decisive breakthrough will be contained in Lemma \ref{lem-bc3}. We will exploit the knowledge gained from the theory of ASMs, in particular we can benefit from {\em bounded exploration witnesses} $W$ and {\em $W$-simularity}, notions that were used in behavioural theories of sequential, parallel, concurrent and reflective algorithms \cite{gurevich:tocl2000,ferrarotti:tcs2016,boerger:ai2016,schewe:scp2022}. Once the PTIME simulation on a Turing machine is shown to be possible the remaining changes to the recursive syntax exploiting the logic of non-deterministic ASMs and the effective translation of insignificant choice ASMs to Turing machines are rather straightforward.

\section{Introduction}

In 1982 Chandra and Harel raised the question whether there is a computation model over structures that captures PTIME rather than Turing machines that operate over strings \cite{chandra:jcss1982}. Soon later Gurevich raised the related question how to define a computation model over structures that could serve as foundation for the notion of algorithm \cite{gurevich:ams1985}. As there is typically a huge gap between the abstraction level of an algorithm and the one of Turing machines, Gurevich formulated a new thesis based on the observation that ``if an abstraction level is fixed (disregarding low-level details and a possible higher-level picture) and the states of an algorithm reflect all the relevant information, then a particular small instruction set suffices to model any algorithm, never mind how abstract, by a generalised machine very closely and faithfully''. This led to the definition of Abstract State Machines (ASMs), formerly known as evolving algebras \cite{gurevich:lipari1995}.

Nonetheless, in 1988 Gurevich formulated the conjecture that there is no logic---understood in a very general way comprising computation models over structures---capturing PTIME \cite{gurevich:cttcs1988}. If true an immediate implication would be that PTIME differs from NP. Among the most important results in descriptive complexity theory (see Immerman's monograph \cite{immerman:1999}) are Fagin's theorem stating that the complexity class NP is captured by the existential fragment of second-order logic \cite{fagin:siam1974}, and the theorem by Immerman and Vardi stating that over ordered structures the complexity class PTIME is captured by first-order logic plus inflationary fixed-point \cite{immerman:ic1986,livchak:cmo1982,vardi:stoc1982}. Thus, if there is a logic capturing PTIME, it must be contained in $\exists$SO and extend IFP[FO]. As an extension by increase of order can be ruled out, the argumentation concentrates on the addition of generalised quantifiers, but there is very little evidence that all of PTIME can be captured by adding a simple set of quantifiers to IFP[FO] (see e.g. the rather detailed discussion in Libkin's monograph \cite[204f.]{libkin:2004}). However, it is known that if a logic capturing PTIME exists, it can be expressed by adding quantifiers \cite[Thm.~12.3.17]{ebbinghaus:1995}.

Another strong argument supporting Gurevich's conjecture comes from the work of Blass, Gurevich and Shelah on choiceless polynomial time (CPT), which exploits a polynomial time-bounded version of deterministic ASMs supporting unbounded parallelism, but no choice \cite{blass:apal1999} (see also \cite{blass:jucs1997,blass:jsl2002}). CPT extends IFP[FO], subsumes other models of computation on structures such as relational machines \cite{abiteboul:stoc1991}, reflective relational machines \cite{abiteboul:lics1994} and generic machines \cite{abiteboul:jacm1997}, but still captures only a fragment of PTIME. As shown in \cite[Thm.~42,~43]{blass:apal1999} some PTIME problems such as Parity or Bipartite Matching cannot be expressed in CPT, and for extensions of CPT by adding quantifiers such as counting the perspective of capturing PTIME remains as dark as for IFP[FO], as all arguments given by Libkin in \cite[204f.]{libkin:2004} also apply to CPT.

If true, another consequence of Gurevich's conjecture would be that complexity theory could not be based as a whole on more abstract models of computations on structures such as ASMs. In particular, it would not be possible to avoid dealing with string encodings using Turing Machines. However, this consequence appears to be less evident in view of the ASM success stories. Gurevich's important sequential ASM thesis provides a purely logical definition of the notion of sequential algorithm and shows that these are captured by sequential ASMs \cite{gurevich:tocl2000}, which provides solid mathematical support for the ``new thesis'' formulated by Gurevich in 1985 \cite{gurevich:ams1985}. Generalisations of the theory have been developed for unbounded parallel algorithms\footnote{Differences to other characterisations of parallel algorithms in \cite{blass:tocl2003,blass:tocl2008,dershowitz:igpl2016} concern the axiomatic definition of the class of (synchronous) parallel algorithms; these are discussed in \cite{ferrarotti:tcs2016}.} \cite{ferrarotti:tcs2016}, recursive algorithms \cite{boerger:fi2020}, concurrent algorithms \cite{boerger:ai2016}, and for reflective algorithms \cite{schewe:scp2022}. In addition, the usefulness of ASMs for high-level development of complex systems is stressed by B\"orger in \cite{boerger:2003}. Furthermore, logics that enable reasoning about ASMs have been developed for deterministic ASMs by St\"ark and Nanchen \cite{staerk:jucs2001} based on ideas from Glavan and Rosenzweig that had already been exploited for CPT \cite{glavan:csl1992}, and extended to non-deterministic ASMs by Ferrarotti, Schewe, Tec and Wang \cite{ferrarotti:igpl2017,ferrarotti:amai2018}, the latter work leading to a fragment of second-order logic with Henkin semantics \cite{henkin:jsl1950}, and to reflective algorithms \cite{schewe:scp2021}.

Therefore, we dared to doubt that Gurevich's conjecture is true, in particular, as the choiceless fragment of PTIME is captured by a version of deterministic ASMs. Same as others \cite{arvind:stacs1987,gire:infcomp1998} we saw that non-deterministic ASMs could be used, for which the choices have no effect on the final result, but global insignificance cannot be decided \cite{dawar:jlc2003}. We found a different {\em local insignificance condition} to solve this problem (similar to {\em semi-determinism} \cite{bussche:pods1992}), which can be expressed syntactically, and together with a restriction of choices to atoms can be checked in polynomial time. This, however, is not yet restrictive enough, as global insignificance is not implied. This can be easily repaired by another condition taming the branching in ASMs, so we call this additional condition {\em branching condition}. This brings up the challenge to show that this condition can also be checked in PTIME.
 
For this, the behavioural theories of ASMs \cite{gurevich:tocl2000,ferrarotti:tcs2016} come to help. It turns out that bounded exploration witnesses can be successively exploited to reduce the states that are required in the branching condition to representatives of $W$-similarity classes, where $W$ is a bounded exploration witness generalised to non-deterministic ASMs. In this article we summarise our findings with the main result that PTIME insignificant choice ASMs (icASMs) define a logic capturing PTIME, where the icASMs are defined by choices restricted to atoms and being locally insignificant, and ASM rules satisfying the branching condition. 

\paragraph*{Organisation of the Article.}

In Section \ref{sec:asm} we introduce the background from ASMs and define a version of non-deterministic ASMs analogous to the one in \cite{blass:apal1999}, i.e. we assume a fixed finite input structure, and consider only hereditarily finite sets as objects. This is then used to define PTIME-bounded ASMs. Section \ref{sec:ic} is dedicated to the introduction of insignificant choice. We start with motivating examples, and then formally restrict our computation model to PTIME-bounded insignificant choice ASMs satisfying the local insignificance conditions. This leads to the definition of the complexity class ICPT. In order to show that this logic captures PTIME we first show how to effectively translate PTIME-bounded insignificant choice ASMs to Turing machines. These run in polynomial time, if the local insignificance and branching conditions can be checked in polynomial time for all states. In Section \ref{sec:verification} we show this in indeed possible. The most important lemma in the proof is a reduction lemma, which shows that the branching condition only depends on certain $W$-similarity classes, where $W$ is a bounded exploration witness. Section \ref{sec:capture} summarises the main result of this article, the capture of PTIME.

\section{Abstract State Machines}\label{sec:asm}

We assume familiarity with the basic concepts of ASMs. In general, ASMs including their foundations, semantics and usage in applications are the subject of the detailed monograph by B\"orger and St\"ark \cite{boerger:2003}. In a nutshell, an ASM is defined by a signature, i.e. a finite set of function (and relation) symbols, a background, and a rule. The signature defines states as structures, out of which a set of initial states is defined. The sets of states and initial states are closed under isomorphisms. The background defines domains and fixed operations on them that appear in every state (see \cite{blass:beatcs2007} for details), and the rule defines a relation between states and successor states and thus also runs. 

Here we follow the development for CPT in \cite{blass:apal1999} and adapt ASMs to the purpose of our study on complexity theory. In particular, we use hereditarily finite sets\footnote{According to Barwise hereditarily finite sets form a rather natural domain for computation \cite{barwise:1975}. In particular, any base set can be encoded using hereditarily finite sets, which then can be used to define a realistic complexity measure for ASMs.}.

\subsection{Signature and Background}

The {\em background} of an ASM, as we use them here, comprises logic names and set-theoretic names:

\begin{description}

\item[Logic names] comprise the binary equality $=$, nullary function names \textbf{true} and \textbf{false} and the usual Boolean operations. All logic names are relational.

\item[Set-theoretic names] comprise the binary predicate $\in$, nullary function names $\emptyset$ and \textit{Atoms\/}, unary function names $\bigcup$, \textit{asSet\/}, and \textit{TheUnique\/}, and the binary function name \textit{Pair\/}. 

\end{description}

Same as for CPT we will use $\emptyset$ also to denote undefinedness, for which usually another function name \textit{undef\/} would be used. In this way we can concentrate on sets.

The signature $\Upsilon$ of an ASM, as we use them here, comprises input names, dynamic and static names:

\begin{description}

\item[Input names] are given by a finite set of relation symbols, each with a fixed arity. Input names will be considered being static, i.e. locations defined by them will never be updated by the ASM. We write $\Upsilon_{\text{in}}$ for the set of input names.

\item[Dynamic names] are given by a finite set of function symbols, each with a fixed arity, including a nullary function symbol \textit{Output\/} and a nullary function symbol \textit{Halt\/}. Some of the dynamic names may be relational. We write $\Upsilon_{\text{dyn}}$ for the set of dynamic function symbols.

\item[Static names] are given by a finite set $K$ of nullary function symbols $c_f$ for all dynamic function symbols $f$.

\end{description}

\subsection{States}

{\em States} $S$ are defined as structures over the signature $\Upsilon$ plus the background signature, for which we assume specific base sets. A {\em base set} $B$ comprises two parts: the collection $\textit{HF\/}(A)$ of hereditarily finite sets built over a finite set $A$ of {\em atoms}, which are not sets, and a set $K$ of constants\footnote{Note that we overload the notation here using $c_f$ both as a nullary function symbol and as a constant in the base set.} $c_f \notin \textit{HF\/}(A)$ for all static names $c_f$.

If $\mathcal{P}$ denotes the powerset operator and we define inductively $\mathcal{P}^0(A) = A$ and $\mathcal{P}^{i+1}(A) = \mathcal{P}(\bigcup_{j \le i} \mathcal{P}^j(A))$, then we have
\[ \textit{HF\/}(A) = \bigcup_{i < \omega} \mathcal{P}^i(A) = A \cup \mathcal{P}(A) \cup \mathcal{P}(A \cup \mathcal{P}(A)) \cup \dots . \]

Each element $x \in \textit{HF\/}(A)$ has a well-defined {\em rank} $rk(x)$. We have $rk(x) = 0$, if $x = \emptyset$ or $x = c_f$ or $x$ is an atom. If $x$ is a non-empty set, we define its rank as the smallest ordinal $\alpha$ such that $rk(y) < \alpha$ holds for all $y \in x$.

We also use the following terminology. The atoms in $A$ and the sets in $\textit{HF\/}(A)$ are the {\em objects} of the base set $B = \textit{HF\/}(A) \cup K$. A set $X$ is called {\em transitive} iff $x \in X$ and $y \in x$ implies $y \in X$. If $x$ is an object, then $\textit{TC\/}(x)$ denotes the least transitive set $X$ with $x \in X$. If $\textit{TC\/}(x)$ is finite, the object $x$ is called {\em hereditarily finite}.

The logic names are interpreted in the usual way, i.e. \textbf{true} and \textbf{false} are interpreted by 1 and 0, respectively (i.e. by $\{ \emptyset \}$ and $\emptyset$)\footnote{Note that as in CPT we overload 0 with a specific value \textit{undef\/} for undefinedness \cite{blass:apal1999}. This is because we are mainly interested in problems to decide, if input structures are to be accepted. Whether there is an explicit rejection of a structure or a computation simply does not produce a decision result, is not important in this context. It would, however, do no harm if a specific value \textit{undef\/} would be added to each base set; it would only make the computation model slighly more complicated.}. Boolean operations are undefined, i.e. give rise to the value 0, if at least one of the arguments is not Boolean.

An {\em isomorphism} is a bijection $\sigma$ from the set $A$ of atoms to another set $A^\prime$ of atoms\footnote{For the purpose of this article we will mostly deal with isomorphisms $\sigma: A \rightarrow A$, i.e. permutations of a fixed set of atoms.}. An isomorphism $\sigma$ is extended to sets in $B$ by $\sigma(\{ b_1, \dots, b_k \}) = \{ \sigma(b_1), \dots, \sigma(b_k) \}$ and to the constants $c_f$ by $\sigma(c_f) = c_f$. Then every isomorphism $\sigma$ will map the truth values $\emptyset$ and $\{ \emptyset \}$ to themselves.

\paragraph{\textbf{Remark.}}

Note that we could have defined an {\em isomorphism between base sets} $B = \textit{HF\/}(A) \cup \{ c_f \mid f \in \Upsilon_{\text{dyn}} \}$ and $B^\prime = \textit{HF\/}(A) \cup \{ c_f \mid f \in \Upsilon_{\text{dyn}} \}$ as a bijection $\hat{\sigma}: B \rightarrow B^\prime$ such that there exists a bijection $\sigma: A \rightarrow A^\prime$ such that $\hat{\sigma}(a) = \sigma(a)$, $\hat{\sigma}(c_f) = c_f)$, and $\hat{\sigma}(\{ b_1, \dots, b_k \}) = \{ \hat{\sigma}(b_1), \dots, \hat{\sigma}(b_k) \}$ hold for all atoms $a \in A$, all $b_i \in \textit{HF\/}(A)$, and all constants $c_f$. As this boils down to the notion of isomorphism defined above plus its canonical extension to base sets, we dispense with such a distinction.

The set-theoretic names $\in$ and $\emptyset$ are interpreted in the obvious way, and \textit{Atoms\/} is interpreted by the set of atoms of the base set. If $a_1 ,\dots, a_k$ are atoms and $b_1 ,\dots, b_\ell$ are sets, then $\bigcup \{ a_1 ,\dots, a_k, b_1 ,\dots, b_\ell \} = b_1 \cup\dots\cup b_\ell$. The function \textit{asSet\/} is defined on update sets $\Delta$, i.e. sets of triples $(x_1, x_2, x_3)$, where $x_1$ is a constant $c_f$, $x_2$ is a set in $\textit{HF\/}(A)$ that represents (using the common Kuratowski representation) a $k$-tuple over $\textit{HF\/}(A)$, where $k$ is the arity of $f$, and $x_3$ is an object in $\textit{HF\/}(A)$. We have $\textit{asSet\/}(\Delta) = \{ (\hat{c}_f, v, w) \mid ((f, v), w) \in \Delta \}$ using an arbitrary (fixed) bijection $\hat{\cdot}: \{ c_f \mid f \in \Upsilon_{\text{dyn}} \} \rightarrow \{ 0, 1, \dots, | \Upsilon_{\text{dyn}} | - 1 \}$. For $b = \{ a \}$ we have $\textit{TheUnique\/}(b) = a$, otherwise it is undefined. Furthermore, we have $\textit{Pair\/}(a,b) = \{ a, b \}$. As shown in \cite[Lemma~13]{blass:apal1999} the usual set operations (union, intersection, difference) can be expressed by this term language.

An input name $p$ is interpreted by a Boolean-valued function. If the arity is $n$ and $p(a_1, \dots, a_n)$ holds, then each $a_i$ must be an atom. Finally, a dynamic function symbol $f$ of arity $n$ is interpreted by a function $f_S: \textit{HF\/}(A)^n \rightarrow \textit{HF\/}(A)$ (or by $f_S: \textit{HF\/}(A)^n \rightarrow \{ 0, 1 \}$, if $f$ is relational)\footnote{Note that the constants $c_f$ are not used in this definition. They only come in use later, when we deal with the logic of non-deterministic ASMs \cite{ferrarotti:igpl2017,ferrarotti:amai2018}.}. The domain $\{ (b_1 , \dots, b_n) \mid f(b_1 ,\dots, b_n) \neq 0 \}$ is required to be finite. For the static function symbols we have the interpretation $(c_f)_S = c_f$. With such an interpretation we obtain the {\em set of states} over the signature $\Upsilon$ and the given background. 

An isomorphism $\sigma: A \rightarrow A^\prime$ extends in a natural way to structures and thus to states. That is, an {\em isomorphism between structures} $S$ and $S^\prime$ is defined by a bijection $\sigma: A \rightarrow A^\prime$ such that $f_{S^\prime}(\hat{\sigma}(b_1), \dots, \hat{\sigma}(b_n)) = \hat{\sigma}(f_S(b_1, \dots, b_n))$. This is the same notion of isomorphism as defined by Gurevich as well as by B\"orger and St\"ark for Abstract State Machines (see \cite[p.66]{boerger:2003} or \cite[p.85]{gurevich:tocl2000}), also in accordance with Finite Model Theory \cite[p.16f.]{libkin:2004}. Note that any isomorphism between structures is uniquely defined by an isomorphism between sets of atoms as defined above. Therefore, throughout the article we will always refer to just one notion of isomorphism, and not make a pedantic distinction between isomorphisms between sets of atoms, base sets or states.

An {\em input structure} is a finite structure $I$ over the subsignature comprising only the input names.  Without loss of generality it can be assumed that only atoms appear in $I$\footnote{For this we refer to the remark in \cite[p.18]{blass:apal1999} on input structures. If the base set of a structure $I$ over the input signature contains sets, then there exists a structure $J$ isomorphic to $I$ with a base set consisting only of atoms.}. If the finite set of atoms in the input structure is $A$, then $|A|$ is referred to as the {\em size} of the input\footnote{Alternatively, we could define the size of the input structure by the total number of tuples in all input relations. However, as this is polynomial in $|A|$, it will not make a difference when dealing with polynomial time complexity.}. An {\em initial state} $S_0$ is a state over the base set $B = \textit{HF\/}(A) \cup K$ which extends $I$ such that the domain of each dynamic function is empty. We call $S_0 = State(I)$ the {\em initial state generated by $I$}. To emphasise the dependence on $I$, we also write $\textit{HF\/}(I)$ instead of $B$.

\subsection{Terms and Rules}

{\em Terms} and {\em Boolean terms} are defined in the usual way assuming a given set of variables $V$:

\begin{itemize}

\item Each variable $v \in V$ is a term.

\item If $f$ is a function name of arity $n$ (in the signature $\Upsilon$ or the background) and $t_1, \dots, t_n$ are terms, then $f(t_1, \dots, t_n)$ is a term. If $f$ is declared to be relational, the term is Boolean.

\item If $v$ is a variable, $t(v)$ is a term, $s$ is a term without free occurrence of $v$, and $g(v)$ is a Boolean term, then $\{ t(v) \mid v \in s \wedge g(v) \}$ is a term.

\end{itemize}

The set $\textit{fr\/}(t)$ of {\em free variables} in a term $t$ is defined as usual, in particular $\textit{fr\/}(\{ t(v) \mid v \in s \wedge g(v) \}) = (\textit{fr\/}(t(v)) \cup \textit{fr\/}(s) \cup \textit{fr\/}(g(v))) - \{ v \}$. Also the interpretation of terms in a state $S$ is standard.

\begin{definition}\label{def-rule}\rm

{\em ASM rules} as we use them are defined as follows:

\begin{description}

\item[skip] is a rule.

\item[assignment.] If $f$ is a dynamic function symbol in $\Upsilon$ of arity $n$ and $t_0, \dots, t_n$ are terms, then $f(t_1, \dots, t_n) := t_0$ is a rule.

\item[branching.] If $\varphi$ is a Boolean term and $r_1$, $r_2$ are rules, then also \textbf{if} $\varphi$ \textbf{then} $r_1$ \textbf{else} $r_2$ \textbf{endif} is a rule. 

We also use the shortcut \textbf{if} $\varphi$ \textbf{then} $r_1$ \textbf{endif} for \textbf{if} $\varphi$ \textbf{then} $r_1$ \textbf{else} \textbf{skip} \textbf{endif}.

\item[parallelism.] If $v$ is a variable, $t$ is a term with $v \notin \textit{fr\/}(t)$, and $r(v)$ is a rule, then also \textbf{forall} $v \in t$ \textbf{do} $r(v)$ \textbf{enddo} is a rule.

We use the shortcut \textbf{par} $r_1 \dots r_k$ \textbf{endpar} for \textbf{forall} $i \in \{ 1,\dots,k \}$ \textbf{do} \textbf{if} $i=1$ \textbf{then} $r_1$ \textbf{else} \textbf{if} $i=2$ \textbf{then} $r_2$ \textbf{else} \dots \textbf{if} $i=k$ \textbf{then} $r_k$ \textbf{endif} \dots \textbf{endif} \textbf{enddo}.

\item[choice.] If $v$ is a variable, $t$ is a term with $v \notin \textit{fr\/}(t)$, and $r(v)$ is a rule, then also \textbf{choose} $v \in \{ x \mid x \in \textit{Atoms\/} \wedge x \in t \}$ \textbf{do} $r(v)$ \textbf{enddo} is a rule.

\item[call.] If $t_0, t_1, \dots, t_n$ are terms and $N$ is an ASM with a rule that does not use choice nor call, then also $t_0 \leftarrow N(t_1, \dots, t_n)$ is a rule.

\end{description}

\end{definition}

Note that choice rules only permit to choose among atoms, which is a severe syntactic restriction compared to the common ASM choice rules \cite{boerger:2003}. Call rules as used in \cite{boerger:fi2020} allow us to treat some subcomputations by called ASMs as a single step of the calling ASM. The terms $t_1, \dots, t_n$ define the input for the called machine, which after termination returns a value that is then assigned to a location defined by the term $t_0$. We further assume that if a function symbol $f$ is shared by a called ASM $N$ and the calling ASM $M$, then locations with this function symbol are only read by $N$, i.e. $f$ never appears in an update set defined by $N$.

In the sequel we further use the shortcut \textbf{let} $x = t$ \textbf{in} $r(x)$ for \textbf{choose} $x \in \textit{Pair\/}(t,t)$ \textbf{do} $r(x)$ \textbf{enddo}. We also use \textbf{fail} as a shortcut for \textbf{choose} $x \in \emptyset$ \textbf{do skip enddo}.

The rule associated with an ASM must be closed, i.e. contain no free variables. The semantics of ASM rules is defined via update sets that are built for the states of the machine. Applying an update set to a state defines a successor state. 

If $f$ is dynamic function symbol in $\Upsilon$ of arity $n$, and $a_1, \dots, a_n$ are objects of the base set $B$ of a state $S$, then the pair $(f,(a_1, \dots, a_n))$ is a {\em location} of the state $S$. We use the abbreviation $\bar{a}$ for tuples $(a_1, \dots, a_n)$, whenever the arity is known from the context. For a location $\ell = (f,\bar{a})$ we write $val_S(\ell) = b$ iff $f_S(a_1, \dots, a_n) = b$; we call $b$ the {\em value of the location} $\ell$ in the state $S$.

\begin{definition}\label{def-update}\rm

An {\em update} is a pair $(\ell, a)$ consisting of a location $\ell$ and an object $a \in B$, and an {\em update set} (for a state $S$) is a set of updates with locations of $S$ and objects $a$ in the base set of $S$. 

\end{definition}

Isomorphisms $\sigma: A \rightarrow A^\prime$ exend naturally to locations, updates and update sets\footnote{Again, we can define an {\em isomorphism between update sets} by a bijection between sets of atoms such that the canonical extension to locations, updates and update sets satisfies these characteristic equations. We prefer to use just a single notion of isomorphism as defined earlier.}. We have $\sigma(f, \bar{a}) = (f, \sigma(\bar{a}))$ and $\sigma(\ell, v) = (\sigma(\ell), \sigma(v))$. Note that we can express a {\em sequence operation} $\stackrel{+}{\rightarrow}$ on update sets by 
\[ \Delta_1 \stackrel{+}{\rightarrow} \Delta_2 \; = \; \{ (\ell,v) \in \Delta_1 \mid \neg \exists w . (\ell,w) \in \Delta_2 \} \cup \Delta_2 \; . \]

Now let $S$ be a state with base set $B = \textit{HF\/}(A) \cup K$, and let $\zeta : V \rightarrow \textit{HF\/}(A)$ be a variable assignment. We use the notation $\zeta(v \mapsto a)$ for the modified variable assigment $\zeta^\prime$ with $\zeta^\prime(v) = a$ and $\zeta^\prime(v^\prime) = \zeta(v^\prime)$ for all variables $v^\prime \neq v$. Let $r$ be an ASM rule. 

\begin{definition}\label{def-usets}\rm

The {\em set of update sets $\boldsymbol{\Delta}_{r,\zeta}(S)$ on state $S$ for the rule $r$} depending on $\zeta$ is defined as follows:

\begin{itemize}

\item $\boldsymbol{\Delta}_{\textbf{skip},\zeta}(S) = \{ \emptyset \}$.

\item For an assignment rule $r$ of the form $f(t_1 ,\dots, t_n) := t_0$ we have $\boldsymbol{\Delta}_{r,\zeta}(S) = \{ \{ (\ell,a) \} \}$ taking the location $\ell = (f, (\text{val}_{S,\zeta}(t_1) ,\dots, \text{val}_{S,\zeta}(t_n)))$ and the object $a = \text{val}_{S,\zeta}(t_0)$.

\item For a branching rule $r$ of the form \textbf{if} $\varphi$ \textbf{then} $r_1$ \textbf{else} $r_2$ \textbf{endif} we have
\[ \boldsymbol{\Delta}_{r,\zeta}(S) = \begin{cases}
\boldsymbol{\Delta}_{r_1,\zeta}(S) &\text{if}\; \text{val}_{S,\zeta}(\varphi) = 1 \\
\boldsymbol{\Delta}_{r_2,\zeta}(S) &\text{if}\; \text{val}_{S,\zeta}(\varphi) = 0
\end{cases} . \]

\item For a parallel rule $r$ of the form \textbf{forall} $v \in t$ \textbf{do} $r(v)$ \textbf{enddo} we have
\[ \boldsymbol{\Delta}_{r,\zeta}(S) = \Big \{ \bigcup_{a \in \text{val}_{S,\zeta}(t)} \Delta_a \;\mid\; \Delta_a \in \boldsymbol{\Delta}_{r(v),\zeta(v \mapsto a)}(S) \;\text{for all}\;  a \in \text{val}_{S,\zeta}(t)\Big \} . \]

\item For a choice rule $r$ of the form \textbf{choose} $v \in \{ x \mid x \in \textit{Atoms\/} \wedge x \in t \}$ \textbf{do} $r(v)$ \textbf{enddo} we have
\[ \boldsymbol{\Delta}_{r,\zeta}(S) = \bigcup_{a \in \textit{Atoms\/} \atop a \in \text{val}_{S,\zeta}(t)} \boldsymbol{\Delta}_{r(a),\zeta(v \mapsto a)}(S) . \]

\item For a call rule $r$ of the form $t_0 \leftarrow N(t_1, \dots, t_n)$ with $t_0 = f(t_0^1, \dots, t_0^k)$ we have $\boldsymbol{\Delta}_{r,\zeta}(S) = \{ \{ (\ell,v) \} \}$ with the location $\ell = (f, (\text{val}_{S,\zeta}(t_0^1) ,\dots, \text{val}_{S,\zeta}(t_0^k)))$ and the object $v$, which is the output of the ASM $N(v_1, \dots, v_n)$ with input $v_i = \text{val}_{S,\zeta}(t_i)$ for $1 \le i \le n$.\footnote{Note that we adopt the semantics of call rules as used in the recursive ASM thesis \cite{boerger:fi2020}, which differs from the semantics in \cite{boerger:2003}, where calls are used merely as shortcuts.}

\end{itemize}

\end{definition}

Naturally, if $r$ is a closed rule, then $\boldsymbol{\Delta}_{r,\zeta}(S)$ does not depend on $\zeta$, so we can simply write $\boldsymbol{\Delta}_r(S)$. Isomorphisms also extend to sets of update sets in the common way with $\sigma(\{ \Delta_i \mid i \in I \}) = \{ \sigma(\Delta_i) \mid i \in I \}$. Note that for $r = \textbf{fail}$ we get $\boldsymbol{\Delta}_{r,\zeta}(S) = \emptyset$.

\subsection{Runs of ASMs}

\begin{definition}\label{def-conuset}\rm

An update set $\Delta$ is {\em consistent} iff for any two updates $(\ell,a_1), (\ell,a_2) \in \Delta$ with the same location we have $a_1 = a_2$. 

Then the {\em successor state} $S^\prime = S + \Delta$ of $S$ with respect to a consistent update set $\Delta \in \boldsymbol{\Delta}_{r,\zeta}(S)$ is defined as follows: 
\[ \text{val}_{S^\prime}(\ell) = \begin{cases} 
a &\text{for}\; (\ell,a) \in \Delta \\ \text{val}_S(\ell) &\text{else} \end{cases} \]
for any location $\ell$ of the state $S$. In addition, let $S + \Delta = S$ for inconsistent update sets $\Delta$.

\end{definition}

Then the (closed) rule $r$ of an ASM defines a set of successor states for each state $S$. We simply write $\boldsymbol{\Delta}(S)$ instead of $\boldsymbol{\Delta}_r(S)$, if the rule $r$ of the ASM is known from the context. We also write $\Delta_r(S,S^\prime)$ for an update set in $\boldsymbol{\Delta}_r(S)$ with $S^\prime = S + \Delta_r(S,S^\prime)$. This allows us to define the notion of run of an ASM.\footnote{Note a little subtlety here. In behavioural theories \cite{gurevich:tocl2000,ferrarotti:tcs2016,boerger:ai2016,schewe:scp2022} we usually exploit that given states $S, S^\prime$ there exists a unique, minimal, consistent update set $\Delta$ with $S + \Delta = S^\prime$. We often refer to this update set as the {\em difference} of the states $S$ and $S^\prime$ and write $S^\prime - S$ for it. However, in general there are many other consistent update sets $\Delta^\prime$ with $S + \Delta^\prime = S^\prime$, in particular, the update sets above defined by a rule $r$ may not be minimal. Therefore, when we refer to an update set $\Delta_r(S,S^\prime)$ we could also choose any representative of the class of update sets $\Delta$ with $S + \Delta = S^\prime$. Note that we can always modify the rule $r$ such that a particular update set $\Delta_r(S,S^\prime)$ results. This subtlety will be used later.}

\begin{definition}\label{def-run}\rm

A {\em run} of an ASM $M$ with rule $r$ is a finite or infinite sequence of states $S_0, S_1, \dots$ such that $S_0$ is an initial state and $S_{i+1} = S_i + \Delta$ holds for some update set $\Delta \in \boldsymbol{\Delta}_r(S_i)$. Furthermore, if $k$ is the length of a run ($k = \omega$ for an infinite run), then \textit{Halt\/} must fail on all states $S_i$ with $i < k$.

\end{definition}

Note that in a run all states have the same base set\footnote{See the discussion in \cite[Sect.4.5]{gurevich:tocl2000}.}, which is in accordance with requirements from the behavioural theories of sequential and parallel algorithms (see \cite{gurevich:tocl2000} and \cite{ferrarotti:tcs2016}, respectively, and \cite[Sect.2.4.4]{boerger:2003}). However, not all atoms and sets are active in the sense that they appear as value or argument of a location with defined value. We therefore define active objects\footnote{We adopt here the notion of critical and active object as defined in the work on CPT \cite{blass:apal1999}. It should be noted that there is a close relationship to critical elements in a state defined via a minimal bounded exploration witness in the sequential, recursive and parallel ASM theses (see \cite{gurevich:tocl2000,boerger:fi2020} and \cite{ferrarotti:tcs2016}, respectively), and the active objects then correspond to the closure of a bounded exploration witness under subterms.} as follows.

\begin{definition}\label{def-active}\rm

Let $S$ be a state with base set $B$. An object $a \in B$ is called {\em critical} iff $a$ is an atom or $a \in \{ 0, 1 \}$ or $a$ is the value of a location $\ell$ of $S$ or there is a location $\ell = (f, \bar{a})$ with  $\text{val}_S(\ell) \neq \emptyset$ and $a$ appears in $\bar{a}$. An object $a \in B$ is called {\em active} in $S$ iff there exists a critical object $a^\prime$ with $a \in \textit{TC\/}(a^\prime)$.

\end{definition}

In addition, if $R = S_0, S_1, \dots$ is a run of an ASM, then we call an object $a \in B$ {\em active} in $R$ iff $a$ is active in at least one state $S_i$ of $R$.

\subsection{Polynomial-Time-Bounded ASMs}

In order to define a polynomial time bound on an ASM we have to count steps of a run. If we only take the length of a run, each step would be a macrostep that involves many elementary updates, e.g. the use of unbounded parallelism does not impose any restriction on the number of updates in an update set employed in a transition from one state to a successor state, nor does the size of the critical objects. So we have to take the size of update sets and the size of critical objects into account as well. As critical objects are hereditarily finite sets, their sizes can be estimated by the cardinality of their transitive closures. We therefore adopt the notion of PTIME bound from CPT \cite{blass:apal1999}.

\begin{definition}\label{def-pbound}\rm

A {\em PTIME (bounded) ASM} is a triple $\tilde{M} = (M, p(n), q(n))$ comprising an ASM $M$ and two integer polynomials $p(n)$ and $q(n)$, in which every called ASM is also PTIME bounded\footnote{Alternatively, we could define a PTIME ASM by counting the number of active objects in each (minimal) update set of a run and require that the sum of these numbers is polynomially bounded. Then a single integer polynomial would suffice.}. A {\em run} of $\tilde{M}$ is an initial segment of a run of $M$ of length at most $p(n)$ and a total number of at most $q(n)$ active objects, where $n$ is the size of the input in the initial state of the run. 

\end{definition}

We say that a PTIME ASM $\tilde{M}$ {\em accepts} the input structure $I$ iff there is a run of $\tilde{M}$ with initial state generated by $I$ and ending in a state, in which \textit{Halt\/} holds and the value of \textit{Output\/} is $1$. Analogously, a PTIME ASM $\tilde{M}$ {\em rejects} the input structure $I$ iff there is a run of $\tilde{M}$ with initial state generated by $I$ and ending in a state, in which \textit{Halt\/} holds and the value of \textit{Output\/} is $0$.

\section{Insignificant Choice}\label{sec:ic}

PTIME bounded ASMs as defined in the previous section are the non-deterministic analog of the PTIME bounded machines used to define CPT \cite{blass:apal1999}. As such they also allow the usual set-theoretic expressions to be used freely (see \cite[Sect.6.1]{blass:apal1999}). In this section we will motivate and then define a restriction of this non-deterministic computation model.

\subsection{Examples}

We look at two rather simple problems in PTIME and their solution using PTIME ASMs, the Parity problem and the bipartite matching problem. For both problems Blass, Gurevich and Shelah showed that they are not in CPT \cite{blass:apal1999}, thus adding choice to the computation model adds strength.

\begin{example}\label{bsp-parity}

Let us consider ASMs without input names. So an input structure is just a naked set of atoms. In addition to \textit{Output\/} and \textit{Halt\/} we use nullary function symbols \textit{mode\/}, \textit{set\/} and \textit{parity\/}. For the values assigned to \textit{mode\/} we use `init' and `progress' to make the rule easier to read. Instead we could have used $\emptyset$ and $\{ \emptyset \}$ to strictly satisfy the requirement that all locations in an initial state with a dynamic function symbol have a value $\emptyset$. Thus, without loss of generality we can assume that in an initial state \textit{mode\/} = init holds. 

Consider the following ASM rule:

\begin{tabbing}
xxx\=xxxxxx\=xxxxxx\=xxxxxx\=xxxxxx\=xxxxxx\= \kill
\> \textbf{par} \> \textbf{if} \> \textit{mode\/} = init \\
\>\> \textbf{then} \> \textbf{par} \> \textit{mode\/} := progress \\
\>\>\>\> \textit{set\/} := \textit{Atoms\/} \\
\>\>\>\> \textit{parity\/} := \textbf{false} \\
\>\>\> \textbf{endpar}\\
\>\> \textbf{endif} \\
\>\> \textbf{if} \> \textit{mode\/} = progress \\
\>\> \textbf{then} \> \textbf{if} \textit{set\/} $\neq \emptyset$ \\
\>\>\> \textbf{then} \> \textbf{choose} $x \in$ \textit{set\/} \textbf{do} \\
\>\>\>\> \textbf{par} \> \textit{set\/} := \textit{set\/} $-$ \textit{Pair\/}$(x,x)$ \\
\>\>\>\>\> \textit{parity\/} := $\neg$ \textit{parity\/} \\
\>\>\>\> \textbf{endpar} \\
\>\>\>\> \textbf{enddo} \\
\>\>\> \textbf{else} \> \textbf{par} \> \textit{Output\/} := \textit{parity\/} \\
\>\>\>\>\> \textit{Halt\/} := \textbf{true} \\
\>\>\>\> \textbf{endpar}\\
\>\>\> \textbf{endif} \\
\>\> \textbf{endif} \\
\> \textbf{endpar}
\end{tabbing}

Clearly, we obtain a PTIME bounded ASM, and \textit{Output\/} will become true iff the size of the input structure is odd. Note that the choices $x \in \textit{set\/}$ are choices among atoms, as we always have $\textit{set\/} \subseteq \textit{Atoms\/}$.

\end{example}

Without choice the solution to Parity in Example \ref{bsp-parity} would not be possible within a polynomial time bound. We could replace the choice by unbounded parallelism, but then the computation would de facto explore all possible orderings of the set of atoms, whereas with the choice only a single ordering is considered. Note that this is sufficient for the Parity problem. Further note that the Subset Parity problem could be handled in an analogous way.

\begin{example}\label{bsp-matching}\rm

For the bipartite matching problem we are given a finite bipartite graph $(V,E)$, where the set $V$ of vertices is partitioned into two sets \textit{Boys\/} and \textit{Girls\/} of equal size. Thus, the set $E$ of edges contains sets $\{ x, y \}$ with $x \in \textit{Boys\/}$ and $y \in \textit{Girls\/}$. A {\em perfect matching} is a subset $F \subseteq E$ such that every vertex is incident to exactly one edge in $F$. A {\em partial matching} is a subset $F \subseteq E$ such that every vertex is incident to at most one edge in $F$. The algorithm will create larger and larger partial matchings until no more unmatched boys and girls are left, otherwise no perfect matching exists.

We use functions `girls\_to\_boys' and `boys\_to\_girls' turning sets of unordered edges into sets of ordered pairs:
\begin{align*}
\text{girls\_to\_boys}(X) &= \{ (g,b) \mid b \in \textit{Boys\/} \wedge g \in \textit{Girls\/} \wedge \{ b,g \} \in X \} \\
\text{boys\_to\_girls}(X) &= \{ (b,g) \mid b \in \textit{Boys\/} \wedge g \in \textit{Girls\/} \wedge \{ b,g \} \in X \}
\end{align*}
Conversely, the function `unordered' turns a set of ordered pairs $(b,g)$ or $(g,b)$ into a set of two-element sets:
\[ \text{unordered}(X) = \{ \{ x, y \} \mid (x,y) \in X \} \]

We further use a predicate `reachable' and a function `path'. For the former one we have reachable$(b,X,g)$ iff there is a path from $b$ to $g$ using the directed edges in $X$. For the latter one path$(b,X,g)$ is a set of ordered pairs representing a path from $b$ to $g$ using the directed edges in $X$. For now we dispense with details of these functions; let us assume that both functions are defined elsewhere.

As in the previous Example \ref{bsp-parity} we can assume without loss of generality that in an initial state \textit{mode\/} = init holds. Then an algorithm for bipartite matching is realised by an ASM with the following rule:

\begin{tabbing}
x\=xxxx\=xxxxxx\=xxxxxx\=xxxxxx\=xxxx\=xxxx\= \kill
\> \textbf{par} \> \textbf{if} \> \textit{mode\/} = init \\
\>\> \textbf{then} \> \textbf{par} \> \textit{mode\/} := examine \\
\>\>\>\> \textit{partial\_match\/} := $\emptyset$ \\
\>\>\>\> \textit{match\_count\/} := 0 \\
\>\>\> \textbf{endpar} \\
\>\> \textbf{endif} \\
\>\> \textbf{if} \> \textit{mode\/} = examine \\
\>\> \textbf{then} \> \textbf{if} \> $\exists b \in \textit{Boys\/} . \forall g \in \textit{Girls\/} . \{ b,g \} \notin \textit{partial\_match\/}$ \\
\>\>\> \textbf{then} \> \textit{mode\/} := build-digraph \\
\>\>\> \textbf{else} \> \textbf{par} \> \textit{Output\/} := \textbf{true} \\
\>\>\>\>\> \textit{Halt\/} := \textbf{true} \\
\>\>\>\>\> \textit{mode\/} := final \\
\>\>\>\> \textbf{endpar} \\
\>\>\> \textbf{endif} \\
\>\> \textbf{endif} \\
\>\> \textbf{if} \> \textit{mode\/} = build-digraph \\
\>\> \textbf{then} \> \textbf{par} \> \textit{di\_graph\/} := girls\_to\_boys(\textit{partial\_match\/}) \\
\>\>\>\>\>\> $\cup$ boys\_to\_girls$(E - \textit{partial\_match\/})$ \\
\>\>\>\> \textit{mode\/} := modify \\
\>\>\> \textbf{endpar} \\
\>\> \textbf{endif} \\
\>\> \textbf{if} \> \textit{mode\/} = modify \\
\>\> \textbf{then} \> \textbf{if} \> $\exists b . b \in \{ x \mid x \in \textit{Boys\/} \wedge \forall g \in \textit{Girls\/} . \{ b,g \} \notin \textit{partial\_match\/} $ \\
\>\>\>\>\> $\wedge \exists g^\prime . g^\prime \in \textit{Girls\/} . \forall b^\prime \in \textit{Boys\/} . \{ b^\prime , g^\prime \} \notin \textit{partial\_match\/}$ \\
\>\>\>\>\>\> $\wedge\; \text{reachable}(b, \textit{di\_graph}, g^\prime) \}$ \\
\>\>\> \textbf{then} \> \textbf{choose} $b,g$ \textbf{with} $b \in \textit{Boys\/} \wedge g \in \textit{Girls\/}$\\
\>\>\>\>\>\> $\wedge \{ b,g \} \notin \textit{partial\_match\/} \wedge \text{reachable}(b, \textit{di\_graph}, g)$ \\
\>\>\>\> \textbf{do} \> \textbf{par} \> \textbf{choose} $\ell \in \{ 1, \dots, 2 \cdot \textit{match\_count\/} + 1 \}$,  \\
\>\>\>\>\>\>\textbf{choose} $x_0, \dots, x_\ell$ \textbf{with} \\
\>\>\>\>\>\>\> $x_0 = b \wedge x_\ell = g \wedge ( x_i \in  \textit{Boys\/} \leftrightarrow i \;\text{is even} )$\\
\>\>\>\>\>\>\> $\wedge \text{path}(b, \textit{di\_graph\/}, g) = \{ (x_0, x_1), \dots, (x_{\ell-1, \ell}) \}$ \\
\>\>\>\>\>\>\textbf{do} \> $\textit{partial\_match\/} := $ \\
\>\>\>\>\>\>\> $( \textit{partial\_match\/} - \text{unordered}(\text{path}(b, \textit{di\_graph\/}, g)) ) \cup$ \\
\>\>\>\>\>\>\> $( \text{unordered}(\text{path}(b, \textit{di\_graph\/}, g)) - \textit{partial\_match\/} )$ \\
\>\>\>\>\>\>\textbf{enddo} \\
\>\>\>\>\> \textit{mode\/} := examine \\
\>\>\>\>\> \textit{match\_count\/} :=  $\textit{match\_count\/} + 1$ \\
\>\>\>\>\> \textbf{endpar} \\
\>\>\>\> \textbf{enddo} \\
\>\>\> \textbf{else} \> \textbf{par} \> \textit{Output\/} := \textbf{false} \\
\>\>\>\>\> \textit{Halt\/} := \textbf{true} \\
\>\>\>\>\> \textit{mode\/} := final \\
\>\>\>\> \textbf{endpar} \\
\>\>\> \textbf{endif} \\
\>\> \textbf{endif} \\
\> \textbf{endpar}
\end{tabbing}

In a nutshell, we choose pairs of unmatched boys and girls and then update the given partial match. Clearly, we obtain a PTIME bounded ASM, and \textit{Output\/} will become true iff there exists a perfect matching. As in the previous example the non-deterministic choice of boys and girls, respectively, are choices among atoms.

Note that the ASM rule here deviates from the one sketched in \cite[Table~1]{blass:apal1999}, as we merged steps for finding a path in \textit{di\_graph\/} between an unmatched boy $b$ and an unmatched girl $g$ into a single step. We also use a counter for the edges in the partial matching. These differences are irrelevant for the correctness of the ASMs, but they impact on the update sets. This will turn out to be important in Example \ref{bsp-icasm3} in the next subsection. In fact, our modifications create an ASM satisfying the local insignificance conditions.

\end{example}

Again the use of choice-rules in Example \ref{bsp-matching} cannot be dispensed with. However, we observe that in both cases, i.e. for the Parity problem and the bipartite matching problem, that the choices used are insignificant in the sense that if the final output is true for one choice made, then it is also true for any other possible choice. For the case of Parity this corresponds to the implicit creation of different orderings, while for bipartite matching different perfect matchings are constructed. We will formalise this observation in the sequel.

\subsection{Insignificant Choice Abstract State Machines}

We now formalise the observation above concerning insignificant choice. We first define {\em globally insignificant} ASMs, for which the final outcome does not depend on the choices. Global insignificance is a property, which in essence has already been investigated in \cite{arvind:stacs1987,gire:infcomp1998}, but in general is undecidable. Therefore, we also introduce {\em locally insignificant} ASMs (icASMs) that are characterised by two restrictions to ASMs\footnote{In the introduction we mentioned three restricting conditions and not only two. However, the restriction that a choice is always a choice only among atoms has already been built into the syntax of choice-rules as defined in Section \ref{sec:asm}.}. We will show that icASMs are globally insignificant, and they can be used to define the logic of insignificant choice polynomial time (ICPT). In Section \ref{sec:icpt} the defining conditions will be linked to the logic of non-deterministic ASMs \cite{ferrarotti:igpl2017,ferrarotti:amai2018}.

\begin{definition}\label{def-global-insignificance}\rm

An ASM $M$ is {\em globally insignificant} iff for every run $S_0 ,\dots, S_k$ of length $k$ such that \textit{Halt\/} holds in $S_k$, every $i \in \{ 0,\dots,k-1 \}$ and every update set $\Delta \in \boldsymbol{\Delta}(S_i)$ there exists a run $S_0 ,\dots, S_i, S_{i+1}^\prime ,\dots, S_m^\prime$ such that $S_{i+1}^\prime = S_i + \Delta$, \textit{Halt\/} holds in $S_m^\prime$, and \textit{Output\/} = \textbf{true} (or \textbf{false}, respectively) holds in $S_k$ iff \textit{Output\/} = \textbf{true} (or \textbf{false}, respectively) holds in $S_m^\prime$.

A {\em globally insignificant PTIME ASM} is a PTIME ASM $\tilde{M} = (M, p(n), q(n))$ with a globally insignificant ASM $M$.

\end{definition}

Note that for a globally insignificant PTIME ASM $\tilde{M}$ whenever an input structure $I$ is accepted by $\tilde{M}$ (or rejected, respectively) then every run on input structure $I$ is accepting (or rejecting, respectively). Further note that the global insignificance restriction is a semantic one expressed by means of runs. 

\begin{definition}\label{def-icasm}\rm

An ASM $M$ is {\em locally insignificant} (for short: $M$ is an icASM) iff the following two conditions are satisfied:

\begin{description}

\item[local insignificance condition.] For every state $S$ any two update sets $\Delta, \Delta^\prime \in \boldsymbol{\Delta}(S)$ are isomorphic\footnote{Recall that it is possible to have update sets $\Delta$, $\Delta^\prime$ with $\Delta \in \boldsymbol{\Delta}(S)$, $\Delta^\prime \notin \boldsymbol{\Delta}(S)$ with $S + \Delta = S + \Delta^\prime$. In this case we call $\Delta, \Delta^\prime$ {\em equivalent}. Then we can read this condition in such a way there there is an update set $\Delta^{\prime\prime}$ that is equivalent to $\Delta^\prime$ and $\sigma(\Delta) = \Delta^{\prime\prime}$ holds for some isomorphism $\sigma$. As remarked above we can always modify the rule of the ASM in such a way that the update sets $\Delta^{\prime\prime}$ are those defined exactly by the rule.}, and we can write $\boldsymbol{\Delta}(S) = \{ \sigma \Delta \mid \sigma \in G \}$ with a set $G$ of isomorphisms fixing the base set of $S$ and $\Delta \in \boldsymbol{\Delta}(S)$. The isomorphisms in $G$ are defined as products of transpositions given by the choice subrules of the rule $r$ of $M$.

\item[branching condition.] For any states $S, S^\prime$ and any two isomorphic update sets $\Delta \in \boldsymbol{\Delta}(S)$ and $\Delta^\prime = \sigma(\Delta) \in \boldsymbol{\Delta}(S^\prime)$ we have $\sigma(\boldsymbol{\Delta}(S + \Delta)) = \boldsymbol{\Delta}(S^\prime + \Delta^\prime)$, i.e. the isomorphism $\sigma$ defines an isomorphism between the sets of update sets on the corresponding successor states of $S$ and $S^\prime$, respectively.

\end{description}

\noindent
A {\em PTIME (bounded) icASM} is a PTIME ASM $\tilde{M} = (M, p(n), q(n))$ with an icASM $M$.

\end{definition}

The name ``branching condition'' is due to the fact that this condition mainly depends on branching rules. The name ``local insignificance condition'' refers to the fact that it restricts update sets in a state rather than in a run.

Note that called submachines are considered as single steps, so subcomputations can be hidden by such calls, but the contribution to complexity is taken into account. The conditions of Definition \ref{def-icasm} must hold for the calling machine, but of course any object returned by a called ASM will somehow enter the update sets.

\paragraph{\textbf{Remark.}}

Definition \ref{def-icasm} of locally insignificant ASMs is motivated by the first part of the proof of Theorem \ref{thm-capture} showing that ICPT subsumes PTIME. The defining conditions hold for any ASM that extends an input structure for a PTIME problem by a generated order and then simulates a PTIME Turing machine for that problem on the standard encoding of this order-enriched structure. The main difficulty, however, is to show that the conditions also suffice to show that PTIME subsumes ICPT.

\begin{proposition}\label{lem-icasm}

Every icASM is globally insignificant.

\end{proposition}

\begin{proof}

Let $S_0, S_1, S_2, \dots$ and $S_0^\prime, S_1^\prime, S_2^\prime, \dots$ be runs of $M$ with $S_0 = S_0^\prime$. We show for every $i \ge 0$ that $\boldsymbol{\Delta}(S_i)$ and $\boldsymbol{\Delta}(S_i^\prime)$ are isomorphic. As all update sets in $\boldsymbol{\Delta}(S_i)$ are pairwise isomorphic by the local insignificance condition, and likewise all update sets in $\boldsymbol{\Delta}(S_i^\prime)$ are pairwise isomorphic, then all update sets in $\Delta_i \in \boldsymbol{\Delta}(S_i)$ are isomorphic to all $\Delta_i^\prime \in \boldsymbol{\Delta}(S_i^\prime)$.

Hence, for $((\textit{Halt\/},()),b) \in \Delta_i$ with a truth value $b$ we also get $((\textit{Halt\/},()),b) \in \Delta_i^\prime$, i.e. terminating runs have the same length. Analogously, for $((\textit{Output\/},()),b) \in \Delta_i$ with a truth value $b$ we also get $((\textit{Output\/},()),b) \in \Delta_i^\prime$, i.e. the last update to \textit{Output\/} is the same in both runs, which implies the claimed global insignificance.

As for the claimed condition itself we use induction over $i$. For $i=0$ there is nothing to show. Let $S_{i+1} = S_i + \Delta$. If we assume $\sigma(\boldsymbol{\Delta}(S_i)) = \boldsymbol{\Delta}(S_i^\prime)$, we obtain an isomorphism $\sigma^\prime$ with $\sigma^\prime(\Delta) = \Delta^\prime \in \boldsymbol{\Delta}(S_i^\prime)$ for $S_{i+1}^\prime = S_i^\prime + \Delta^\prime$. By the branching condition we obtain $\sigma^\prime(\boldsymbol{\Delta}(S_{i+1})) = \boldsymbol{\Delta}(S_{i+1}^\prime)$ as claimed.\qed

\end{proof}

We will see in the next subsection that defining a PTIME icASM to solve a particular decision problem can be rather tricky, while it is easier to obtain a globally insignificant PTIME ASM for the same problem. Nonetheless, we will later see that PTIME icASMs arise quite naturally for PTIME decision problems, and most importantly they permit a straightforward simulation by deterministic PTIME Turing machines.

Before proceeding with an analysis of the defining conditions of PTIME icASMs let us first explore some examples.

\begin{example}\label{bsp-icasm1}

Let us  continue Example \ref{bsp-parity}. For a state $S$ we can have $| \boldsymbol{\Delta}(S) | > 1$ only, if $val_S(\textit{mode\/}) = \text{progress}$ and $val_S(\textit{set\/}) \neq \emptyset$ hold---only then we have a choice. In such a state we have
\[ \boldsymbol{\Delta}(S) = \{ \underbrace{\{ (( \textit{set\/}, ()), val_S(\textit{set\/}) - \{ a \}), (( \textit{parity\/}, ()), val_S(\neg \textit{parity\/} )) \} }_{\Delta_a} \mid a \in val_S(\textit{set\/}) \} \; .\]

Then for the transposition $\sigma = (a,b)$ with $a,b \in val_S(\textit{set\/})$ we get $\sigma(\Delta_a) = \Delta_b$, which shows the local insignificance condition.

The set of update sets in an arbitrary state $S$ can be either $\boldsymbol{\Delta}(S) = \{ \Delta_a \mid a \in val_S(\textit{set\/}) \}$ as above or contain a single update set $\Delta$, which is either 
\[ \Delta = \{ ((\textit{mode\/}, ()), \text{progress}), ((\textit{set\/}, ()), A),((\textit{parity\/}, ()), 0) \} \]
with the fixed set $A$ of atoms, or $\Delta = \{ ((\textit{Output\/}, ()), val_S(\textit{parity\/})), ((\textit{Halt\/}, ()), 1) \}$. An isomorphism between such update sets on states $S_1$ and $S_2$, respectively, only exists for $\Delta_a \in \boldsymbol{\Delta}(S_1)$ and $\Delta_b \in \boldsymbol{\Delta}(S_2)$ with $((\textit{set\/}, ()), A_1) \in \Delta_a$, $((\textit{set\/}, ()), A_2) \in \Delta_b$ and $| A_1 | = | A_2 |$ or otherwise $S_1 = S_2$. In the former case there exists an isomorphism $\sigma$ with $\sigma(A_1) = A_2$, and we get 
\[ \boldsymbol{\Delta}(S_1 + \Delta_a) = \{ \{ (( \textit{set\/}, ()), A_1 - \{ x \}), (( \textit{parity\/}, ()), y ) \} \mid x \in A_1 \} \]
as well as
\[ \boldsymbol{\Delta}(S_2 + \Delta_b) = \{ \{ (( \textit{set\/}, ()), A_2 - \{ x \}), (( \textit{parity\/}, ()), y ) \} \mid x \in A_2 \}  = \sigma(\boldsymbol{\Delta}(S_1 + \Delta_a)) \; , \]
unless $A_1 = A_2 = \emptyset$, in which case we get
\[ \boldsymbol{\Delta}(S_1 + \Delta_a) = \{ \{ ((\textit{Output\/}, ()), x), ((\textit{Halt\/}, ()), 1) \} \} = \boldsymbol{\Delta}(S_2 + \Delta_b) \; . \]

In the other cases $\sigma$ is the identity and $\sigma(\boldsymbol{\Delta}(S_1 + \Delta_a)) = \boldsymbol{\Delta}(S_2 + \Delta_b)$ follows immediately. This proves that also the branching condition is satisfied.

\end{example}

Thus, the ASM in Examples \ref{bsp-parity} and \ref{bsp-icasm1} is indeed a PTIME icASM. To avoid any misunderstanding we emphasise again that this icASM does not generate an order. However, in this case the sequence of atoms chosen in a run naturally defines a total order on the set of atoms.

The next Example \ref{bsp-icasm2} was provided by Jan van den Bussche; it shows that the local insignificance condition alone is insufficient to guarantee global insignificance.

\begin{example}\label{bsp-icasm2}

Consider an ASM with a single unary input predicate $R$ and the following rule\footnote{We abbreviate the syntax slightly to ease the presentation.}:

\begin{tabbing}
xxx\=xxxxxx\=xxxxxx\=xxxxxx\=xxxxxx\=xxxxxx\= \kill
\> \textbf{par} \> \textbf{if} \> \textit{mode\/} = init \\
\>\> \textbf{then} \> \textit{choose} $x \in \textit{Atoms\/}$ \\
\>\>\>\> \textbf{do} $ p := x$ \textbf{enddo} \\
\>\>\>\> $\textit{mode\/} :=$ next \\
\>\> \textbf{if} \> \textit{mode\/} = next \\
\>\> \textbf{then} \> \textbf{if} \> $R(p)$ \\
\>\>\> \textbf{then} \> \textit{Output\/} $:=$ \textbf{true} \\
\>\>\>\> \textit{Halt\/} $:=$ \textbf{true} \\
\>\>\> \textbf{else} \> \textit{Output\/} $:=$ \textbf{false} \\
\>\>\>\> \textit{Halt\/} $:=$ \textbf{true} \\
\> \textbf{endpar}
\end{tabbing}

For the sets of update sets $\boldsymbol{\Delta}(S)$ in a state $S$ we have three possibilities:

\begin{enumerate}\renewcommand{\labelenumi}{(\arabic{enumi})}

\item $\boldsymbol{\Delta}(S) = \{ \underbrace{ \{ ((p,()),a), ((\textit{mode\/},()),\text{next}) \} }_{\Delta_a} \mid a \in \textit{Atoms\/} \}$ in case $val_S(\textit{mode\/}) = \text{init}$;

\item $\boldsymbol{\Delta}(S) = \{ \{ ((\textit{Output\/},()),1), ((\textit{Halt\/},()),1) \} \}$ in case $val_S(\textit{mode\/}) = \text{next}$ and\\ $val_S(R(p)) = 1$;

\item $\boldsymbol{\Delta}(S) = \{ \{ ((\textit{Output\/},()),0), ((\textit{Halt\/},()),1) \} \}$ in case $val_S(\textit{mode\/}) = \text{next}$ and\\ $val_S(R(p)) = 0$.

\end{enumerate}

In case (1) the transposition $(a,b)$ maps $\Delta_a$ to $\Delta_b$; in cases (2) and (3) we have $| \boldsymbol{\Delta}(S) | = 1$. Hence the local insignificance condition is satisfied.

The branching condition, however, is not satisfied. In the initial state $S_0$ we can have isomorphic update sets $\Delta_a$ and $\Delta_b$ with $val_{S_0}(R(a)) = 1$ and $val_{S_0}(R(b)) = 0$. Then the sets of update sets $\boldsymbol{\Delta}(S_0 + \Delta_a)$ and $\boldsymbol{\Delta}(S_0 + \Delta_b)$ are of the kind in (2) and (3), respectively, which are not isomorphic.

We also see that this ASM is not globally insignificant. In the initial state we can choose atoms $a$ or $b$ with $R(a)$ and $\neg R(b)$, respectively. Then in the next step the machine reaches a final state with output 1 in the first case and output 0 in the second case.

\end{example}

Example \ref{bsp-icasm2} shows that the local insignificance condition alone does not suffice to obtain global insignificance. In essence, whenever an atom is selected and stored somehow in a location $\ell$, it is always possible to access this location arbitrarily many steps later. If this access involves the evaluation of a location of the input structure (or derived from it), we may later obtain different outputs. The example shows only the very basic case, where the violation of global insignificance occurs just after one more step, but we can easily construct more complicated examples\footnote{The ASM in Example \ref{bsp-icasm2} still satisfies the added local insignificance condition in \cite{schewe:arxiv2021v7}, but this is still not sufficient.}. This motivates the additional branching condition, which in fact guarantees that such behaviour cannot occur, as we proved in Lemma \ref{lem-icasm}.

\begin{example}\label{bsp-icasm3}

Let us now take a look at the PTIME ASM in Example \ref{bsp-matching} specifying the PTIME algorithm for perfect matching of a bipartite graph.

For the local insignificance condition it suffices to consider states $S$, in which $\textit{mode\/} = \text{modify}$ holds; otherwise we have only a single update set in $\boldsymbol{\Delta}(S)$. As the given ASM still leaves some details rather abstract, let us assume that we use a binary predicate symbol $P$ in the signature to represent a partial match. Then, if $b \in \textit{Boys\/}$ and $g \in \textit{Girls\/}$ are the selected (unmatched) atoms and $\text{reachable}(b, \textit{di\_graph}, g)$ holds, we obtain an update set $\Delta_{b,g} \in \boldsymbol{\Delta}(S)$, hence either
\begin{align*}
\boldsymbol{\Delta}(S) &= \{ \Delta_{b,g} \mid b \in \textit{Boys\/} \wedge g \in \textit{Girls\/} \wedge \forall g^\prime . \neg P(b,g^\prime) \wedge \forall b^\prime . \neg P(b^\prime,g) \\
&\hspace*{7cm} \wedge \text{reachable}(b, \textit{di\_graph}, g) \} \\
\text{or}\; \boldsymbol{\Delta}(S) &= \{ \{ ((\textit{Output\/}, ()), 0), ((\textit{Halt\/}, ()), 1), ((\textit{mode\/}, ()), \text{final}) \} \} \; .
\end{align*}

Only the former case needs to be investigated. Then the updates in $\Delta_{b,g}$ take the form $(( \textit{mode\/}, ()), \text{examine})$, $(( \textit{match\_count\/}, ()), i)$ with some integer $i > 0$, or $((P, (b^\prime, g^\prime)), v)$ with $b^\prime \in \textit{Boys\/}$, $g^\prime \in \textit{Girls\/}$ and a truth value $v$. The pairs $(b^\prime, g^\prime)$ correspond to the path from $b$ to $g$ in \textit{di\_graph\/}, and we have $v=0$ for $(g^\prime, b^\prime) \in \text{path}(b, \textit{di\_graph}, g)$, in particular $P(b^\prime, g^\prime) = 1$, and $v=1$ for $(b^\prime, g^\prime) \in \text{path}(b, \textit{di\_graph}, g)$ and $P(b^\prime, g^\prime) = 0$. That is, the different update sets in $\boldsymbol{\Delta}(S)$ are determined by paths between unmatched boys and girls in \textit{di\_graph\/}.

In general, paths in \textit{di\_graph\/} from $b$ to $g$ and from $b^\prime$ to $g^\prime$ may have different lengths, so we modify the ASM rule slightly---in fact, we add redundant updates to the update sets. Assume we have paths $b = v_0, v_1, \dots, v_{2k+1} = g$ and $b^\prime = v_0^\prime, v_1,^\prime \dots, v_{2\ell+1}^\prime = g^\prime$. First consider the case $b=b^\prime$ and $\ell < k$. We can extend the shorter path to the sequence $v_0, \dots,v_{k-\ell}, v_{k-\ell-1}, \dots, v_1, v_0^\prime, v_1,^\prime \dots, v_{2\ell+1}^\prime$, which has the same length $2k+1$ as the longer path. Denote this sequence as $w_0, \dots, w_{2k+1}$. Then the isomorphism $\sigma = (v_0,w_0) \dots (v_{2k+1},w_{2k+1})$ defines an isomorphism between the two update sets, and this isomorphism is a product of transpositions defined by the choices of vertices in the ASM rule in Example \ref{bsp-matching}.

The case $g=g^\prime$ can be handled analogously with the only difference that the shorter path is extended at the end rather than the beginning. The general case results from composition of the isomorphisms resulting from cases of a common boy or a common girl in the paths. This shows the local insignificance condition.

We can simplify this construction as follows. First consider two update sets $\Delta_{b,g_1}$, $\Delta_{b,g_2} \in \boldsymbol{\Delta}(S)$ corresponding to a chosen (unmatched) boy $b$ and two chosen (unmatched) girls $g_1$ and $g_2$. Let the paths between $b$ and $g_1, g_2$, respectively, be $b = v_0, v_1, \dots, v_{k_1} = g_1$ and $b = v_0^\prime, v_1^\prime, \dots, v_{k_2}^\prime = g_2$. Then consider the composed paths $v_{k_2}^\prime, \dots, v_1^\prime, v_0, v_1, v_1, \dots, v_{k_1}$ from $g_2$ to $g_1$, and $v_{k_1}, \dots, v_1, v_0^\prime$, $v_1^\prime, \dots, v_{k_2}^\prime$ from $g_1$ to $g_2$. These are inverse to each other and have the same lengh, so we can write $w_0, \dots, w_\ell$ and $w_\ell, \dots, w_0$ for them with $\ell = k_1 + k_2 + 1$. We define an isomorphism $\sigma$ by the product of the transpositions $(w_i, w_{\ell-i})$ for $i = 0, \dots, \lfloor \ell / 2 \rfloor$. Due to the construction of \textit{di\_graph\/} $\sigma$ is well-defined. We can extend both $\Delta_{b,g_i}$ to equivalent update sets $\tilde{\Delta}_{b,g_i}$, such that for every pair $(b^\prime, g^\prime)$ with an update of the location $(P,(b^\prime, g^\prime))$ in $\Delta_{b,g_1}$ or $\Delta_{b,g_2}$ we also have an update of this location in $\tilde{\Delta}_{b,g_1}$ and $\tilde{\Delta}_{b,g_2}$. With this we obtain $\sigma(\tilde{\Delta}_{b,g_1}) = \tilde{\Delta}_{b,g_2}$, which shows the existence of an isomorphism between the two given update sets.

Analogously, we obtain an isomorphism between update sets $\Delta_{b_1,g}, \Delta_{b_2,g} \in \boldsymbol{\Delta}(S)$, i.e. for the cases, where we choose only one unmatched girl, but two different unmatched boys.

For the general case of update sets $\Delta_{b_1,g_1}, \Delta_{b_2,g_2} \in \boldsymbol{\Delta}(S)$, if there is also an update set $\Delta_{b_1,g_2} \in \boldsymbol{\Delta}(S)$, we can take the isomorphisms $\sigma_1, \sigma_2$ with $\sigma_1(\Delta_{b_1,g_1}) = \Delta_{b_1,g_2}$ and $\sigma_2(\Delta_{b_1,g_2}) = \Delta_{b_2,g_2}$. Then $\sigma = \sigma_2 \sigma_1$ is an isomorphism mapping $\Delta_{b_1,g_1}$ to $\Delta_{b_2,g_2}$. Analogously, we obtain a composed isomorphism, if we have an update set $\Delta_{b_2,g_1} \in \boldsymbol{\Delta}(S)$. Finally, if no such update sets exist, then add an edge $(b_1,g_2)$ to the graph and proceed as before. As $P(b_1,g_2) = 0$ holds, this does not make a difference, because the truth value $0$ represents both \textbf{false}, i.e. the edge does not belong to the partial matching, and \textit{undef\/}, i.e. there is no such edge.

For the branching condition we consider update sets $\Delta \in \boldsymbol{\Delta}(S)$ and $\Delta^\prime \in \boldsymbol{\Delta}(S^\prime)$ for some states $S, S^\prime$ together with an isomorphism $\sigma$ mapping $\Delta$ to $\Delta^\prime$. We can restrict our attention to the case $\Delta = \Delta_{b,g}$ and $\Delta^\prime = \Delta_{b^\prime,g^\prime}$; other cases are obvious. Furthermore, as the update sets contain updates $((\textit{match\_count\/}, ()), i)$ and $\sigma$ maps any positive integer to itself, the partial matchings in the states $S$ and $S^\prime$ must have the same size $i-1$. This is the reason we used \textit{match\_count\/} in the ASM rule, though one could also write an ASM without it. This implies that either both $\boldsymbol{\Delta}(S + \Delta)$ and $\boldsymbol{\Delta}(S^\prime + \Delta^\prime)$ are again sets of update sets of the form $\Delta_{\bar{b},\bar{g}}$ or both contain just a single update set $\{ ((\textit{Halt\/}, ()), 1), ((\textit{mode\/}, ()), \text{final}), ((\textit{Output\/}, ()), v) \}$ with the same truth value $v$. Therefore, we need to show that $\sigma$ maps $\boldsymbol{\Delta}(S + \Delta)$ to $\boldsymbol{\Delta}(S^\prime + \Delta^\prime)$ in the former case; the case of a single update set is obvious.

The states $S + \Delta$ and $S^\prime + \Delta^\prime$ essentially represent partial matchings with $i$ pairs of boys and girls each. Nonetheless, with the ASM rule from Example \ref{bsp-matching} we cannot show the branching condition; we will see a counter-example in Example \ref{bsp-icasm4}. However, we can further modify the rule without changing the essence of the specified algorithm and without affecting the already shown local insignificance condition. Instead of directly applying a yielded update set, we can use an additional dynamic function symbol $D$ to store a list of the updates, where the order is the order of edges from the chosen boy $b$ to the chosen girl $g$. Naturally, the identified isomorphisms between update sets also map these lists onto each each. Then we can sequentialise the updates such that in each step a new edge is added to the partial matching either replacing an existing edge or adding (exactly once) a new edge. In these sequential steps only a single update set is yielded each time; the modified ASM behaves deterministically until the whole list has been processed. Then the corresponding isomorphisms are just products of at most two disjoint transpositions, and thus the branching condition follows immediately. For this modified ASM the branching condition is also satisfied.

\end{example}

\paragraph*{\textbf{Remark.}}

Example \ref{bsp-icasm3} highlights a few peculiarities about PTIME icSMs. First, the steps of the machine matter. Instead of finding a path in \textit{di\_graph\/} and updating the partial matching we could first store the path\footnote{This is how the ASM for the perfect matching problem was specified in \cite{blass:apal1999}.}, but then the local insignificance condition would already been violated, because in general there is no isomorphism that maps one such path to another one, if the paths have different length. However, it only matters that there exists a PTIME icASM deciding perfect matching, while not every PTIME ASM deciding the problem is also an icASM, even if such an ASM is globally insignificant. The relevance of steps also explains why it is difficult to define a lower-level computation model for deciding PTIME problems. ASMs provide the necessary flexibility to adapt the steps to the needs.

Second, we used a counter \textit{match\_count\/}, which guarantees that we only have to consider the branching condition for states $S, S^\prime$ with equally-sized partial matchings. This was important in the arguments in Example \ref{bsp-icasm3} above, because otherwise it could occur that exactly one of the successor states $S + \Delta$ and $S^\prime + \Delta^\prime$ contains a partial matching that is already maximal, and then the branching condition would be violated. It might be possible that this could as well be achieved by different rule modifications, but it shows also that the used ASM rule matters.

Third, in Example \ref{bsp-icasm3} with the sequentialisation for modifying the partial matching we avoided to consider extended update sets and took just the set of updates as yielded by the rule, though in general, it may be easier to consider not minimal update sets but larger ones (containing redundant updates). In fact, we could have considered update sets containing updates $((P,(b,g)),v)$ for all pairs of boys and girls, but such maximum-size update sets may not always be the appropriate ones.

Note that the tricky modifications of the ASM rule used in Example \ref{bsp-icasm3} are not relevant for our proofs in the following subsections. Starting from a PTIME problem we obtain a simple icASM, which is constructed from a given PTIME Turing machine deciding the problem. For the converse we start with a PTIME icASM, so the local insignificance conditions are satisfied.

\begin{example}\label{bsp-icasm4}

Let us illustrate the arguments used in Example \ref{bsp-icasm3} on a simple bipartite graph with edges $\{ 1,1^\prime \}, \{ 1,3^\prime \}, \{ 2,2^\prime \}, \{ 2,3^\prime \}, \{ 2,4^\prime \}, \{ 3,1^\prime \}, \{ 3,4^\prime \}, \{ 4,3^\prime \}$, where $1,2,3,4$ are boys and $1^\prime,2^\prime,3^\prime,4^\prime$ are girls.

In the initial state we could choose $1 \in \textit{Boys\/}$ and either $1^\prime$ or $3^\prime \in \textit{Girls\/}$, so we get updates $((P, (1,1^\prime)), 1)$ or $((P, (1,3^\prime)), 1)$. Clearly, the transposition $(1^\prime,3^\prime)$ defines an isomorphism between the corresponding update sets. We could also choose $2 \in \textit{Boys\/}$ and one of $2^\prime, 3^\prime, 4^\prime \in \textit{Girls\/}$ with the updates $((P, (2,2^\prime)), 1)$, $((P, (2,3^\prime)), 1)$ or $((P, (2,4^\prime)), 1)$. For the corresponding update sets the transpositions $(2^\prime,3^\prime)$,  $(2^\prime,4^\prime)$ and $(3^\prime,4^\prime)$ define the isomorphisms. In order to obtain an isomorphism between the update sets $\{ ((P, (1,1^\prime)), 1) \}$ and $\{ ((P, (2,2^\prime)), 1) \}$ the required isomorphism is $(1,2)(1^\prime,2^\prime)$, i.e. the product of transpositions corresponding to the choices of boys and girsls. We can proceed analogously for all other yielded update sets.

Next consider a state with a partial matching $\{ P(1,1^\prime), P(2,2^\prime) \}$. Among the choices for unmatched boys and girls we can take $3 \in \textit{Boys\/}$ and $3^\prime, 4^\prime  \in \textit{Girls\/}$. The path in \textit{di\_graph\/} from $3$ to $3^\prime$ is $3 - 1^\prime - 1 - 3^\prime$, and the path from $3$ to $4^\prime$ is simply $3 - 4^\prime$. The latter one extends to $3 - 1^\prime - 3 - 4^\prime$. This defines the isomorphism $\sigma = (1,3)(3^\prime,4^\prime)$ mapping the corresponding update sets to each other. Other choices of boys and girls are treated analogously.

Analogously, consider a state with partial matching $\{ P(1,1^\prime), P(2,3^\prime) \}$ and consider the choices $3 \in \textit{Boys\/}$ and $2^\prime, 4^\prime  \in \textit{Girls\/}$. We get the paths $3 - 1^\prime - 1 - 3^\prime - 2 - 2^\prime$ and $3 - 4^\prime$, respectively. The latter one extends to $3 - 1^\prime - 1 - 1^\prime - 3 - 4^\prime$. These define the isomorphism $\sigma = (1^\prime,1^\prime)(2,3)(2^\prime,4^\prime)$ mapping the extended update set
\begin{gather*}
\{ ((P,(3,1^\prime)),1), ((P,(1,1^\prime)),0), ((P,(1,3^\prime)),1), \hspace*{3cm} \\
\hspace*{3cm} ((P,(2,3^\prime)),0), ((P,(2,2^\prime)),1), ((P,(3,4^\prime)),0) \}
\end{gather*}
for the first choice to the extended update
\begin{gather*}
\{ ((P,(2,3^\prime)),1), ((P,(1,3^\prime)),0), ((P,(1,1^\prime)),1), \hspace*{3cm} \\
\hspace*{3cm} ((P,(3,1^\prime)),0), ((P,(3,4^\prime)),1), ((P,(2,2^\prime)),0) \}
\end{gather*}
for the second choice. Other choices of boys and girls are treated analogously.

To illustrate the satisfaction of the branching condition consider a state $S$ with partial matching $\{ P(1,1^\prime), P(2,3^\prime) \}$. If we choose $3 \in \textit{Boys\/}$ and $2^\prime  \in \textit{Girls\/}$---so we have the path $3 - 1^\prime - 1 - 3^\prime - 2 - 2^\prime$ in \textit{di\_graph\/}---we get an update set $\Delta$ containing updates $((P,(3,1^\prime)),1)$, $((P,(1,3^\prime)),1)$ and $((P,(2,2^\prime)),1)$. Then in $S + \Delta$ we have the partial matching $\{ P(1,3^\prime), P(3,1^\prime), P(2,2^\prime) \}$. We can only choose $4 \in \textit{Boys\/}$ and $4^\prime  \in \textit{Girls\/}$ with the path $4 - 3^\prime - 1 - 1^\prime - 3 - 4^\prime$ in \textit{di\_graph\/}, which defines a single update set $\{ ((P,(4,3^\prime)),1), ((P,(1,1^\prime)),1), ((P,(3,4^\prime)),1) \}$.

Consider another state $S^\prime$ with partial matching $\{ P(2,2^\prime), P(3,4^\prime) \}$. If we choose $1 \in \textit{Boys\/}$ and $3^\prime  \in \textit{Girls\/}$---so we have the simple path $1 - 3^\prime$ in \textit{di\_graph\/}---we get an update set $\Delta^\prime$ containing the update $((P,(1,3^\prime)),1)$. Then the transposition $\sigma = (1^\prime,4^\prime)$ defines an isomorphism with $\sigma(\Delta) = \Delta^\prime$. Furthermore, in $S^\prime + \Delta^\prime$ we have the partial matching $\{ P(1,3^\prime), P(3,4^\prime), P(2,2^\prime) \}$, and $\boldsymbol{\Delta}(S^\prime + \Delta^\prime)$ contains a single update set $\{ ((P,(4,3^\prime)),1), ((P,(1,1^\prime)),1) \}$. 

We see that $\sigma(\boldsymbol{\Delta}(S + \Delta)) \neq \boldsymbol{\Delta}(S^\prime + \Delta^\prime)$ holds. However, replacing the ASM rule (as explained in Example \ref{bsp-icasm3}) by a ``sequentialised'' rule, we obtain indeed an icASM, i.e. the branching condition is satisfied. In doing so, the update sets are not applied directly, but update sets are stored in some new locations, e.g. using an additional function symbol $D$. Then the local insignificance condition is preserved, and the updates for the partial matching are executed in multiple steps. In our example, first only the update $((P,(4,3^\prime)),1)$ would be executed on $S + \Delta$ and $S^\prime + \Delta^\prime$, respectively. Clearly, the update sets are isomorphic. In a second step the update $((P,(1,1^\prime)),1)$ would be executed on both states. In a third step the update $((P,(3,4^\prime)),1)$ would be executed on the successor state of $S + \Delta$, while on the successor state of $S^\prime + \Delta^\prime$ the update set would be empty. However, as $P(3,4^\prime)$ holds in that successor state, we obtain again an isomorphism between the sets of update sets.

\end{example}

\subsection{Insignificant Choice Polynomial Time}

Our aim is to show that with PTIME icASMs we can define a logic capturing PTIME in the sense of Gurevich \cite{gurevich:cttcs1988}. We will use the definitions of PTIME logic and logic capturing PTIME in exactly the same way as in \cite{blass:apal1999}.\footnote{As pointed out by Dawar \cite{dawar:communication2020} Gurevich's original definition in \cite{gurevich:cttcs1988} is slightly different in that he requires the existence of a Turing machine $M$ mapping sentences $\varphi \in Sen(\Upsilon)$ to PTIME Turing machines $M_\varphi$ such that a structure $S$ satisfies $\varphi$ iff $M_\varphi$ accepts (an encoding of) $S$. According to Immerman \cite[p.25]{immerman:1999} not every encoding is suitable: encoding and decoding must be ``computationally easy'', and the encoding must be ``fairly space efficient''. This is to ensure that the simulating Turing machine $M_\varphi$ has a chance to operate in polynomial time. Fortunately, the standard encoding of an ordered version of a structure $S$ (see \cite[p.88]{libkin:2004}) satisfies these requirements, so it is justified to consider just this encoding. Furthermore, the effectivity of the translation, i.e. the mapping by means of a Turing machine $M$, is not mentioned in \cite{blass:apal1999}. This is, because in the case of CPT (and analogously here for ICPT) the sentences in $ Sen(\Upsilon)$ are specific ASMs, for which it is commonly known how to translate them, e.g. using a tool such as CoreASM \cite{farahbod:fi2007}, into executable code, i.e. the requested effective translation always exists. Nonetheless, we will also show how such an effective translation is defined. Hence the decisive questions are how a recursive syntax for ICPT can be defined and how a simulation by Turing machines in polynomial time can be achieved.}

\begin{definition}\label{def-icpt}\rm

The complexity class {\em insignificant choice polynomial time} (ICPT) is the collection of pairs $(K_1, K_2)$, where $K_1$ and $K_2$ are disjoint classes of finite structures of the same signature, such that there exists a PTIME icASM that accepts all structures in $K_1$ and rejects all structures in $K_2$.

\end{definition}

We also say that a pair $(K_1, K_2) \in$ ICPT is {\em ICPT separable}. As for the analogous definition of CPT a PTIME icASM may accept structures not in $K_1$ and reject structures not in $K_2$. Therefore, we also say that a class $K$ of finite structures is in ICPT, if $(K,K^\prime) \in$ ICPT holds for the complement $K^\prime$ of structures over the same signature.

Let us link the definition of ICPT to PTIME logics as defined in general in \cite{blass:apal1999}. 

\begin{definition}\label{def-logic}\rm

A logic $\mathcal{L}$ is defined by a pair (\textit{Sen\/},\textit{Sat\/}) of functions satisfying the following conditions:

\begin{itemize}

\item \textit{Sen\/} assigns to every signature $\Upsilon$ a recursive set $\textit{Sen\/}(\Upsilon)$, the set of {\em $\mathcal{L}$-sentences of signature $\Upsilon$}.

\item \textit{Sat\/} assigns to every signature $\Upsilon$ a recursive binary relation $\textit{Sat\/}_\Upsilon$ over structures $S$ over $\Upsilon$ and sentences $\varphi \in \textit{Sen\/}(\Upsilon)$. We assume that $\textit{Sat\/}_\Upsilon(S,\varphi) \Leftrightarrow \textit{Sat\/}_\Upsilon(S^\prime,\varphi)$ holds, whenever $S$ and $S^\prime$ are isomorphic.

\end{itemize}

We say that a structure $S$ over $\Upsilon$ {\em satisfies} $\varphi \in \textit{Sen\/}(\Upsilon)$ (notation: $S \models \varphi$) iff $\textit{Sat\/}_\Upsilon(S,\varphi)$ holds. 

\end{definition}

If $\mathcal{L}$ is a logic in this general sense, then for each signature $\Upsilon$ and each sentence $\varphi \in \textit{Sen\/}(\Upsilon)$ let $K(\Upsilon,\varphi)$ be the class of structures $S$ with $S \models \varphi$. We then say that $\mathcal{L}$ is a {\em PTIME logic}, if every class $K(\Upsilon,\varphi)$ is PTIME in the sense that it is closed under isomorphisms and there exists a PTIME Turing machine that accepts exactly the standard encodings of ordered versions of the structures in the class.

We further say that a logic $\mathcal{L}$ {\em captures PTIME} iff it is a PTIME logic and for every signature $\Upsilon$ every PTIME class of $\Upsilon$-structures coincides with some class $K(\Upsilon,\varphi)$.

These definitions of PTIME logics can be generalised to three-valued logics, in which case $\textit{Sat\/}_\Upsilon(S,\varphi)$ may be true, false or unknown. For these possibilities we say that $\varphi$ {\em accepts} $S$ or $\varphi$ {\em rejects} $S$ or neither, respectively. Then two disjoint classes $K_1$ and $K_2$ of structures over $\Upsilon$ are called {\em $\mathcal{L}$-separable} iff there exists a sentence $\varphi$ accepting all structures in $K_1$ and rejecting all those in $K_2$.

In this sense, ICPT is to define a PTIME logic that separates pairs of structures in ICPT\footnote{We can either consider a two-valued logic separating structures that are accepted by a sentence of the logic from all other structures, or a three-valued logic separating structures that are accepted by a sentence of the logic from those that are rejected and those that are neither accepted nor rejected.}. The idea is that sentences of this logic are PTIME icASMs, for which $\Upsilon$ is the signature of the input structure. By abuse of terminology we also denote this logic as ICPT.

\subsection{Effective Translation}

The effective translation of PTIME icASMs to Turing machines follows from the following lemma.

\begin{proposition}\label{lem-icpt-translation}

For every icASM $M$ with input signature $\Upsilon_0$ there exists a deterministic Turing machine $T_M$ simulating $M$. The machine $T_M$ takes ordered versions of structures $I$ over $\Upsilon_0$ as input and accepts iff $M$ accepts $I$. The translation from $M$ to $T_M$ is effective and compositional in the rule of $M$.

\end{proposition}

\begin{proof}

We can assume that $T_M$ uses a separate tape for each dynamic function symbol $f$ in the signature $\Upsilon$ of $M$. We can further assume a read-only tape containing the standard encoding of an ordered version of the input structure $I$. We use structural induction over the rule $r$ of the icASM $M$.

For an {\em assignment rule} $f(t_1, \dots, t_n) := t_0$ the machine $T_M$ evaluates each of the terms $t_i$ resulting in values $v_i$ ($i =0,\dots,n$), and writes the tuple $(v_1,\dots,v_n,v_0)$ onto the tape associated with $f$. Any existing tuple $(v_1,\dots,v_n,v_0^\prime)$ on this tape is removed from the tape.

For a \textbf{skip} rule the machine $T_M$ does nothing.

For a {\em branching rule} \textbf{if} $\varphi$ \textbf{then} $r_1$ \textbf{else} $r_2$ \textbf{endif} we use three submachines $T_{\varphi}$ evaluating $\varphi$, $T_1$ simulating the rule $r_1$, and $T_2$ simulating the rule $r_2$. Then $T_M$ first executes $T_{\varphi}$, then depending on the result either executes $T_1$ or $T_2$.

For a {\em parallel rule} \textbf{forall} $v \in t$ \textbf{do} $r(v)$ \textbf{enddo} the machine $T_M$ first uses a submachine evaluating the term $t$ and writes the elements of resulting set in any order onto some auxiliary tape. Then $T_M$ uses a submachine $T_r$ to execute one-by-one the rules $r(v)$ for all values $v$ on the auxiliary tape. We can assume that the detection of a clash will stop $T_M$, and we can assume that additional locations are used, if required by the sequentialisation of the parallel rule.

For a {\em choice rule} \textbf{choose} $v \in \{ x \mid x \in \textit{Atoms\/} \wedge x \in t \}$ \textbf{do} $r(v)$ \textbf{enddo} the machine $T_M$ first executes a submachine that checks the local insignificance condition for this choice rule. If not satisfied, $T_M$ stops unsuccessfully. Otherwise, the set term $x \in t$ is evaluated one-by-one for all atoms in the order used for building the ordered version of $I$. For the smallest atom $v$ satisfying this condition the submachine $T_r$ defined by $r(v)$ is executed.

For a call rule $t_0 \leftarrow N(t_1, \dots, t_n)$ the machine $T_M$ first evaluates the terms $t_1, \dots, t_n$, which gives values $v_1, \dots, v_n$. Then the machine $T_N$ is executed with $v_1, \dots, v_n$ as input. The return value is finally assigned by $T_M$ to the location defined by the term $t_0$ in the same way as for assignment rules.

Furthermore, if $r$ is the complete rule of the icASM $M$, then first a submachine is executed, which checks the satisfaction of the branching condition. If satisfied, the Turing machine $T_r$ simulating the rule $r$ as explained above is executed. Otherwise, $T_M$ stops immediately without success.\qed

\end{proof}

We see immediately, that the construction in the proof of Proposition \ref{lem-icpt-translation} results in a PTIME Turing machine, if $M$ is a PTIME icASM and the checks of the local insignificance and branching conditions can be executed in polynomial time.

\begin{proposition}\label{lem-icpt-ptime-simulation}

Assume that checks of the branching condition of an arbitrary ASM rule $r$ and the local insignificance conditions of a choice subrules of $r$ can be executed on a Turing machine in polynomial time. Then for a PTIME icASM $\tilde{M} = (M, p(n), q(n))$ the simulating Turing machine $T_M$ in Proposition \ref{lem-icpt-translation} runs in polynomial time.

\end{proposition}

\begin{proof}

The bounds in set terms appearing in parallel rules ensure that the number of immediate subcomputations is bounded by the number of active objects in a state $S$ and thus by $q(n)$. The polynomial bounds for other rules are obvious. Hence, together with the polynomial bound $p(n)$ on the length of a run we obtain a polynomial bound on the number of steps of the simulating Turing machine $T_M$.\qed

\end{proof}

With these two lemmata we can concentrate on the remaining properties of PTIME logics. We have to show that icASMs have a recursive syntax and that the checks of the local insignificance and branching conditions can be executed on a Turing machine in polynomial time. The latter problem will be addressed in the next subsection.

For the recursive syntax we could replace choice rules by guarded rules, where the guard expresses the local insignificance condition. Such a guarded choice rule would produce either the same set of update sets as the unguarded rule, if the guard condition is satisfied, or otherwise produce no update set at all. Likewise, for the closed rule of an icASM we could request a guarded rule with the branching condition as guard. Then the rule, which is executed in every (macro-)step of the icASM would either produce the same set of update sets as the unguarded rule, if the guard condition is satisfied, or otherwise produce no update set at all. The obvious disadvantage of such guarded rules is that we have to carefully define the guards in such a way that the resulting rule language is recursive.

We will, however, adopt a different (and simpler) way to obtain recursive syntax. We simply keep choice rules and global rules of ASMs as defined in Section \ref{sec:asm} (this obviously preserves the recursive syntax we have for arbitrary ASMs), but we modify their semantics.

For a choice rule $r^\prime$ of the form \textbf{choose} $v \in \{ x \mid \textit{Atoms\/} \wedge x \in t \}$ \textbf{do} $r(v)$ \textbf{enddo} we define that the set of yielded update sets is $\emptyset$, if the local insignificance condition is violated, i.e. we get
\[ \boldsymbol{\Delta}^{ic}_{r^\prime,\zeta}(S) = \begin{cases}
\bigcup\limits_{a \in \textit{Atoms\/} \atop a \in val_{S,\zeta}(t)} \boldsymbol{\Delta}_{r(a),\zeta(v \mapsto a)}(S) &
\text{if for all} \; b,c \in \textit{Atoms\/} \; \text{with} \; b,c \in val_{S,\zeta}(t) \; \text{and}\\
& \text{all}\; 
\Delta \in \boldsymbol{\Delta}_{r(b),\zeta(v \mapsto b)}(S), 
\Delta^\prime \in \boldsymbol{\Delta}_{r(c),\zeta(v \mapsto c)}(S) \\
& \text{there exists an isomorphism} \;
\sigma \;\text{with}\; \sigma(\Delta) = \Delta^\prime \\
& \; \\
\emptyset & \text{else} \end{cases} \]

In Section \ref{sec:icpt} we will show how this modified semantics of choice rules can be formalised in the logic of non-deterministic ASMs \cite{ferrarotti:amai2018}.

Analogously, if $r^\prime$ is the complete closed rule of an icASM (not one of its component rules), we modify its semantics (i.e. the semantics of update sets that determine the successor states) as well defining that the set of yielded update sets is $\emptyset$, if the branching condition is violated, i.e. we get
\[ \boldsymbol{\Delta}^{ic}_{r^\prime}(S) = \begin{cases}
\boldsymbol{\Delta}_{r^\prime}(S) &\text{if for all states} \; S^\prime \; \text{and all update sets} \\
& \Delta \in \boldsymbol{\Delta}_{r^\prime}(S), \Delta^\prime \in \boldsymbol{\Delta}_{r^\prime}(S^\prime)
\; \text{such that} \; \sigma(\Delta) = \Delta^\prime \; \text{holds for} \\
& \text{some isomorphism} \; \sigma \;
\text{we have} \; \sigma(\boldsymbol{\Delta}_{r^\prime}(S + \Delta)) = \boldsymbol{\Delta}_{r^\prime}(S^\prime + \Delta^\prime) \\
& \; \\
\emptyset & \text{else} \end{cases} \]

\paragraph{\textbf{Remark.}}

Note that with this modified semantics for every state $S$ we either have $\boldsymbol{\Delta}^{ic}_r(S) = \boldsymbol{\Delta}_r(S)$ or $\boldsymbol{\Delta}^{ic}_r(S) = \emptyset$. Then the proof of Lemma \ref{lem-icasm} will continue to hold for these modified machines, in particular, all machines are locally and globally insignificant. The only difference to the ``normal'' ASM semantics is that runs are truncated, if the local insignificance or branching conditions are violated, i.e. such machines will not accept (nor reject) any structure.

\section{Polynomial Time Verification}\label{sec:verification}

In order to simulate a PTIME icASM by a PTIME Turing machine we still need to prove that the local insignificance and branching conditions can be checked on a Turing machine in polynomial time.

\subsection{The Local Insignificance Condition}

\begin{proposition}\label{lem-loc-insignificance}

Let $\tilde{M} = (M,p(n),q(n))$ be an PTIME ASM with rule $r$. Then for every state $S$ of $M$ it can be checked in polynomial time, if for any two update sets $\Delta, \Delta^\prime \in \boldsymbol{\Delta}_r(S)$ there exists an isomorphism $\sigma$ with $\sigma(\Delta) = \Delta^\prime$, where $\sigma$ is defined by a product of transpositions determined by the choice rules in $r$.

\end{proposition}

\begin{proof}

Consider a Turing machine $T_M$ simulating $M$ as constructed in Proposition \ref{lem-icpt-translation}. In order to check the local insignificance condition this machine can produce (encodings of) all update sets in $\boldsymbol{\Delta}_r(S)$ and write them onto some tape. Any object in one of these update sets must be active, and $q(n)$ is a polynomial bound on the number of active objects\footnote{In fact, there is a much lower polynomial bound on the number of active objects in a single update set, as $q(n)$ bounds the number of all active objects in a complete run of $\tilde{M}$.}. Furthermore, $r$ makes choices only among atoms, so for each choice there are at most $n$ update sets. Consequently, for each choice subrule there are at most $n-1$ transpositions mapping one chosen atom to another one. As the number of choice subrules in $r$ is fixed, there is a polynomial bound on the number of possible isomorphisms $\sigma$ between update sets.

Therefore, we can fix one update set $\Delta \in \boldsymbol{\Delta}_r(S)$, then apply all these possible isomorphisms $\sigma$ to $\Delta$ and write the resulting update sets onto some tape. The application of $\sigma$ is merely a syntactic replacement of atoms, so it can be done in linear time in the size of $\Delta$.

Finally, the simulating Turing machine needs to check, if each of the computed update sets $\sigma(\Delta)$ appears in the list of update sets in $\boldsymbol{\Delta}_r(S)$ and vice versa. Such a comparison requires again polynomial time in the size of the update sets and the number of update sets, hence this can be done in polynomial time. Altogether the check that update sets $\Delta, \Delta^\prime \in \boldsymbol{\Delta}_r(S)$ are pairwise isomorphic is done by $T_M$ in polynomial time.\qed

\end{proof}

In view of Proposition \ref{lem-loc-insignificance} we can always assume that the local insignificance condition in a state $S$ is checked first, before the branching condition is considered.

\subsection{The Branching Condition}

Checking the branching condition for a PTIME ASM $\tilde{M} = (M,p(n),q(n))$ with rule $r$ will be much more difficult, as for a given state $S$ and an update set $\Delta \in \boldsymbol{\Delta}(S)$ we have to consider all states $S^\prime$, for which there exists an isomorphic update set $\Delta^\prime = \sigma(\Delta) \in \boldsymbol{\Delta}(S^\prime)$. At first sight it seems impossible that a condition that involves all states $S^\prime$ could be verified in polynomial time. We will therefore investigate how to reduce the number of pairs $(S^\prime, \Delta^\prime)$ that need to be checked to finitely many, independent from the input structure.

We say that a pair $(S^\prime, \Delta^\prime)$ with a state $S^\prime$ and an update set $\Delta^\prime \in \boldsymbol{\Delta}_r(S^\prime)$ with $\Delta^\prime = \sigma(\Delta)$ for some isomorphism $\sigma$ {\em satisfies BC with respect to $(S,\Delta)$} iff $\sigma( \boldsymbol{\Delta}_r(S + \Delta)) = \boldsymbol{\Delta}_r(S^\prime + \Delta^\prime)$ holds. In the following we fix a state $S$ and an update set $\Delta \in \boldsymbol{\Delta}_r(S)$, so we can simply say that $(S^\prime, \Delta^\prime)$ satisfies BC.

\begin{lemma}\label{lem-bc1}

For every update set $\Delta^\prime \in \boldsymbol{\Delta}_r(S)$ it can be checked in polynomial time on a Turing machine, if $(S, \Delta^\prime)$ satisfies BC.

\end{lemma}

\begin{proof}

Let $\Delta^\prime = \sigma(\Delta) \in \boldsymbol{\Delta}(S)$ for some isomorphism $\sigma$. If the local insignificance condition is satisfied, such an isomorphism always exists. The simulating Turing machine $T_M$ can write all update sets in $\boldsymbol{\Delta}_r(S + \Delta)$ and likewise all update sets in $\boldsymbol{\Delta}_r(S + \Delta^\prime)$ onto two tapes. As we assume a PTIME ASM, the number of active objects is polynomially bounded and hence also the cardinality of these update sets is polynomially bounded. Furthermore, the number of update sets in $\boldsymbol{\Delta}_r(S + \Delta)$ and $\boldsymbol{\Delta}_r(S + \Delta^\prime)$, respectively, is also polynomially bounded, as different update sets result from choice rules and choices are only made among atoms.

Then applying the isomorphism $\sigma$ to $\boldsymbol{\Delta}_r(S + \Delta)$ is done by a syntactic replacement of atoms, which produces the set of update sets $\sigma( \boldsymbol{\Delta}_r(S + \Delta))$. For each $\sigma(\Delta_i) \in \sigma( \boldsymbol{\Delta}_r(S + \Delta))$ it requires polynomial time to check if $\sigma(\Delta_i) \in \boldsymbol{\Delta}_r(S^\prime + \Delta^\prime)$ holds and vice versa, which shows the lemma.\qed

\end{proof}

Note that the (insufficient) second part of the local insignificance condition in \cite{schewe:arxiv2021v7} is just what is considered in Lemma \ref{lem-bc1}, but the branching condition is stricter. Next we show that BC satisfaction extends to isomorphic states.

\begin{lemma}\label{lem-bc2}

If $(S_1, \Delta_1)$ satisfies BC and $S_2$ is isomorphic to $S_1$, i.e. $\tau(S_1) = S_2$ holds for some isomorphism $\tau$ and hence $\Delta_2 = \tau(\Delta_1) \in \boldsymbol{\Delta}_r(S_2)$, then also $(S_2, \Delta_2)$ satisfies BC.

\end{lemma}

\begin{proof}

With the isomorphism $\tau$ mapping $S_1$ to $S_2$ we immediately get $\tau(\boldsymbol{\Delta}_r(S_1)) = \boldsymbol{\Delta}_r(S_2)$ (see \cite[Lemma~2.4.2, pp.66f.]{boerger:2003}). Then we can write $\Delta = \tau(\Delta_1) = \tau(\sigma(\Delta))$, where $\sigma$ is the isomorphism  given by the BC satisfaction of $(S_1,\sigma(\Delta))$. Furthermore, this gives us 
\[ \tau(\boldsymbol{\Delta}_r(S_1 + \Delta_1)) = \boldsymbol{\Delta}_r(\tau(S_1) + \tau(\Delta_1)) = \boldsymbol{\Delta}_r(S_2 + \Delta_2) \; . \]

As $(S_1, \Delta_1)$ satisfies BC, we have $\boldsymbol{\Delta}_r(S_1 + \Delta_1) = \sigma(\boldsymbol{\Delta}_r(S + \Delta))$ and hence
\[ \tau(\sigma(\boldsymbol{\Delta}_r(S + \Delta))) = \tau(\boldsymbol{\Delta}_r(S_1 + \Delta_1)) = \boldsymbol{\Delta}_r(S_2 + \Delta_2) \]
holds. That is, the isomorphism $\tau\sigma$ with $\tau\sigma(\Delta) = \Delta_2$ shows that $(S_2, \Delta_2)$ satisfies BC as claimed.\qed

\end{proof}

\subsection{Bounded Exploration Witnesses}

Lemmata \ref{lem-bc1} and \ref{lem-bc2} cover the easy cases. In order to further reduce the pairs $(S^\prime, \Delta^\prime)$, for which BC satisfaction needs to be checked, we introduce {\em bounded exploration witnesses}, which we generalise from those used in the parallel ASM thesis \cite{ferrarotti:tcs2016} to arbitrary non-deterministic ASMs\footnote{Bounded exploration witnesses are essential and powerful means used in behavioural theories of classes of algorithms. They were originally intoduced by Gurevich for the sequential ASM thesis \cite{gurevich:tocl2000}, in which case they are finite sets of ground terms that determine the update set in any state of a sequential algorithm. While Blass's and Gurevich's generalisation from sequential to unbounded parallel algorithms \cite{blass:tocl2003,blass:tocl2008} reduces the theory of parallel algorithms to the sequential ASM thesis---for the price to exploit many non-logical concepts such as mailbox, display and ken---the ``simplified'' parallel ASM thesis \cite{ferrarotti:tcs2016} exploits again bounded exploration witnesses, which become finite sets of multiset comprehension terms. The reflective (sequential) ASM thesis \cite{schewe:scp2022} also uses a notion of bounded exploration witnesses. Different to all these behavioural theories we do not exploit that bounded exploration witnesses as defined here characterise unbounded non-deterministic algorithms. In fact, a behavioural theory for non-deterministic algorithms does not yet exist---except for the bounded sequential case. However, a general behavioural theory is not needed here; we only need bounded exploration witnesses for the ASMs under consideration, while a general behavioural theory will be addressed elsewhere.}.

We use the notation $\langle t \mid \varphi \rangle_V$ for multiset comprehension terms, where $t$ is a term defined over the signature of an ASM, $\varphi$ is a Boolean term, and $V$ is the set of free variables in the multiset comprehension term. In particular, we must have $fr(t) \subseteq fr(\varphi) \cup V$. Then the evaluation of terms defined in the previous section generalises to multiset comprehension terms in the usual way, i.e.
\[ val_{S,\zeta}(\langle t \mid \varphi \rangle_V) = \langle val_{S,\zeta^\prime}(t) \mid val_{S,\zeta^\prime}(\varphi), \zeta^\prime(v) = \zeta(v) \;\text{for all}\; v \in V \rangle \; , \]
where $S$ is a state and $\zeta$ is a variable assignment. For closed multiset comprehension terms the evaluation does not depend on the variable assignment $\zeta$, so we can simply write $val_S(\langle t \mid \varphi \rangle_V)$.

As we deal with choice rules, we consider more general multiset comprehension terms $\langle t \mid \varphi \rangle_V$, in which the term $t$ is itself a multiset comprehension term. Such terms will be called {\em witness terms}. Clearly, the evaluation of a closed witness term in a state $S$ results in a multiset of multisets of objects. We say that states $S$ and $S^\prime$ {\em coincide} on a set $W$ of closed witness terms iff $val_S(\alpha) = val_{S^\prime}(\alpha)$ holds for all $\alpha \in W$.

\begin{definition}\label{def-bew}\rm

A {\em bounded exploration witness} for an ASM $M$ with rule $r$ is a finite set $W$ of closed witness terms such that for any two states $S_1, S_2$ of $M$ that coincide on $W$ we have $\boldsymbol{\Delta}_r(S_1) = \boldsymbol{\Delta}_r(S_2)$.

\end{definition}

For an ASM rule $r$ with free variables $fr(r) = V$ we now define a (standard) set $W_r$ of witness terms with free variables in $V$. If $r$ is closed, then also all witness terms in $W_r$ are closed. For a rule $r$ the set $W_r$ of witness terms is defined as follows:

\begin{enumerate}\renewcommand{\labelenumi}{(\arabic{enumi})}

\item If $r$ is a \textbf{skip} rule, then we simply have $W_r = \{ \langle\langle\rangle_{\emptyset}\rangle_{\emptyset} \}$.

\item If $r$ is an assignment rule $f(t_1,\dots,t_n) := t_0$ with $V = \bigcup\limits_{0 \le i \le n} fr(t_i)$, then we have
\[ W_r = \{ \langle\langle (t_0, \dots, t_n) \mid \textbf{true} \rangle_V \rangle_V \} \; . \]

\item If $r$ is a branching rule \textbf{if} $\varphi$ \textbf{then} $r_1$ \textbf{else} $r_2$ \textbf{endif} with $V_0 = fr(\varphi)$, $V_1 = fr(r_1)$ and $V_2 = fr(r_2)$, thus $fr(r) = V_0 \cup V_1 \cup V_2$, we have
\begin{align*}
W_r &= \{ \langle\langle \varphi \mid \textbf{true} \rangle_{V_0} \mid \textbf{true} \rangle_{V_0} \} \cup \\
&\qquad \{ \langle\langle t \mid \psi \wedge \varphi \rangle_{V_0 \cup V_1^\prime} \mid \chi \rangle_{V_0 \cup V_1} \mid \langle\langle t \mid \psi \rangle_{V_1^\prime} \mid \chi \rangle_{V_1} \in W_{r_1} \} \cup \\
&\qquad \{ \langle\langle t \mid \psi \wedge \neg\varphi \rangle_{V_0 \cup V_2^\prime} \mid \chi \rangle_{V_0 \cup V_2} \mid \langle\langle t \mid \psi \rangle_{V_2^\prime} \mid \chi \rangle_{V_2} \in W_{r_2} \} \; .
\end{align*}

\item If $r$ is a parallel rule \textbf{forall} $v \in t$ \textbf{do} $r(v)$ \textbf{enddo} with $V_0 = fr(t)$ and $V = fr(r(v))$, we have $fr(r) = (V \cup V_0) - \{ v \}$ and
\begin{align*}
W_r &= \{ \langle\langle t^\prime \mid \psi \wedge v \in t \rangle_{(V^\prime \cup V_0) - \{ v \}} \mid \chi \rangle_{(V \cup V_0) - \{ v \}} \mid \langle\langle t^\prime \mid \psi \rangle_{V^\prime} \mid \chi \rangle_V \in W_{r(v)} \} \cup \\
&\qquad \{ \langle\langle t^\prime \mid \psi \wedge v \notin t \rangle_{(V^\prime \cup V_0) - \{ v \}} \mid \chi \rangle_{(V \cup V_0) - \{ v \}} \mid \langle\langle t^\prime \mid \psi \rangle_{V^\prime} \mid \chi \rangle_V \in W_{r(v)} \} \; .
\end{align*}

\item If $r$ is a choice rule \textbf{choose} $v \in \{ x \mid x \in \textit{Atoms\/} \wedge x \in t \}$ \textbf{do} $r(v)$ \textbf{enddo} with $V_0 = fr(t)$ and $V = fr(r(v))$, then $fr(r) = (V \cup V_0) - \{ v \}$ and
\begin{align*}
& \hspace*{-5mm} W_r = \\
& \hspace*{-5mm} \{ \langle\langle t^\prime \mid \psi \rangle_{(V^\prime \cup V_0) - \{ v \}} \mid \chi \wedge v \in \textit{Atoms\/} \wedge v \in t \rangle_{(V \cup V_0) - \{ v \}} \mid \langle\langle t^\prime \mid \psi \rangle_{V^\prime} \mid \chi \rangle_V \in W_{r(v)} \} \cup \\
& \hspace*{-5mm} \{ \langle\langle t^\prime \mid \psi \rangle_{(V^\prime \cup V_0) - \{ v \}} \mid \chi \wedge (v \notin \textit{Atoms\/} \vee v \notin t) \rangle_{(V \cup V_0) - \{ v \}} \mid \langle\langle t^\prime \mid \psi \rangle_{V^\prime} \mid \chi \rangle_V \in W_{r(v)} \} \; .
\end{align*}

\item If $r$ is a call rule $t_0 \leftarrow N(t_1, \dots, t_n)$ with $V = \bigcup\limits_{i=0}^n$ and $t_0 = f(t_1^\prime, \dots, t_m^\prime)$, then we have
\[ W_r = \{ \langle\langle (t_1^\prime,\dots,t_m^\prime, t_1, \dots, t_n) \mid \textbf{true} \rangle_V \rangle_V \} \; . \]

\end{enumerate}

\begin{proposition}\label{lem-bew}

If $r$ is the rule of an ASM $M$, then the standard set $W_r$ of witness terms constructed above is a bounded exploration witness for $M$.

\end{proposition}

\begin{proof}

Let $S_1, S_2$ be states of $M$, and let $\zeta: V \rightarrow B$ be a variable assignment for the free variables appearing in the rule $r$. We will use structural induction over ASM rules $r$ to show the following claim:

\begin{quote} 

If $val_{S_1,\zeta}(\alpha) = val_{S_2,\zeta}(\alpha)$ holds for all $\alpha \in W_r$, then $\boldsymbol{\Delta}_{r,\zeta}(S_1) = \boldsymbol{\Delta}_{r,\zeta}(S_2)$.

\end{quote}

\paragraph*{\bf (1)}

Let $r$ be a \textbf{skip} rule. Then $val_{S_1,\zeta}(\langle\langle\rangle_{\emptyset}\rangle_{\emptyset}) = \langle\langle\rangle\rangle = val_{S_1,\zeta}(\langle\langle\rangle_{\emptyset}\rangle_{\emptyset})$ is always satisfied. In this case we have $\boldsymbol{\Delta}_{r,\zeta}(S_1) = \{ \emptyset \} = \boldsymbol{\Delta}_{r,\zeta}(S_2)$.

\paragraph*{\bf (2)}

Let $r$ be an assignment rule $f(t_1,\dots,t_n) := t_0$ with $V = \bigcup\limits_{0 \le i \le n} fr(t_i)$. Then we get 
\[ val_{S_i,\zeta}(\langle\langle (t_0, \dots, t_n) \mid \textbf{true} \rangle_V \rangle_V) = \langle\langle ( val_{S_i,\zeta}(t_0), \dots, val_{S_i,\zeta}(t_n)) \rangle\rangle \; . \]

Consider the locations $\ell_i = (f, (val_{S_i,\zeta}(t_1), \dots, val_{S_i,\zeta}(t_n)))$ and the objects $a_i = val_{S_i,\zeta}(t_0)$. If $S_1$ and $S_2$ coincide on $(W_r,\zeta)$, we have $\ell_1 = \ell_2$ and $a_1 = a_2$, hence $\boldsymbol{\Delta}_{r,\zeta}(S_1) = \{ \{ (\ell_1, a_1) \} \} = \{ \{ (\ell_2, a_2) \} \} = \boldsymbol{\Delta}_{r,\zeta}(S_2)$.

\paragraph*{\bf (3)}

Let $r$ be a branching rule \textbf{if} $\varphi$ \textbf{then} $r_1$ \textbf{else} $r_2$ \textbf{endif}, and let $V_0 = fr(\varphi)$, $V_1 = fr(r_1)$ and $V_2 = fr(r_2)$. Let $\zeta: V_0 \cup V_1 \cup V_2 \rightarrow B$ be a variable assignment.

We have $val_{S_i,\zeta}(\langle\langle \varphi \mid \textbf{true} \rangle_{V_0} \mid \textbf{true} \rangle_{V_0}) =
\langle\langle val_{S_i,\zeta}(\varphi) \rangle\rangle$. Let us assume that $S_1$ and $S_2$ coincide on $(W_r,\zeta)$, in particular,
\[ val_{S_1,\zeta}(\langle\langle \varphi \mid \textbf{true} \rangle_{V_0} \mid \textbf{true} \rangle_{V_0}) = val_{S_2,\zeta}(\langle\langle \varphi \mid \textbf{true} \rangle_{V_0} \mid \textbf{true} \rangle_{V_0}) \; , \]
hence $val_{S_1,\zeta}(\varphi) = val_{S_2,\zeta}(\varphi)$.

Furthermore, for arbitrary $\langle\langle t \mid \psi \rangle_{V_1^\prime} \mid \chi \rangle_{V_1} \in W_{r_1}$ our assumption implies
\begin{align*}
val_{S_1,\zeta}&( \langle\langle t \mid \psi \wedge \varphi \rangle_{V_0 \cup V_1^\prime} \mid \chi \rangle_{V_0 \cup V_1} ) = \\
& \langle\langle val_{S_1,\zeta}(t) \mid val_{S_1,\zeta}(\psi) = val_{S_1,\zeta}(\varphi) = \textbf{true} \rangle \mid val_{S_1,\zeta}(\chi) = \textbf{true} \rangle = \\
& \langle\langle val_{S_2,\zeta}(t) \mid val_{S_2,\zeta}(\psi) = val_{S_2,\zeta}(\varphi) = \textbf{true} \rangle \mid val_{S_1,\zeta}(\chi) = \textbf{true} \rangle = \\
&\qquad\qquad val_{S_2,\zeta}( \langle\langle t \mid \psi \wedge \varphi \rangle_{V_0 \cup V_1^\prime} \mid \chi \rangle_{V_0 \cup V_1} ) .
\end{align*}

If $val_{S_1,\zeta}(\varphi) = val_{S_2,\zeta}(\varphi) = \textbf{true}$ holds, we immediately get $val_{S_1,\zeta}( \langle\langle t \mid \psi \rangle_{V_1^\prime} \mid \chi \rangle_{V_1} ) = val_{S_2,\zeta}( \langle\langle t \mid \psi \rangle_{V_1^\prime} \mid \chi \rangle_{V_1} )$, and by induction $\boldsymbol{\Delta}_{r_1,\zeta}(S_1) = \boldsymbol{\Delta}_{r_1,\zeta}(S_2)$.

Analogously, our assumption also implies
\begin{align*}
val_{S_1,\zeta}&( \langle\langle t \mid \psi \wedge \neg\varphi \rangle_{V_0 \cup V_1^\prime} \mid \chi \rangle_{V_0 \cup V_1} ) = \\
& \langle\langle val_{S_1,\zeta}(t) \mid val_{S_1,\zeta}(\psi) = \textbf{true}, val_{S_1,\zeta}(\varphi) = \textbf{false} \rangle \mid val_{S_1,\zeta}(\chi) = \textbf{true} \rangle = \\
& \langle\langle val_{S_2,\zeta}(t) \mid val_{S_2,\zeta}(\psi) = \textbf{true}, val_{S_2,\zeta}(\varphi) = \textbf{false} \rangle \mid val_{S_1,\zeta}(\chi) = \textbf{true} \rangle = \\
&\qquad\qquad val_{S_2,\zeta}( \langle\langle t \mid \psi \wedge \neg\varphi \rangle_{V_0 \cup V_1^\prime} \mid \chi \rangle_{V_0 \cup V_1} ) .
\end{align*}

If $val_{S_1,\zeta}(\varphi) = val_{S_2,\zeta}(\varphi) = \textbf{false}$ holds, we immediately get $val_{S_1,\zeta}( \langle\langle t \mid \psi \rangle_{V_1^\prime} \mid \chi \rangle_{V_1} ) = val_{S_2,\zeta}( \langle\langle t \mid \psi \rangle_{V_1^\prime} \mid \chi \rangle_{V_1} )$, and by induction $\boldsymbol{\Delta}_{r_2,\zeta}(S_1) = \boldsymbol{\Delta}_{r_2,\zeta}(S_2)$.

Then the definition of update sets for branching rules implies
\begin{align*}
\boldsymbol{\Delta}_{r,\zeta}(S_1) &=
\begin{cases} \boldsymbol{\Delta}_{r_1,\zeta}(S_1) &\text{if}\; val_{S_1,\zeta}(\varphi) = 1 \\
\boldsymbol{\Delta}_{r_2,\zeta}(S_1) &\text{if}\; val_{S_1,\zeta}(\varphi) = 0 \end{cases} \\
&= \begin{cases} \boldsymbol{\Delta}_{r_1,\zeta}(S_2) &\text{if}\; val_{S_2,\zeta}(\varphi) = 1 \\
\boldsymbol{\Delta}_{r_2,\zeta}(S_2) &\text{if}\; val_{S_2,\zeta}(\varphi) = 0 \end{cases} \\
&= \boldsymbol{\Delta}_{r,\zeta}(S_2) \; .
\end{align*}

\paragraph*{\bf (4)}

Let $r$ be a parallel rule \textbf{forall} $v \in t$ \textbf{do} $r(v)$ \textbf{enddo}, and let $V_0 = fr(t)$ and $V = fr(r(v))$. Let $\zeta: (V \cup V_0) - \{ v \} \rightarrow B$ be a variable assignment. For all $\langle\langle t^\prime \mid \psi \rangle_{V^\prime} \mid \chi \rangle_V \in W_{r(v)}$ we assume
\begin{gather*}
val_{S_1,\zeta}( \langle\langle t^\prime \mid \psi \wedge v \in t \rangle_{(V^\prime \cup V_0) - \{ v \}} \mid \chi \rangle_{(V \cup V_0) - \{ v \}} ) = \qquad\qquad \\
\qquad\qquad val_{S_2,\zeta}( \langle\langle t^\prime \mid \psi \wedge v \in t \rangle_{(V^\prime \cup V_0) - \{ v \}} \mid \chi \rangle_{(V \cup V_0) - \{ v \}} )
\end{gather*}
and
\begin{gather*}
val_{S_1,\zeta}( \langle\langle t^\prime \mid \psi \wedge v \notin t \rangle_{(V^\prime \cup V_0) - \{ v \}} \mid \chi \rangle_{(V \cup V_0) - \{ v \}} ) = \qquad\qquad \\
\qquad\qquad val_{S_2,\zeta}( \langle\langle t^\prime \mid \psi \wedge v \notin t \rangle_{(V^\prime \cup V_0) - \{ v \}} \mid \chi \rangle_{(V \cup V_0) - \{ v \}} ) \; ,
\end{gather*}
hence
\begin{gather*}
\langle\langle val_{S_1,\zeta(v \mapsto a)}(t^\prime) \mid a \in val_{S_1,\zeta}(t) \wedge val_{S_1,\zeta(v \mapsto a)}(\psi) = 1 \rangle \mid val_{S_1,\zeta(v \mapsto a)}(\chi) = 1 \rangle = \quad \\
\quad \langle\langle val_{S_2,\zeta(v \mapsto a)}(t^\prime) \mid a \in val_{S_2,\zeta}(t) \wedge val_{S_2,\zeta(v \mapsto a)}(\psi) = 1 \rangle \mid val_{S_2,\zeta(v \mapsto a)}(\chi) = 1 \rangle
\end{gather*}
and
\begin{gather*}
\langle\langle val_{S_1,\zeta(v \mapsto a)}(t^\prime) \mid a \notin val_{S_1,\zeta}(t) \wedge val_{S_1,\zeta(v \mapsto a)}(\psi) = 1 \rangle \mid val_{S_1,\zeta(v \mapsto a)}(\chi) = 1 \rangle = \quad \\
\quad \langle\langle val_{S_2,\zeta(v \mapsto a)}(t^\prime) \mid a \notin val_{S_2,\zeta}(t) \wedge val_{S_2,\zeta(v \mapsto a)}(\psi) = 1 \rangle \mid val_{S_2,\zeta(v \mapsto a)}(\chi) = 1 \rangle \; .
\end{gather*}
Then we must also have
\begin{gather*}
\langle\langle val_{S_1,\zeta(v \mapsto a)}(t^\prime) \mid val_{S_1,\zeta(v \mapsto a)}(\psi) = 1 \rangle \mid val_{S_1,\zeta(v \mapsto a)}(\chi) = 1 \rangle = \qquad \\
\qquad \langle\langle val_{S_2,\zeta(v \mapsto a)}(t^\prime) \mid val_{S_2,\zeta(v \mapsto a)}(\psi) = 1 \rangle \mid val_{S_2,\zeta(v \mapsto a)}(\chi) = 1 \rangle
\end{gather*}
with arbitrary objects $a$, hence
\[ val_{S_1,\zeta^\prime}(\langle\langle t^\prime \mid \psi \rangle_{V^\prime} \mid \chi \rangle_V) = val_{S_2,\zeta^\prime}(\langle\langle t^\prime \mid \psi \rangle_{V^\prime} \mid \chi \rangle_V) \]
holds for an arbitrary variable assignment $\zeta^\prime: V \rightarrow B$. Then by induction $\boldsymbol{\Delta}_{r(v),\zeta^\prime}(S_1) = \boldsymbol{\Delta}_{r(v),\zeta^\prime}(S_2)$ holds. This implies
\begin{align*}
\boldsymbol{\Delta}_{r,\zeta}(S_1) &= \Big\{ \bigcup\limits_{a \in val_{S_1,\zeta}(t)} \Delta_a \mid \Delta \in \boldsymbol{\Delta}_{r(v),\zeta(v \mapsto a)}(S_1) \;\text{for all}\; a \in val_{S_1,\zeta}(t) \; \Big\} \\
&= \Big\{ \bigcup\limits_{a \in val_{S_2,\zeta}(t)} \Delta_a \mid \Delta \in \boldsymbol{\Delta}_{r(v),\zeta(v \mapsto a)}(S_2) \;\text{for all}\; a \in val_{S_2,\zeta}(t) \; \Big\} \\
&= \boldsymbol{\Delta}_{r,\zeta}(S_2) \; .
\end{align*}

\paragraph*{\bf (5)}

Let $r$ be a choice rule \textbf{choose} $v \in \{ x \mid x \in \textit{Atoms\/} \wedge x \in t \}$ \textbf{do} $r(v)$ \textbf{enddo} with $V_0 = fr(t)$, $V = fr(r(v))$, and let $\zeta: (V \cup V_0) - \{ v \} \rightarrow B$ be a variable assignment. For all $\langle\langle t^\prime \mid \psi \rangle_{V^\prime} \mid \chi \rangle_V \in W_{r(v)}$ we assume that 
\begin{gather*}
val_{S_1,\zeta}( \langle\langle t^\prime \mid \psi \rangle_{(V^\prime \cup V_0) - \{ v \}} \mid \chi \wedge v \in \textit{Atoms\/} \wedge v \in t \rangle_{(V \cup V_0) - \{ v \}} ) = \qquad\qquad \\
\qquad\qquad val_{S_2,\zeta}( \langle\langle t^\prime \mid \psi \rangle_{(V^\prime \cup V_0) - \{ v \}} \mid \chi \wedge v \in \textit{Atoms\/} \wedge v \in t \rangle_{(V \cup V_0) - \{ v \}} )
\end{gather*}
and
\begin{gather*}
val_{S_1,\zeta}( \langle\langle t^\prime \mid \psi \rangle_{(V^\prime \cup V_0) - \{ v \}} \mid \chi \wedge (v \notin \textit{Atoms\/} \vee v \in t) \rangle_{(V \cup V_0) - \{ v \}} ) = \qquad\qquad \\
\qquad\qquad val_{S_2,\zeta}( \langle\langle t^\prime \mid \psi \rangle_{(V^\prime \cup V_0) - \{ v \}} \mid \chi \wedge (v \notin \textit{Atoms\/} \vee v \in t) \rangle_{(V \cup V_0) - \{ v \}} )
\end{gather*}
hold, which implies
\begin{align*}
& \langle\langle val_{S_1,\zeta(v \mapsto a)}(t^\prime) \mid val_{S_1,\zeta(v \mapsto a)}(\psi) = 1 \rangle \mid val_{S_1,\zeta(v \mapsto a)}(\chi) = 1 \wedge \\
& \hspace*{6cm} a \in \textit{Atoms\/} \wedge a \in val_{S_1,\zeta}(t) \rangle \\
= & \langle\langle val_{S_2,\zeta(v \mapsto a)}(t^\prime) \mid val_{S_2,\zeta(v \mapsto a)}(\psi) = 1 \rangle \mid val_{S_2,\zeta(v \mapsto a)}(\chi) = 1 \wedge \\
& \hspace*{6cm} a \in \textit{Atoms\/} \wedge a \in val_{S_2,\zeta}(t) \rangle
\end{align*}
and
\begin{align*}
& \langle\langle val_{S_1,\zeta(v \mapsto a)}(t^\prime) \mid val_{S_1,\zeta(v \mapsto a)}(\psi) = 1 \rangle \mid val_{S_1,\zeta(v \mapsto a)}(\chi) = 1 \wedge \\
& \hspace*{6cm} (a \notin \textit{Atoms\/} \vee a \in val_{S_1,\zeta}(t)) \rangle \\
= & \langle\langle val_{S_2,\zeta(v \mapsto a)}(t^\prime) \mid val_{S_2,\zeta(v \mapsto a)}(\psi) = 1 \rangle \mid val_{S_2,\zeta(v \mapsto a)}(\chi) = 1 \wedge \\
& \hspace*{6cm} (a \notin \textit{Atoms\/} \vee a \in val_{S_2,\zeta}(t)) \rangle \; .
\end{align*}
Then we must also have
\begin{gather*}
\langle\langle val_{S_1,\zeta(v \mapsto a)}(t^\prime) \mid val_{S_1,\zeta(v \mapsto a)}(\psi) = 1 \rangle \mid val_{S_1,\zeta(v \mapsto a)}(\chi) = 1 \rangle = \qquad \\
\qquad \langle\langle val_{S_2,\zeta(v \mapsto a)}(t^\prime) \mid val_{S_1,\zeta(v \mapsto a)}(\psi) = 1 \rangle \mid val_{S_1,\zeta(v \mapsto a)}(\chi) = 1 \rangle
\end{gather*}
for arbitrary objects $a$, hence $val_{S_1,\zeta^\prime}( \langle\langle t^\prime \mid \psi \rangle_{V^\prime} \mid \chi \rangle_V ) = val_{S_2,\zeta^\prime}( \langle\langle t^\prime \mid \psi \rangle_{V^\prime} \mid \chi \rangle_V )$ holds for any arbitrary value assignment $\zeta^\prime: V \rightarrow B$. By induction we get $\boldsymbol{\Delta}_{r(v),\zeta^\prime}(S_1) = \boldsymbol{\Delta}_{r(v),\zeta^\prime}(S_2)$, which implies
\begin{align*}
\boldsymbol{\Delta}_{r,\zeta}(S_1) &= \bigcup\limits_{a \in \textit{Atoms\/} \atop a \in val_{S_1,\zeta}(t)} \boldsymbol{\Delta}_{r(a),\zeta(v \mapsto a)}(S_1) \\
&= \bigcup\limits_{a \in \textit{Atoms\/} \atop a \in val_{S_2,\zeta}(t)} \boldsymbol{\Delta}_{r(a),\zeta(v \mapsto a)}(S_2) \\
&= \boldsymbol{\Delta}_{r,\zeta}(S_2) \; .
\end{align*}

\paragraph*{\bf (6)}

Let $r$ be a call rule $t_0 \leftarrow N(t_1, \dots, t_n)$ with $V = \bigcup\limits_{i=0}^n fr(t_i)$, and let $t_0 = f(t_1^\prime, \dots, t_m^\prime)$. Then we get
\begin{align*}
val_{S_i,\zeta}( \langle\langle (t_1^\prime,\dots,t_m^\prime,t_1,\dots,t_n) \mid \textbf{true}  & \rangle_V \mid \textbf{true} \rangle_V ) = \\
& \langle\langle (val_{S_i,\zeta}(t_1^\prime), \dots, val_{S_i,\zeta}(t_m^\prime), val_{S_i,\zeta}(t_1), \dots, val_{S_i,\zeta}(t_n) \rangle\rangle \; . 
\end{align*}
If $S_1$, $S_2$ coincide on $(W_r,\zeta)$, we have $val_{S_1,\zeta}(t_i^\prime) = val_{S_2,\zeta}(t_i^\prime)$ for all $1 \le i \le m$ and $val_{S_1,\zeta}(t_i) = val_{S_2,\zeta}(t_i)$ for all $1 \le i \le n$. With the locations $\ell_i = (f, (val_{S_i,\zeta}(t_1^\prime), \dots, val_{S_i,\zeta}(t_m^\prime)))$ and the objects $v_i$ that result as output of $N$ for the input $(val_{S_i,\zeta}(t_1), \dots, val_{S_i,\zeta}(t_n))$ we get $\ell_1 = \ell_2$ and hence
\[ \boldsymbol{\Delta}_{r,\zeta}(S_1) = \{ (\ell_1, v_1) \} = \{ (\ell_2, v_2) \} = \boldsymbol{\Delta}_{r,\zeta}(S_2) \; . \]

This completes the proof of the proposition.\qed

\end{proof}

\subsection{Examples}

Before we proceed with the decisive {\em reduction lemma} for checking the branching condition in some state $S$ we first illustrate the concept of bounded exploration witnesses by some examples. As bounded exploration witnesses $W$ determine the set of update sets in a state $S$, we only need to consider the values of the terms in $W$ and not the whole state.

\begin{example}\label{bsp-bew1}

Let us pick up Example \ref{bsp-parity} and construct a bounded exploration witness.

For the assignment rule $\textit{mode\/} := \text{progress}$ we get $\{ \langle\langle \text{progress} \mid \textbf{true} \rangle_{\emptyset} \mid \textbf{true} \rangle_{\emptyset} \}$, which we can abbreviate as $\{ \langle\langle \text{progress} \rangle_{\emptyset} \rangle_{\emptyset} \}$. Analogously, for $\textit{set\/} := \textit{Atoms\/}$ we obtain the bounded exploration witness $\{ \langle\langle \textit{Atoms\/} \rangle_{\emptyset} \rangle_{\emptyset} \}$, and for $\textit{parity\/} := \textbf{false}$ we get $\{ \langle\langle \textbf{false} \rangle_{\emptyset} \rangle_{\emptyset} \}$.

Then the parallel rule \textbf{par} $\textit{mode\/} := \text{progress}$ $\textit{set\/} := \textit{Atoms\/}$ $\textit{parity\/} := \textbf{false}$ \textbf{endpar} gives rise to the bounded exploration witness
\[ \{ \langle\langle \text{progress} \rangle_{\emptyset} \rangle_{\emptyset} , \langle\langle \textit{Atoms\/} \rangle_{\emptyset} \rangle_{\emptyset} , \langle\langle \textbf{false} \rangle_{\emptyset} \rangle_{\emptyset} \} \; , \]
and for the branching rule \textbf{if} $\textit{mode\/} = \text{init}$ \textbf{then} \dots \textbf{endif}, where the dots stand for the previous parallel rule, we obtain the bounded exploration witness
\begin{gather*}
\{ \langle\langle \textit{mode\/} = \text{init} \rangle_{\emptyset} \rangle_{\emptyset} , 
\langle\langle \text{progress} \mid \textit{mode\/} = \text{init} \rangle_{\emptyset} \rangle_{\emptyset} , \hspace*{4cm} \\
\hspace*{2cm} \langle\langle \textit{Atoms\/} \mid \textit{mode\/} = \text{init} \rangle_{\emptyset} \rangle_{\emptyset} ,
\langle\langle \textbf{false} \mid \textit{mode\/} = \text{init} \rangle_{\emptyset} \rangle_{\emptyset} ,
\langle\langle \rangle_{\emptyset} \rangle_{\emptyset} \} \; .
\end{gather*}
Here we can omit the last witness term $\langle\langle \rangle_{\emptyset} \rangle_{\emptyset} \}$, as its evaluation in any state results in the same constant value $\langle\langle \rangle\rangle$.

The assignment rules $\textit{set\/} := \textit{set\/} - \textit{Pair\/}(x,x)$ and $\textit{parity\/} := \neg \textit{parity\/}$ give us the bounded exploration witnesses $\{ \langle\langle \textit{set\/} - \textit{Pair\/}(x,x) \rangle_{\{ x \}} \rangle_{\{ x \}} \}$ and $\{ \langle\langle \neg \textit{parity\/} \rangle_{\emptyset} \rangle_{\emptyset} \}$.

Then the choice rule \textbf{choose} $x \in \textit{set\/}$ \textbf{do par} \dots \textbf{endpar enddo} gives rise to the bounded exploration witness
\begin{gather*}
\{ \langle\langle \textit{set\/} - \textit{Pair\/}(x,x) \rangle_{\{ x \}} \mid x \in \textit{set\/} \rangle_{\emptyset} ,
\langle\langle \neg \textit{parity\/} \rangle_{\emptyset} \mid x \in \textit{set\/} \rangle_{\emptyset} , \qquad\qquad \\
\qquad\qquad \langle\langle \textit{set\/} - \textit{Pair\/}(x,x) \rangle_{\{ x \}} \mid x \notin \textit{set\/} \rangle_{\emptyset} ,
\langle\langle \neg \textit{parity\/} \rangle_{\emptyset} \mid x \notin \textit{set\/} \rangle_{\emptyset} \} \; .
\end{gather*}

Proceeding this way we obtain the bounded exploration witness $\{ \langle\langle \textit{parity\/} \rangle_{\emptyset} \rangle_{\emptyset} \}$ for the parallel rule \textbf{par} $\textit{Output\/} := \textit{parity\/}$ $\textit{Halt\/} := \textbf{true}$ \textbf{endpar}. 

Then the branching subrule \textbf{if} $\textit{set\/} \neq \emptyset$ \textbf{then} \dots \textbf{else} \dots \textbf{endif} gives rise to the bounded exploration witness
\begin{gather*}
\{ \langle\langle \textit{set\/} \neq \emptyset \rangle_{\emptyset} \rangle_{\emptyset} ,
\langle\langle \textit{set\/} - \textit{Pair\/}(x,x) \textit{set\/} \neq \emptyset \rangle_{\{ x \}} \mid x \in \textit{set\/} \rangle_{\emptyset} , \\
\langle\langle \neg \textit{parity\/} \mid \textit{set\/} \neq \emptyset \rangle_{\emptyset} \mid x \in \textit{set\/} \rangle_{\emptyset} ,
\langle\langle \textit{set\/} - \textit{Pair\/}(x,x) \mid \textit{set\/} \neq \emptyset \rangle_{\{ x \}} \mid x \notin \textit{set\/} \rangle_{\emptyset} , \\
\langle\langle \neg \textit{parity\/} \mid \textit{set\/} \neq \emptyset \rangle_{\emptyset} \mid x \notin \textit{set\/} \rangle_{\emptyset} ,
\langle\langle \textit{parity\/} \mid \textit{set\/} = \emptyset \rangle_{\emptyset} \rangle_{\emptyset} \} \; .
\end{gather*}
Note that in the fourth term we can simplify $\textit{set\/} - \textit{Pair\/}(x,x)$ to simply \textit{set\/}.

Finally, the (standard) bounded exploration witness for the rule $r$ of this ASM becomes
\begin{gather*}
W_r = \{ \langle\langle \textit{mode\/} = \text{init} \rangle_{\emptyset} \rangle_{\emptyset} , 
\langle\langle \text{progress} \mid \textit{mode\/} = \text{init} \rangle_{\emptyset} \rangle_{\emptyset} ,
\langle\langle \textit{Atoms\/} \mid \textit{mode\/} = \text{init} \rangle_{\emptyset} \rangle_{\emptyset} , \\
\langle\langle \textbf{false} \mid \textit{mode\/} = \text{init} \rangle_{\emptyset} \rangle_{\emptyset} ,
\langle\langle \textit{mode\/} = \text{progress} \rangle_{\emptyset} \rangle_{\emptyset} , 
\langle\langle \textit{set\/} \neq \emptyset \mid \textit{mode\/} = \text{progress} \rangle_{\emptyset} \rangle_{\emptyset} , \\
\langle\langle \textit{set\/} - \textit{Pair\/}(x,x) \mid \textit{mode\/} = \text{progress} \wedge \textit{set\/} \neq \emptyset \rangle_{\{ x \}} \mid x \in \textit{set\/} \rangle_{\emptyset} , \\
\langle\langle \neg \textit{parity\/} \mid \textit{mode\/} = \text{progress} \wedge \textit{set\/} \neq \emptyset \rangle_{\emptyset} \mid x \in \textit{set\/} \rangle_{\emptyset} , \\
\langle\langle \textit{set\/} \mid \textit{mode\/} = \text{progress} \wedge \textit{set\/} \neq \emptyset \rangle_{\emptyset} \mid x \notin \textit{set\/} \rangle_{\emptyset} , \\
\langle\langle \neg \textit{parity\/} \mid \textit{mode\/} = \text{progress} \wedge \textit{set\/} \neq \emptyset \rangle_{\emptyset} \mid x \notin \textit{set\/} \rangle_{\emptyset} , \\
\langle\langle \textit{parity\/} \mid \textit{mode\/} = \text{progress} \wedge \textit{set\/} = \emptyset \rangle_{\emptyset} \rangle_{\emptyset} \} \; .
\end{gather*}

If we assume an encoding of init and progress by 0 and 1, respectively, then in an initial state $\textit{mode\/} = \text{init}$ holds, and the values of the witness terms in $W$---let $A$ be the set of atoms---become (in the order above)
\[ \langle\langle 1 \rangle\rangle, \langle\langle 0 \rangle\rangle, \langle\langle A \rangle\rangle, \langle\langle 0 \rangle\rangle, \langle\langle 0 \rangle\rangle, \langle\langle \rangle\rangle, \langle\rangle, \langle\rangle, \langle\langle \rangle\rangle, \langle\langle \rangle\rangle, \langle\langle \rangle\rangle \; . \]

The only update set in this state is $\Delta = \{ ((\textit{mode\/}, ()), 1), ((\textit{set\/}, ()), A), ((\textit{parity\/}, ()), 0) \}$. We see that the set $W$ is partitioned into sets of witness terms that coincide on the state. Each subset in the partition contributes a unique critical value for $\Delta$. As each multiset contains at most one element, there can only be a single update set.

All other states satisfy $\textit{mode\/} = \text{progress}$ and either $\textit{set\/} \neq \emptyset$ or $\textit{set\/} = \emptyset$. In the former case the values of the witness terms in $W$ become
\begin{gather*}
\langle\langle 0 \rangle\rangle, \langle\langle \rangle\rangle, \langle\langle \rangle\rangle, \langle\langle \rangle\rangle, \langle\langle 1 \rangle\rangle, \langle\langle 1 \rangle\rangle, \langle\langle val_S(\textit{set\/}) - \{ a \} \rangle \mid a \in val_S(\textit{set\/}) \rangle, \\
\langle\langle val_S(\neg \textit{parity\/}) \rangle \mid a \in val_S(\textit{set\/}) \rangle, \langle\langle val_S(\textit{set\/}) \rangle \mid a \notin val_S(\textit{set\/}) \rangle, \\
\langle\langle val_S(\neg \textit{parity\/}) \rangle \mid a \notin val_S(\textit{set\/}) \rangle, \langle\langle \rangle\rangle \; .
\end{gather*}

Update sets in such a state take the form
\[ \Delta_a = \{ ((\textit{set\/}, ()), val_S(\textit{set\/}) - \{ a \}) , ((\textit{parity\/}, ()), \neg val_S(\textit{parity\/})) \} \]
for all $a \in val_S(\textit{set\/})$. Again we obtain a partition of $W$, and each subset in the partition contributes a unique critical value for an update set $\Delta_a$. The different update sets correspond to values $a \in val_S(\textit{set\/})$ in the outer condition $\chi$ of witness terms $\langle\langle t \mid \psi \rangle_{V^\prime} \mid \chi \rangle_V$.

In the latter case the values of the witness terms in $W$ become
\[ \langle\langle 0 \rangle\rangle, \langle\langle \rangle\rangle, \langle\langle \rangle\rangle, \langle\langle \rangle\rangle, \langle\langle 1 \rangle\rangle, \langle\langle 0 \rangle\rangle, \langle\rangle, \langle\rangle, \langle\langle \rangle \mid a \in A \rangle, \langle\langle \rangle \mid a \in A \rangle, \langle\langle val_S(\textit{parity\/}) \rangle\rangle \; . \]

The only update set in this state is $\Delta = \{ ((\textit{Output\/}, ()), val_S(\textit{parity\/})), ((\textit{Halt\/}, ()), 1) \}$. Here we obtain another partition of $W$, which defines the update set $\Delta$. The critical values in $\Delta$ are again the values in the multisets resulting from the evaluation of the terms in $W$.

\end{example}

While Example \ref{bsp-bew1} shows the relationship between the values of witness terms in a bounded exploration witness $W$ and the update sets, it also shows that the constructed (standard) exploration witness $W_r$ may not be always the best one.

\begin{example}\label{bsp-bew2}

We continue Example \ref{bsp-icasm2}. The inner choice subrule defines the bounded exploration witness $\langle\langle x \rangle_{\{ x \}} \mid x \in \textit{Atoms\/} \rangle_{\emptyset} , \langle\langle \text{next} \rangle_{\emptyset} \rangle_{\emptyset} \}$, and the parallel composition of the two inner branching rules gives rise to the bounded exploration witness
\begin{gather*}
\{ \langle\langle \textbf{true} \mid R(p) \rangle_{\emptyset} \rangle_{\emptyset} , \langle\langle \textbf{false} \mid \neg R(p) \rangle_{\emptyset} \rangle_{\emptyset} , \langle\langle \textbf{true} \mid \neg R(p) \rangle_{\emptyset} \rangle_{\emptyset} , \\
\langle\langle R(p) \rangle_{\emptyset} \rangle_{\emptyset} , \langle\langle \neg R(p) \rangle_{\emptyset} \rangle_{\emptyset} \} \; .
\end{gather*}
Then the parallel composition of the two outer branching rules defines the bounded exploration witness
\begin{align*}
W_r = \{ & \langle\langle \textit{mode\/} = \text{init} \rangle_{\emptyset} \rangle_{\emptyset} ,
\langle\langle x \mid \textit{mode\/} = \text{init} \rangle_{\{ x \}} \mid x \in \textit{Atoms\/} \rangle_{\emptyset} , \\
& \langle\langle \text{next} \mid \textit{mode\/} = \text{init} \rangle_{\emptyset} \rangle_{\emptyset} ,
\langle\langle \textit{mode\/} = \text{next} \rangle_{\emptyset} \rangle_{\emptyset} ,
\langle\langle \textbf{true} \mid \textit{mode\/} = \text{next} \wedge \mid R(p) \rangle_{\emptyset} \rangle_{\emptyset} , \\
& \langle\langle \textbf{false} \mid \textit{mode\/} = \text{next} \wedge \mid \neg R(p) \rangle_{\emptyset} \rangle_{\emptyset} ,
\langle\langle \textbf{true} \mid \textit{mode\/} = \text{next} \wedge \mid \neg R(p) \rangle_{\emptyset} \rangle_{\emptyset} , \\
& \langle\langle R(p) \mid \textit{mode\/} = \text{next} \rangle_{\emptyset} \rangle_{\emptyset} , 
\langle\langle \neg R(p) \mid \textit{mode\/} = \text{next} \rangle_{\emptyset} \rangle_{\emptyset} \}
\end{align*}
of this ASM.

Again we can encode the constants init and next by 0 and 1, respectively. Then in an initial state $\textit{mode\/} = \text{init}$ holds, and the values of the witness terms in $W$ (in the order above) become
\[ \langle\langle 1 \rangle\rangle , \langle\langle a \rangle \mid a \in A \rangle , \langle\langle 1 \rangle\rangle , 
\langle\langle 0 \rangle\rangle , \langle\langle \rangle\rangle , \langle\langle \rangle\rangle , 
\langle\langle \rangle\rangle , \langle\langle \rangle\rangle , \langle\langle \rangle\rangle \} \; , \]
where $A$ is the fixed set of atoms. Then the update sets in this state take the form $\Delta_a = \{ ((p, ()), a) , ((\textit{mode\/}, ()), 1) \}$ for all $a \in A$. 

As in Example \ref{bsp-bew1} we see that the different update sets depend on the values of the witness terms, in particular their outer conditions, while the critical values in the update sets correspond to the values in the inner multisets.

In other states $\textit{mode\/} = \text{next}$ holds. Then the values of the witness terms in $W$ become
\[ \langle\langle 0 \rangle\rangle , \langle\langle \rangle \mid a \in A \rangle , \langle\langle \rangle\rangle , 
\langle\langle 1 \rangle\rangle , \langle\langle 1 \rangle\rangle , \langle\langle \rangle\rangle , 
\langle\langle \rangle\rangle , \langle\langle 1 \rangle\rangle , \langle\langle 0 \rangle\rangle \} \]
or 
\[ \langle\langle 0 \rangle\rangle , \langle\langle \rangle \mid a \in A \rangle , \langle\langle \rangle\rangle , 
\langle\langle 1 \rangle\rangle , \langle\langle \rangle\rangle , \langle\langle 0 \rangle\rangle , 
\langle\langle 1 \rangle\rangle , \langle\langle 0 \rangle\rangle , \langle\langle 1 \rangle\rangle \} \; , \]
depending on whether $R(p)$ holds in the state or not. In the former case there is exactly one update set $\Delta = \{ ((\textit{Output\/}, ()), 1) , ((\textit{Halt\/}, ()), 1) \}$, and in the latter case there is exactly one update set $\Delta = \{ ((\textit{Output\/}, ()), 0) , ((\textit{Halt\/}, ()), 1) \}$. In both cases the critical values in the update sets are determined by the values of the inner multisets.

\end{example}

\paragraph*{\bf Remark.}

In the two examples we observed the close relationship between update sets $\Delta \in \boldsymbol{\Delta}_r(S)$ for a state $S$ and the values of witness terms $\alpha \in W_r$ in $S$. We proved this relationship in general in Lemma \ref{lem-bew}. Actually, we also observed that some witness terms $\alpha \in W_r$ may not contribute to the update sets $\Delta \in \boldsymbol{\Delta}_r(S)$. Therefore, we can strengthen Lemma \ref{lem-bew} requiring coincidence of $val_{S_1}(\alpha)$ and $val_{S_2}(\alpha)$ only for those witness terms $\alpha$ that contribute to $\boldsymbol{\Delta}_r(S_1)$ and $\boldsymbol{\Delta}_r(S_2)$, respectively.

\subsection{A Reduction Lemma}

We will now show that BC satisfaction only depends on $W$-similarity classes for a fixed bounded exploration witness $W$.

\begin{definition}\label{def-w-similarity}\rm

Let $M$ be an ASM with rule $r$ and bounded exploration witness $W$. For a state $S$ of $M$ we obtain an equivalence class $\sim_S$ on $W$ by $\alpha \sim_S \alpha^\prime$ iff $val_S(\alpha) = val_S(\alpha^\prime)$. Then two states $S$ and $S^\prime$ are called {\em $W$-similar} iff $\sim_S \; = \;  \sim_{S^\prime}$ holds.

\end{definition}

\begin{lemma}\label{lem-bc3}

Let $M$ be an ASM with rule $r$ and bounded exploration witness $W$. For a state $S$ of $M$ and an update set $\Delta \in \boldsymbol{\Delta}_r(S)$ assume that $(S_1, \Delta_1)$ satisfies BC with respect to $(S, \Delta)$. Let $S_2$ be another state of $M$, and assume that the update set $\Delta_2 \in \boldsymbol{\Delta}_r(S_2)$ is isomorphic to $\Delta$. If $S_1 + \Delta_1$ and $S_2 + \Delta_2$ are $W$-similar, then also $(S_2, \Delta_2)$ satisfies BC with respect to $(S, \Delta)$.

\end{lemma}

\begin{proof}

Without loss of generality we can assume that the sets of atoms $A_1$ and $A_2$ in the base sets of $S_1$ and $S_2$, respectively, are disjoint. Otherwise we could replace $S_2$ by an isomorphic copy $S_2^\prime$ with a base set that is isomorphic to the one of $S_1$. If $\eta$ is an isomorphism with $\eta(S_2) = S_2^\prime$, we get $\Delta^\prime = \eta(\Delta_2) \in \boldsymbol{\Delta}_r(S_2^\prime)$, and clearly $\Delta_2^\prime$ is isomorphic to $\Delta$. Furthermore, as $S_1 + \Delta_1$ and $S_2 + \Delta_2$ are $W$-similar, also $S_1 + \Delta_1$ and $S_2^\prime + \Delta_2^\prime$ are $W$-similar. Thus, if we show the lemma for $S_1$ and $S_2^\prime$, we get that $(S_2^\prime, \Delta_2^\prime)$ satisfies BC. Then we can apply Lemma \ref{lem-bc2}, which shows that also $(S_2, \Delta_2)$ satisfies BC as claimed.

Now let $\sigma, \tau$ be isomorphisms with $\sigma(\Delta) = \Delta_1$ and $\tau(\Delta) = \Delta_2$. As we assume that $(S_1, \Delta_1)$ satisfies BC, we know that $\sigma(\boldsymbol{\Delta}_r(S + \Delta)) = \boldsymbol{\Delta}_r(S_1 + \Delta_1)$ holds.

Consider the state $S_3 = \sigma \tau^{-1} (S_2)$ and the update set $\Delta_3 = \sigma \tau^{-1} (\Delta_2) = \Delta_1 \in \boldsymbol{\Delta}_r(S_3)$. Hence also $\sigma \tau^{-1} (S_2 + \Delta_2) = S_3 + \Delta_1$ holds. Then for every witness term $\alpha \in W$ such that $val_{S_3 + \Delta_1}(\alpha)$ contributes to an update set we get 
\[ val_{S_3 + \Delta_1}(\alpha) = val_{\sigma \tau^{-1} (S_2 + \Delta_2)}(\alpha) = \sigma \tau^{-1} (val_{S_2 + \Delta_2}(\alpha)) \stackrel{(*)}{=} val_{S_1 + \Delta_1}(\alpha) \; . \]

That is, $S_3 + \Delta_1$ and $S_1 + \Delta_1$ coincide on the subset of $W$ containing all witness terms needed to determine $\boldsymbol{\Delta}_r(S_3 + \Delta_1)$. Using the defining property of bounded exploration witnesses plus the strengthening remark above we conclude that $\boldsymbol{\Delta}_r(S_1 + \Delta_1) = \boldsymbol{\Delta}_r(S_3 + \Delta_1)$, hence $\sigma(\boldsymbol{\Delta}_r(S + \Delta)) = \sigma \tau^{-1} (\boldsymbol{\Delta}_r(S_2 + \Delta_2))$ and further $\tau(\boldsymbol{\Delta}_r(S + \Delta)) =\boldsymbol{\Delta}_r(S_2 + \Delta_2)$ as claimed.

It remains to show the validity of the equality $(*)$ above. For this let 
\[ \alpha = \langle\langle t(x_1, \dots, x_n, y_1, \dots, y_m) \mid \varphi(x_1, \dots, x_n, y_1, \dots, y_m) \rangle_{ \{ y_1, \dots, y_m \} } \mid \psi(y_1, \dots, y_m) \rangle_{\emptyset} \]
and let $val_{S_2 + \Delta_2}(\alpha) = \langle M_1, \dots, M_r \rangle$ with $(c_0, \dots, c_k) \in M_i$. 
Then there exist atoms $a_1, \dots, a_m$ and objects $b_1, \dots, b_n$ such that $\psi(a_1, \dots, a_m)$, $\varphi(b_1, \dots, b_n, a_1, \dots, a_m)$ and $(c_0, \dots, c_k) = t(b_1, \dots, b_n,$ $a_1, \dots, a_m)$ hold in $S_2 + \Delta_2$.

If $(c_0, \dots, c_k)$ is an update tuple in $\Delta^\prime \in \boldsymbol{\Delta}_r(S_2 + \Delta_2)$, i.e.  for some dynamic function symbol $f \in \Upsilon$ we have $((f, (c_1, \dots, c_k)), c_0) \in \Delta^\prime$, then also $t(b_1^\prime, \dots, b_n^\prime, a_1, \dots, a_m)$ $= (c_0^\prime, \dots, c_k^\prime)$ defines an update $((f, (c_1^\prime, \dots, c_k^\prime)), c_0^\prime) \in \Delta^\prime$, provided that in $S_2 + \Delta_2$ $\varphi(b_1^\prime, \dots, b_n^\prime, a_1, \dots, a_m)$ holds. This follows from the construction of witness terms for parallel rules.

Likewise, for atoms $a_1^\prime, \dots, a_m^\prime$ with $a_i^\prime = \sigma_i(a_i)$ such that $\psi(a_1^\prime, \dots, a_m^\prime)$ holds in $S_2 + \Delta_2$ we get $\sigma_1 \dots \sigma_m(\Delta^\prime) \in \boldsymbol{\Delta}_r(S_2 + \Delta_2)$. Thus, the value $val_{S_2 + \Delta_2}(\alpha)$ of the witness term $\alpha$ is completely determined by $\boldsymbol{\Delta}_r(S_2 + \Delta_2)$.

As $S_2 + \Delta_2$ and $S_1 + \Delta_1$ are $W$-similar, the same holds for $val_{S_1 + \Delta_1}(\alpha)$, hence $\sigma \tau^{-1} (val_{S_2 + \Delta_2}(\alpha))$ $= \pi(val_{S_1 + \Delta_1}(\alpha))$ holds with some isomorphism $\pi$. Due to Lemma \ref{lem-bc2} we get $\pi = id$, which shows $(*)$ and completes the proof of the lemma.\qed

\end{proof}

Note that in the proof the update tuples do not depend on the choice of the witness terms $\alpha \in W$. Furthermore, if there is no update tuple defined by $\alpha$, then $val_{S_2 + \Delta_2}(\alpha)$ does not contribute to $\boldsymbol{\Delta}_r(S_2 + \Delta_2)$, and according to the strengthening remark above such witness terms can be disregarded.

Lemma \ref{lem-bc3} implies that in order to check the branching condition for a given state $S$ and an update set $\Delta \in \boldsymbol{\Delta}_r(S)$ it suffices to consider those pairs $(S^\prime, \Delta^\prime)$, where $S^\prime + \Delta^\prime$ are representatives of $W$-similarity classes. The number of $W$-similarity classes only depends on $W$ (and thus on the rule $r$ of the ASM $M$), but not on the input structure. Furthermore, there must exist an isomorphism $\sigma$ with $\sigma(\Delta) = \Delta^\prime$, so it may be the case that not all $W$-similarity classes need to be considered.

\subsection{Construction of Representatives of $W$-Similarity Classes}

The $W$-similarity classes correspond to partitions $\lambda$ of $W$, say $W = \bigcup\limits_{i=1}^\ell \{ \alpha_1^i , \dots, \alpha_{\lambda_i}^i \}$. For these we obtain a formula
\begin{equation}\label{eq-similarity}
\varphi_{\lambda} \equiv \bigwedge_{1 \le i \le \ell} \left ( \bigwedge_{1 \le j < k \le \ell} \alpha_j^i = \alpha_k^i \wedge \bigwedge_{2 \le i^\prime \le \ell \atop i < i^\prime} \alpha_1^i \neq \alpha_1^{i^\prime} \right )
\end{equation}
such that $val_S(\varphi_{\lambda}) = 1$ holds iff the state $S$ belongs to the $W$-similarity class defined by $\lambda$. As $W$ contains nested conditions from the branching subrules of the ASM rule $r$ of the ASM $M$, many of these formulae $\varphi_{\lambda}$ are equivalent to \textbf{false}, i.e. the partition $\lambda$ does not define a $W$-similarity class.

First note that we can rewrite the rule $r$ as a bounded parallel rule \textbf{par} $r_1 \dots r_k$ \textbf{endpar}, where each subrule takes the form \textbf{if} $\varphi_{\lambda}$ \textbf{then} $r_{\lambda}$ \textbf{endif} for different partitions $\lambda$; we can restrict the partitions to those that define $W$-similarity classes. We can further rewrite the ASM rule $r$ in such a way that for any two states $S$, $S^\prime$ that are not $W$-similar the sets of update sets $\boldsymbol{\Delta}_r(S)$ and $\boldsymbol{\Delta}_r(S^\prime)$ do not contain isomorphic update sets\footnote{This results from the fact that the critical values in any update set depend on the values of the witness terms in $W$, and for states $S, S^\prime$ that are not $W$-similar there always exist witness terms $\alpha, \alpha^\prime \in W$ such that $val_S(\alpha) = val_S(\alpha^\prime)$ and $val_{S^\prime}(\alpha) \neq val_{S^\prime}(\alpha^\prime)$ hold.}.

As the values of the witness terms in $W$ in a state $S$ determine the update sets in $\boldsymbol{\Delta}_r(S)$, we obtain a transition relation $\rightarrow$ on partitions of $W$ that define $W$-similarity classes. Due to the normalisation of the ASM rule $r$ we have $\lambda_1 = \lambda_2$, whenever $\lambda_1 \rightarrow \lambda^\prime$ and $\lambda_2 \rightarrow \lambda^\prime$ hold. This transition relation will be used in the construction of states $S^\prime$ and update sets $\Delta^\prime \in \boldsymbol{\Delta}_r(S^\prime)$ such that $\Delta^\prime$ is isomorphic to a given update set $\Delta \in \boldsymbol{\Delta}_r(S)$ and $S^\prime + \Delta^\prime$ is a representative of a $W$-similarity class.

Before we show in general how to construct $W$-similarity classes that enable the checking of the branching condition in polynomial time we look at the essentials in a few simple examples.

\begin{example}\label{bsp-bccheck2}

Let us look at the ASM from Example \ref{bsp-icasm2} with the bounded exploration witness $W$ determined in Example \ref{bsp-bew2}. Let $\alpha_0, \dots, \alpha_8$ denote the witness terms in $W$ in the order used there. Then we get the following partitions defining $W$-similarity classes of states.

\begin{description}

\item[$\lambda_1 = (\{\alpha_0,\alpha_2\}, \{\alpha_1\}, \{\alpha_3\}, \{\alpha_4, \alpha_5, \alpha_6, \alpha_7, \alpha_8\})$.] In this case the values of the witness terms in a state $S$ satisfying $\varphi_{\lambda_1}$ are
\[ \langle\langle 1 \rangle\rangle , \langle\langle a \rangle \mid a \in A \rangle , \langle\langle 1 \rangle\rangle , \langle\langle 0 \rangle\rangle , \langle\langle \rangle\rangle , \langle\langle \rangle\rangle , \langle\langle \rangle\rangle , \langle\langle \rangle\rangle , \langle\langle \rangle\rangle \; . \]
Equivalently, this is the case, if $\textit{mode\/} = \text{init}$ holds in $S$.

\item[$\lambda_2 = (\{\alpha_0,\alpha_8\}, \{\alpha_1\}, \{\alpha_2, \alpha_5, \alpha_6\}, \{\alpha_3, \alpha_4, \alpha_7\})$.] In this case the values of the witness terms in a state $S$ satisfying $\varphi_{\lambda_2}$ are
\[ \langle\langle 0 \rangle\rangle , \langle\langle \rangle \mid a \in A \rangle , \langle\langle \rangle\rangle , \langle\langle 1 \rangle\rangle , \langle\langle 1 \rangle\rangle , \langle\langle \rangle\rangle , \langle\langle \rangle\rangle , \langle\langle 1 \rangle\rangle , \langle\langle 0 \rangle\rangle \; , \]
where set $A$ of atoms satisfies $| A | > 1$. This is the case, if $\textit{mode\/} = \text{next}$ and $R(p)$ hold in $S$.

\item[$\lambda_3 = (\{\alpha_0,\alpha_8\}, \{\alpha_2, \alpha_5, \alpha_6\}, \{\alpha_1, \alpha_3, \alpha_4, \alpha_7\})$.] In this case we must have $| A | = 1$, and the values of the witness terms in a state $S$ satisfying $\varphi_{\lambda_3}$ are
\[ \langle\langle 0 \rangle\rangle , \langle\langle \rangle \rangle , \langle\langle \rangle\rangle , \langle\langle 1 \rangle\rangle , \langle\langle 1 \rangle\rangle , \langle\langle \rangle\rangle , \langle\langle \rangle\rangle , \langle\langle 1 \rangle\rangle , \langle\langle 0 \rangle\rangle \; . \]
Again, $\textit{mode\/} = \text{next}$ and $R(p)$ must hold in $S$.

\item[$\lambda_4 = (\{\alpha_0, \alpha_5, \alpha_7\}, \{\alpha_3, \alpha_6, \alpha_8\}, \{\alpha_1\}, \{\alpha_2, \alpha_4\})$.] In this case we must have $| A | > 1$, and the values of the witness terms in a state $S$ satisfying $\varphi_{\lambda_4}$ are
\[ \langle\langle 0 \rangle\rangle , \langle\langle \rangle \mid a \in A \rangle , \langle\langle \rangle\rangle , \langle\langle 1 \rangle\rangle , \langle\langle \rangle\rangle , \langle\langle 0 \rangle\rangle , \langle\langle 1 \rangle\rangle , \langle\langle 0 \rangle\rangle , \langle\langle 1 \rangle\rangle \; . \]
This is the case, if $\textit{mode\/} = \text{next}$ and $\neg R(p)$ hold in $S$.

\item[$\lambda_5 = (\{\alpha_0, \alpha_5, \alpha_7\}, \{\alpha_3, \alpha_6, \alpha_8\}, \{\alpha_1, \alpha_2, \alpha_4\})$.] The difference to the previous case is that we must have $| A | = 1$. The values of the witness terms in a state $S$ satisfying $\varphi_{\lambda_5}$ are
\[ \langle\langle 0 \rangle\rangle , \langle\langle \rangle \rangle , \langle\langle \rangle\rangle , \langle\langle 1 \rangle\rangle , \langle\langle \rangle\rangle , \langle\langle 0 \rangle\rangle , \langle\langle 1 \rangle\rangle , \langle\langle 0 \rangle\rangle , \langle\langle 1 \rangle\rangle \; . \]
Again, $\textit{mode\/} = \text{next}$ and $\neg R(p)$ hold in $S$.

\end{description}

Then we obtain the following possible transitions between these partitions:

\begin{description}

\item[$\lambda_1 \rightarrow \lambda_2$.] In this case the corresponding update set is 
\[ \Delta = \{ ((\textit{mode\/}, ()), 1) , ((p, ()), a) \} \]
with $a \in A$ satisfying $R(a)$, in particular $\langle a \rangle \in val_S(\alpha_1)$ and $\langle 1 \rangle \in val_S(\alpha_2)$.

\item[$\lambda_1 \rightarrow \lambda_3$.] This is the same as the previous case with the only difference that we have $A = \{ a \}$ with $R(a)$.

\item[$\lambda_1 \rightarrow \lambda_4$.] In this case the corresponding update set is also
\[ \Delta = \{ ((\textit{mode\/}, ()), 1) , ((p, ()), a) \} \]
with $a \in A$, but satisfying $\neg R(a)$, in particular $\langle a \rangle \in val_S(\alpha_1)$ and $\langle 1 \rangle \in val_S(\alpha_2)$.

\item[$\lambda_1 \rightarrow \lambda_5$.] This is the same as the previous case with the only difference that we have $A = \{ a \}$ with $\neg R(a)$.

\item[$\lambda_2 \rightarrow \lambda_2$ and $\lambda_3 \rightarrow \lambda_3$.] In these cases we obtain the update set
\[ \Delta^\prime = \{ ((\textit{Output\/}, ()), 1) , ((\textit{Halt\/}, ()), 1) \} \]
with $\langle 1 \rangle \in val_S(\alpha_4)$.

\item[$\lambda_4 \rightarrow \lambda_4$ and $\lambda_5 \rightarrow \lambda_5$.] In these cases we obtain the update set
\[ \Delta^{\prime\prime} = \{ ((\textit{Output\/}, ()), 0) , ((\textit{Halt\/}, ()), 1) \} \]
with $\langle 0 \rangle \in val_S(\alpha_5)$ and $\langle 1 \rangle \in val_S(\alpha_6)$.

\end{description}

Now let $S$ be a state of $M$ and let $\Delta \in \boldsymbol{\Delta}_r(S)$. For simplicity let us concentrate only on the case $|A| > 1$. Then we get the following four cases:

\begin{enumerate}\renewcommand{\labelenumi}{(\arabic{enumi})}

\item Assume that $S \models \varphi_{\lambda_1}$ and $S + \Delta \models \varphi_{\lambda_2}$ hold. Then the relevant $W$-similarity classes are defined by $\lambda_2$ and $\lambda_4$. For the latter case we need to determine a representative $S^\prime + \Delta^\prime$ and an isomorphism $\sigma$ with $\sigma(\Delta) = \Delta^\prime$. We know that $S^\prime \models \varphi_{\lambda_1}$ must hold, hence for the transition $\lambda_1 \rightarrow \lambda_4$ we need $a \in A$ with $\neg R(a)$. Therefore, it suffices to consider $S^\prime = S$. Then the isomorphism $\sigma$ is given by the local insignificance condition---assuming this has been checked---for $\Delta^\prime \in \boldsymbol{\Delta}_r(S)$. Note that we have $((p, ()), a) \in \Delta$ with $R(a)$ and $((p, ()), a^\prime) \in \Delta$ with $\neg R(a^\prime)$.

\item Assume that $S \models \varphi_{\lambda_1}$ and $S + \Delta \models \varphi_{\lambda_2}$ hold. Again, the relevant $W$-similarity classes are defined by $\lambda_2$ and $\lambda_4$. For the former case we can choose again $S^\prime = S$ and $\Delta^\prime \in \boldsymbol{\Delta}_r(S)$. Then we have $((p, ()), a) \in \Delta$ with $\neg R(a)$ and $((p, ()), a^\prime) \in \Delta$ with $R(a^\prime)$.

\item In the case that $S \models \varphi_{\lambda_2}$ and $S + \Delta \models \varphi_{\lambda_2}$ hold we simply take $S^\prime = S$ and $\Delta^\prime = \Delta$.

\item The case that $S \models \varphi_{\lambda_4}$ and $S + \Delta \models \varphi_{\lambda_4}$ hold is analogous to the previous one.

\end{enumerate}

\end{example}

Example \ref{bsp-bccheck2} is very simple, as in all cases we get all representatives of $W$-similarity classes of the form $S^\prime + \Delta^\prime$ with $S^\prime = S$ and $\Delta^\prime \in \boldsymbol{\Delta}_r(S)$, which is just the case of Lemma \ref{lem-bc1} without a need to exploit Lemma \ref{lem-bc3}. The isomorphism $\sigma$ with $\sigma(\Delta) = \Delta^\prime$ is given by the local insignificance condition, which ensures that we have isomorphisms between all update sets in $\boldsymbol{\Delta}_r(S)$.

\begin{example}\label{bsp-bccheck3}

Let us modify the previous example slightly such that we need to consider states $S^\prime \neq S$, which then shows that the branching condition cannot be reduced to update sets on the same state. We simply delay the output by an arbitrary, but fixed number of steps, i.e. we modify the ASM rule as follows, using some fixed integer constant $k > 1$:

\begin{tabbing}
xxx\=xxxxxx\=xxxxxx\=xxxxxx\=xxxxxx\=xxxxxx\= \kill
\> \textbf{par} \> \textbf{if} \> \textit{mode\/} = 0 \\
\>\> \textbf{then} \> \textit{choose} $x \in \textit{Atoms\/}$ \\
\>\>\>\> \textbf{do} $ p := x$ \textbf{enddo} \\
\>\>\> \textit{mode\/} $:= 1$ \\
\>\> \textbf{if} \> $\textit{mode\/} > 0 \wedge \textit{mode\/} < k$\\
\>\> \textbf{then} \> $\textit{mode\/} := \textit{mode\/} + 1$ \\
\>\> \textbf{if} \> $\textit{mode\/} = k$\\
\>\> \textbf{then} \> \textbf{if} \> $R(p)$ \\
\>\>\> \textbf{then} \> $\textit{Output\/} := 1$ \\
\>\>\>\> $\textit{Halt\/} := 1$ \\
\>\>\> \textbf{else} \> $\textit{Output\/} := 0$ \\
\>\>\>\> $\textit{Halt\/} := 1$ \\
\> \textbf{endpar}
\end{tabbing}

Using the standard construction of a bounded exploration witness $W$ we obtain the following witness terms $\alpha_0, \dots, \alpha_{10}$:
\begin{align*}
W = \{ & \langle\langle \textit{mode\/} = 0 \rangle_{\emptyset} \rangle_{\emptyset} ,
\langle\langle x \mid \textit{mode\/} = 0 \rangle_{\{ x \}} \mid x \in \textit{Atoms\/} \rangle_{\emptyset} ,
\langle\langle 1 \mid \textit{mode\/} = 0 \rangle_{\emptyset} \rangle_{\emptyset} , \\
& \langle\langle 0 < \textit{mode\/} < k \rangle_{\emptyset} \rangle_{\emptyset} ,
\langle\langle \textit{mode\/} + 1 \mid 0 < \textit{mode\/} < k \rangle_{\emptyset} \rangle_{\emptyset} ,
\langle\langle \textit{mode\/} = k \rangle_{\emptyset} \rangle_{\emptyset} , \\
& \langle\langle 1 \mid \textit{mode\/} = k \rangle_{\emptyset} \rangle_{\emptyset} , 
\langle\langle 0 \mid \textit{mode\/} = k \wedge \mid \neg R(p) \rangle_{\emptyset} \rangle_{\emptyset} , \\
& \langle\langle 1 \mid \textit{mode\/} = k \wedge \mid \neg R(p) \rangle_{\emptyset} \rangle_{\emptyset} ,
\langle\langle R(p) \mid \textit{mode\/} = k \rangle_{\emptyset} \rangle_{\emptyset} , 
\langle\langle \neg R(p) \mid \textit{mode\/} = k \rangle_{\emptyset} \rangle_{\emptyset} \}
\end{align*}

Then the following partitions of $W$ define $W$-similarity classes:

\begin{description}

\item[$\lambda_1 = (\{ \alpha_0, \alpha_2 \} , \{ \alpha_1 \} , \{ \alpha_3, \alpha_5 \} , \{ \alpha_4, \alpha_6, \alpha_7, \alpha_8, \alpha_9, \alpha_{10} \})$.] In states satisfying $\varphi_{\lambda_1}$ we have $\textit{mode\/} = 0$ and the values of the witness terms---let $A$ be again the fixed set of atoms---are
\[ \langle\langle 1 \rangle\rangle , \langle\langle a \rangle \mid a \in A \rangle , \langle\langle 1 \rangle\rangle , \langle\langle 0 \rangle\rangle , \langle\langle \rangle\rangle , \langle\langle 0 \rangle\rangle , \langle\langle \rangle\rangle , \langle\langle \rangle\rangle , \langle\langle \rangle\rangle , \langle\langle \rangle\rangle , \langle\langle \rangle\rangle \; . \]

\item[$\lambda_2 = (\{ \alpha_0, \alpha_5 \} , \{ \alpha_1 \} , \{ \alpha_2, \alpha_6, \alpha_7, \alpha_8, \alpha_9, \alpha_{10} \} , \{ \alpha_3 \}, \alpha_4 \})$.] In states $S$ satisfying $\varphi_{\lambda_2}$ we have $0 < \textit{mode\/} < k$, and we must have $|A| > 1$. The values of the witness terms in such a state are
\[ \langle\langle 0 \rangle\rangle , \langle\langle \rangle \mid a \in A \rangle , \langle\langle \rangle\rangle , \langle\langle 1 \rangle\rangle , \langle\langle val_S(\textit{mode\/}) + 1 \rangle\rangle , \langle\langle 0 \rangle\rangle , \langle\langle \rangle\rangle , \langle\langle \rangle\rangle , \langle\langle \rangle\rangle , \langle\langle \rangle\rangle , \langle\langle \rangle\rangle \; . \]

\item[$\lambda_2^\prime = (\{ \alpha_0, \alpha_5 \} , \{ \alpha_1, \alpha_2, \alpha_6, \alpha_7, \alpha_8, \alpha_9, \alpha_{10} \} , \{ \alpha_3 \}, \alpha_4 \})$.] In states $S$ satisfying $\varphi_{\lambda_2^\prime}$ we have $0 < \textit{mode\/} < k$, and we must have $|A| = 1$. The values of the witness terms in such a state are the same as in the previous case with $val_S(\alpha_1) = \langle\langle \rangle\rangle$.

\item[$\lambda_3 = (\{ \alpha_0, \alpha_3, \alpha_{10} \} , \{ \alpha_1 \} , \{ \alpha_2, \alpha_4, \alpha_7 \} , \{ \alpha_5, \alpha_6, \alpha_8, \alpha_9 \})$.] In states $S$ satisfying $\varphi_{\lambda_3}$ we have $\textit{mode\/} = k$ and $R(p)$ with $|A| > 1$. The values of the witness terms in such a state are
\[ \langle\langle 0 \rangle\rangle , \langle\langle \rangle \mid a \in A \rangle , \langle\langle \rangle\rangle , \langle\langle 0 \rangle\rangle , \langle\langle \rangle\rangle , \langle\langle 1 \rangle\rangle , \langle\langle 1 \rangle\rangle , \langle\langle \rangle\rangle , \langle\langle 1 \rangle\rangle , \langle\langle 1 \rangle\rangle , \langle\langle 0 \rangle\rangle \; . \]

\item[$\lambda_3^\prime = (\{ \alpha_0, \alpha_3, \alpha_{10} \} , \{ \alpha_1, \alpha_2, \alpha_4, \alpha_7 \} , \{ \alpha_5, \alpha_6, \alpha_8, \alpha_9 \})$.] The only difference to the previous case is that we have $|A| = 1$, hence $val_S(\alpha_1) = \langle\langle \rangle\rangle$.

\item[$\lambda_4 = (\{ \alpha_0, \alpha_3, \alpha_7, \alpha_9 \} , \{ \alpha_1 \} , \{ \alpha_2, \alpha_4, \alpha_8 \} , \{ \alpha_5, \alpha_6, \alpha_{10} \})$.] In states $S$ satisfying $\varphi_{\lambda_4}$ we have $\textit{mode\/} = k$ and $\neg R(p)$ with $|A| > 1$. The values of the witness terms in such a state are
\[ \langle\langle 0 \rangle\rangle , \langle\langle \rangle \mid a \in A \rangle , \langle\langle \rangle\rangle , \langle\langle 0 \rangle\rangle , \langle\langle \rangle\rangle , \langle\langle 1 \rangle\rangle , \langle\langle 1 \rangle\rangle , \langle\langle 0 \rangle\rangle , \langle\langle \rangle\rangle , \langle\langle 0 \rangle\rangle , \langle\langle 1 \rangle\rangle \; . \]

\item[$\lambda_4^\prime = (\{ \alpha_0, \alpha_3, \alpha_7, \alpha_9 \} , \{ \alpha_1, \alpha_2, \alpha_4, \alpha_8 \} , \{ \alpha_5, \alpha_6, \alpha_{10} \})$.] The only difference to the previous case is that we have $|A| = 1$, hence $val_S(\alpha_1) = \langle\langle \rangle\rangle$.

\end{description}

For the transition relation $\rightarrow$ between these $W$-similarity classes let us concentrate on $|A| > 1$, so we can ignore $\lambda_2^\prime , \lambda_3^\prime$, and $\lambda_4^\prime$. Then we have the following transitions:

\begin{description}

\item[$\lambda_1 \rightarrow \lambda_2$.] In this case the update set has the form $\Delta = \{ ((\textit{mode\/}, ()), 1) , ((p, ()), a) \}$ with some $a \in A$.

\item[$\lambda_2 \rightarrow \lambda_2$.] In this case the update set has the form $\Delta = \{ ((\textit{mode\/}, ()), m+1) \}$ with $m = val_S(\textit{mode\/}) < k$.

\item[$\lambda_2 \rightarrow \lambda_3$.] In this case the update set has the form $\Delta = \{ ((\textit{mode\/}, ()), k) \}$, provided that $R(p)$ holds.

\item[$\lambda_2 \rightarrow \lambda_4$.] In this case the update set has also  the form $\Delta = \{ ((\textit{mode\/}, ()), k) \}$, provided that $\neg R(p)$ holds.

\item[$\lambda_3 \rightarrow \lambda_3$.] In this case the update set is $\Delta = \{ ((\textit{Ouput\/}, ()), 1) , ((\textit{Halt\/}, ()), 1) \}$.

\item[$\lambda_4 \rightarrow \lambda_4$.] In this case the update set is $\Delta = \{ ((\textit{Ouput\/}, ()), 0) , ((\textit{Halt\/}, ()), 1) \}$.

\end{description}

Now let $S$ be a state of $M$, and let $\Delta \in \boldsymbol{\Delta}_r(S)$ be an update set in this state. The interesting case arises, if $S \models \varphi_{\lambda_2}$ and $S + \Delta \models \varphi_{\lambda_3}$ hold (or $S + \Delta \models \varphi_{\lambda_4}$, respectively). In the former case we need to determine a state $S^\prime$ and an update set $\Delta^\prime \in \boldsymbol{\Delta}_r(S^\prime)$ such that $S^\prime + \Delta^\prime \models \varphi_{\lambda_4}$; we also need an isomorphism $\sigma$ with $\sigma(\Delta) = \Delta^\prime$. 

Different to Example \ref{bsp-bccheck2} we must have $S^\prime \neq S$, because the updates in $\Delta$ do not affect $R(p)$. We therefore proceed backwards through the run leading to $S$, until we reach a state, in which there is a possible update affecting $R(p)$. That is, we consider a run $S_0, \dots, S_\ell = S$ such that $S_0 \not\models R(p)$, but $S_i \models R(p)$ for $i = 1, \dots, \ell$. Then we have update sets $\Delta_i \in \boldsymbol{\Delta}_r(S_i)$ with $S_i + \Delta_i = S_{i+1}$ for $0 \le i \le \ell - 1$. We must also have an update set $\Delta_0^\prime \in \boldsymbol{\Delta}_r(S_0)$ with $S_0 + \Delta_0^\prime \models \neg R(p)$.

We can assume that the local insignificance condition has already been checked  for the state $S_0$. Hence there exists an isomorphism $\sigma$ with $\sigma(\Delta_0) = \Delta_0^\prime$. The isomorphism results from the choices in the rule $r$ and has been determined as part of the local insignificance check.

We can further assume that the branching condition has already been checked for states $S_0, \dots, S_{\ell-1}$. Then we obtain states $S_0^\prime = S_0$ and $S_{i+1}^\prime = S_i^\prime + \Delta_i^\prime$ and update sets $\Delta_i^\prime = \sigma(\Delta_i) \in \boldsymbol{\Delta}_r(S_i^\prime)$ for $i = 0, \dots, \ell - 1$. We take $S^\prime = S_\ell^\prime$ and $\Delta^\prime = \sigma(\Delta) \in \boldsymbol{\Delta}_r(S^\prime)$. As $S_i^\prime \models \neg R(p)$ holds for $i > 0$, we obtain $S^\prime \models \varphi_{\lambda_2}$ and $S^\prime + \Delta^\prime \models \varphi_{\lambda_4}$, which gives us the desired representative $S^\prime + \Delta^\prime$ for the $W$-similarity class defined by $\lambda_4$.

For the latter case, where $S + \Delta \models \varphi_{\lambda_4}$ holds, we proceed analogously to construct $S^\prime$ and $\Delta^\prime \in \boldsymbol{\Delta}_r(S^\prime)$ together with an isomorphism $\sigma$ with $\sigma(\Delta) = \Delta^\prime$ such that $S^\prime + \Delta^\prime \models \varphi_{\lambda_3}$.

Finally, note that the backward procession through a run can be easily realised on a Turing machine, if update sets used in the run as well as previous values for the updates are preserved. As the length of a run is polynomially bounded and the same holds for the size of the update sets and the critical values in the updates, the overhead for this backward procession through a run starting from state $S$ can be done in polynomial time.

\end{example}

\begin{example}\label{bsp-bccheck1}

As a third example for the construction of representatives of $W$-similarity classes we briefly look into the Parity Example \ref{bsp-icasm1} with a bounded exploration witness $W = \{ \alpha_0, \dots, \alpha_{10} \}$ constructed in Example \ref{bsp-bew1}. We obtain the following partitions of $W$ that define $W$-similarity classes (again, we let $A$ denote the set of atoms):

\begin{enumerate}\renewcommand{\labelenumi}{(\arabic{enumi})}

\item $\lambda_1 = (\{ \alpha_0, \alpha_1 \} , \{ \alpha_2 \} , \{ \alpha_3, \alpha_4 \} , \{ \alpha_5, \alpha_6, \alpha_7, \alpha_{10} \} , \{ \alpha_8, \alpha_9 \})$. In a state $S$ satisfying $\varphi_{\lambda_1}$ we must have that $\textit{mode\/} = \text{init}$ holds. Furthermore, we must have $| val_S(\textit{set\/}) | = 1$ and $| A - val_S(\textit{set\/}) | \neq 1$.

\item $\lambda_2 = (\{ \alpha_0, \alpha_1 \} , \{ \alpha_2 \} , \{ \alpha_3, \alpha_4 \} , \{ \alpha_5, \alpha_8, \alpha_9, \alpha_{10} \} , \{ \alpha_6, \alpha_7 \})$. In a state $S$ satisfying $\varphi_{\lambda_2}$ we must have that $\textit{mode\/} = \text{init}$ holds. Furthermore, we must have $| val_S(\textit{set\/}) | \neq 1$ and $| A - val_S(\textit{set\/}) | = 1$.

\item $\lambda_3 = (\{ \alpha_0, \alpha_1 \} , \{ \alpha_2 \} , \{ \alpha_3, \alpha_4 \} , \{ \alpha_5, \alpha_{10} \} , \{ \alpha_6, \alpha_7 \} , \{ \alpha_8, \alpha_9 \})$. In a state $S$ satisfying $\varphi_{\lambda_3}$ we must have that $\textit{mode\/} = \text{init}$ holds. Furthermore, we must have $| val_S(\textit{set\/}) | \neq 1$, $| A - val_S(\textit{set\/}) | \neq 1$, and $| val_S(\textit{set\/}) | \neq 1 | A - val_S(\textit{set\/}) |$.

\item $\lambda_4 = (\{ \alpha_0, \alpha_1 \} , \{ \alpha_2 \} , \{ \alpha_3, \alpha_4 \} , \{ \alpha_5, \alpha_{10} \} , \{ \alpha_6, \alpha_7, \alpha_8, \alpha_9 \})$. In a state $S$ satisfying $\varphi_{\lambda_4}$ we must have that $\textit{mode\/} = \text{init}$, and $| val_S(\textit{set\/}) |  = | A - val_S(\textit{set\/}) | \neq 1$ hold.

\item $\lambda_5 = (\{ \alpha_0, \alpha_1 \} , \{ \alpha_2 \} , \{ \alpha_3, \alpha_4 \} , \{ \alpha_5, \alpha_6, \alpha_7, \alpha_8, \alpha_9, \alpha_{10} \})$. In a state $S$ satisfying $\varphi_{\lambda_5}$ we must have that $\textit{mode\/} = \text{init}$, and $| val_S(\textit{set\/}) |  = | A - val_S(\textit{set\/}) | = 1$ hold, in particular $|A| = 2$.

\item[] The cases (1) - (5) cover initial states, in which the values of the witness terms are
\begin{gather*}
\langle\langle 1 \rangle\rangle , \langle\langle 1 \rangle\rangle , \langle\langle A \rangle\rangle , \langle\langle 0 \rangle\rangle , \langle\langle 0 \rangle\rangle , \langle\langle \rangle\rangle , \langle\langle \rangle \mid a \in val_S(\textit{set\/}) \rangle , \langle\langle \rangle \mid a \in val_S(\textit{set\/}) \rangle , \\
\langle\langle \rangle \mid a \in A - val_S(\textit{set\/}) \rangle , \langle\langle \rangle \mid a \in A - val_S(\textit{set\/}) \rangle, \langle\langle \rangle\rangle \; .
\end{gather*}

\item $\lambda_6 = (\{ \alpha_0 \}, \{ \alpha_1, \alpha_2, \alpha_3, \alpha_{10} \} , \{ \alpha_4, \alpha_5 \} , \{ \alpha_6 \} , \{ \alpha_7 \}, \{ \alpha_8 \}, \{ \alpha_9 \})$. In a state $S$ satisfying $\varphi_{\lambda_6}$ we have that $\textit{mode\/} = \text{progress}$, $val_S(\textit{set\/}) \neq \emptyset$, $val_S(\textit{set\/}) \neq A$ and $| val_S(\textit{set\/}) | \neq | A - val_S(\textit{set\/}) |$ hold.

\item $\lambda_7 = (\{ \alpha_0 \}, \{ \alpha_1, \alpha_2, \alpha_3, \alpha_{10} \} , \{ \alpha_4, \alpha_5 \} , \{ \alpha_6 \} , \{ \alpha_7, \alpha_9 \}, \{ \alpha_8 \})$. In a state $S$ satisfying $\varphi_{\lambda_7}$ we have that $\textit{mode\/} = \text{progress}$, $val_S(\textit{set\/}) \neq \emptyset$ and $| val_S(\textit{set\/}) | = | A - val_S(\textit{set\/}) | \neq 1$ hold.

\item $\lambda_8 = (\{ \alpha_0, \alpha_7, \alpha_9 \}, \{ \alpha_1, \alpha_2, \alpha_3, \alpha_{10} \} , \{ \alpha_4, \alpha_5 \} , \{ \alpha_6 \} , \{ \alpha_8 \} )$. In a state $S$ satisfying $\varphi_{\lambda_8}$ we have that $\textit{mode\/} = \text{progress}$, $val_S(\textit{parity\/}) = 1$ and $| val_S(\textit{set\/}) | = | A - val_S(\textit{set\/}) | = 1$ hold. In particular, we have $|A| = 2$.

\item $\lambda_9 = (\{ \alpha_0 \}, \{ \alpha_1, \alpha_2, \alpha_3, \alpha_{10} \} , \{ \alpha_4, \alpha_5, \alpha_7, \alpha_9 \} , \{ \alpha_6 \} , \{ \alpha_8 \} )$. In a state $S$ satisfying $\varphi_{\lambda_9}$ we have that $\textit{mode\/} = \text{progress}$, $val_S(\textit{parity\/}) = 0$ and $| val_S(\textit{set\/}) | = | A - val_S(\textit{set\/}) | = 1$ hold, in particular $|A| = 2$.

\item $\lambda_{10} = (\{ \alpha_0 \}, \{ \alpha_1, \alpha_2, \alpha_3, \alpha_{10} \} , \{ \alpha_4, \alpha_5 \} , \{ \alpha_6 \} , \{ \alpha_7 \} , \alpha_8, \alpha_9 \} )$. In a state $S$ satisfying $\varphi_{\lambda_{10}}$ we have that $\textit{mode\/} = \text{progress}$ and $val_S(\textit{set\/}) = A$ hold.

\item[] In the cases (6) - (10) the values of the witness terms are
\begin{gather*}
\langle\langle 0 \rangle\rangle , \langle\langle \rangle\rangle , \langle\langle \rangle\rangle , \langle\langle \rangle\rangle , \langle\langle 1 \rangle\rangle , \langle\langle 1 \rangle\rangle , \langle\langle val_S(\textit{set\/}) - \{ a \} \rangle \mid a \in val_S(\textit{set\/}) \rangle , \\
\langle\langle \neg val_S(\textit{parity\/}) \rangle \mid a \in val_S(\textit{set\/}) \rangle ,
\langle\langle val_S(\textit{set\/}) \rangle \mid a \in A - val_S(\textit{set\/}) \rangle , \\
\langle\langle \neg val_S(\textit{parity\/}) \rangle \mid a \in A - val_S(\textit{set\/}) \rangle, \langle\langle \rangle\rangle \; .
\end{gather*}

\item $\lambda_{11} = (\{ \alpha_0, \alpha_5, \alpha_{10} \} , \{ \alpha_4 \}, \{ \alpha_1, \alpha_2, \alpha_3 \} , \{ \alpha_6, \alpha_7 \}, \{ \alpha_8, \alpha_9 \})$. In a state $S$ satisfying $\varphi_{\lambda_{11}}$ we have $\textit{mode\/} = \text{progress}$, $val_S(\textit{set\/}) = \emptyset$, $val_S(\textit{parity\/}) = 0$ and $| A | > 1$.

\item $\lambda_{12} = (\{ \alpha_0, \alpha_5, \alpha_{10} \} , \{ \alpha_4 \}, \{ \alpha_1, \alpha_2, \alpha_3 , \alpha_8, \alpha_9\} , \{ \alpha_6, \alpha_7 \} )$. In a state $S$ satisfying $\varphi_{\lambda_{12}}$ we have $\textit{mode\/} = \text{progress}$, $val_S(\textit{set\/}) = \emptyset$, $val_S(\textit{parity\/}) = 0$ and $| A | = 1$.

\item $\lambda_{13} = (\{ \alpha_0, \alpha_5 \}, \{ \alpha_4, \alpha_{10} \} , \{ \alpha_1, \alpha_2, \alpha_3 \} , \{ \alpha_6, \alpha_7 \}, \{ \alpha_8, \alpha_9 \})$. In a state $S$ satisfying $\varphi_{\lambda_{13}}$ we have $\textit{mode\/} = \text{progress}$, $val_S(\textit{set\/}) = \emptyset$, $val_S(\textit{parity\/}) = 1$ and $| A | > 1$.

\item $\lambda_{14} = (\{ \alpha_0, \alpha_5 \}, \{ \alpha_4, \alpha_{10} \} , \{ \alpha_1, \alpha_2, \alpha_3, \alpha_8, \alpha_9 \} , \{ \alpha_6, \alpha_7 \} )$. In a state $S$ satisfying $\varphi_{\lambda_{14}}$ we have $\textit{mode\/} = \text{progress}$, $val_S(\textit{set\/}) = \emptyset$, $val_S(\textit{parity\/}) = 1$ and $| A | = 1$.

\end{enumerate}

The transition relation $\rightarrow$ is also a bit more complex than in the previous examples. We have
\begin{gather*}
\lambda_1 \rightarrow \lambda_{10} , \lambda_2 \rightarrow \lambda_{10} , \lambda_3 \rightarrow \lambda_{10} , \lambda_4 \rightarrow \lambda_{10} , \lambda_5 \rightarrow \lambda_{10} , \lambda_{10} \rightarrow \lambda_6 , \lambda_{10} \rightarrow \lambda_8 , \lambda_{10} \rightarrow \lambda_9 , \\
\lambda_{10} \rightarrow \lambda_{12} ,  \lambda_{10} \rightarrow \lambda_{14} , \lambda_6 \rightarrow \lambda_6 , \lambda_6 \rightarrow \lambda_7 , \lambda_7 \rightarrow \lambda_6 , \lambda_6 \rightarrow \lambda_{13} , \lambda_8 \rightarrow \lambda_{11} , \lambda_8 \rightarrow \lambda_{13} , \\
\lambda_9 \rightarrow \lambda_{11} , \lambda_9 \rightarrow \lambda_{13} , \lambda_{11} \rightarrow \lambda_{11} , \lambda_{12} \rightarrow \lambda_{12} , \lambda_{13} \rightarrow \lambda_{13} , \lambda_{14} \rightarrow \lambda_{14} \; . 
\end{gather*}
Note that $A$ is fixed, so not all these transitions can occur together.

However, we see that in all cases with $S \models \varphi_{\lambda_i}$, $S + \Delta \models \varphi_{\lambda_j}$ and $\lambda_i \rightarrow \lambda_j$ it suffices to consider $S^\prime = S$ and $\Delta^\prime \in \boldsymbol{\Delta}(S)$ to obtain all representatives $S^\prime + \Delta^\prime$ of $W$-similarity classes. In all these cases the required isomorphism $\sigma$ with $\sigma(\Delta) = \Delta^\prime$ results from the local insignificance condition on $S$.

\end{example}

\subsection{PTIME Verification of the Branching Condition}

With the following lemma we complete the proof that the branching condition in a state $S$ can be verified on a Turing maching in polynomial time.

\begin{proposition}\label{lem-bc4}

Let $S$ be a state of a PTIME ASM $M$ with rule $r$, and let $\Delta \in \boldsymbol{\Delta}_r(S)$. Then it can be checked in polynomial time on a simulating Turing machine, whether for every state $S^\prime$ and every $\Delta^\prime \in \boldsymbol{\Delta}_r(S^\prime)$ that is isomorphic to $\Delta$ the pair $(S^\prime, \Delta^\prime)$ satisfies BC with respect to $(S,\Delta)$.

\end{proposition}

\begin{proof}

Due to Lemma \ref{lem-bc3} it suffices to check that $(S^\prime, \Delta^\prime)$ satisfies BC only for those pairs, for which $S^\prime + \Delta^\prime$ are representatives of $W$-similarity classes. These classes only depend on the finite bounded exploration witness $W$ and not on $S$ nor $\Delta$.

Given such a pair $(S^\prime, \Delta^\prime)$ with $\Delta^\prime = \sigma(\Delta)$ for some isomorphism $\sigma$, the simulating Turing machine can determine all update sets $\bar{\Delta} \in \boldsymbol{\Delta}_r(S + \Delta)$ and write then onto some tape. The number of these update sets is polynomially bounded as well as the size of each $\bar{\Delta}$ and the critical values therein. Then the isomorphism $\sigma$ can be applied to all these update sets $\bar{\Delta}$. This is merely a syntactic replacement of atoms, hence requires at most polynomial time.

In the same way the machine can compute in polynomial time all update sets $\bar{\Delta}^\prime \in \boldsymbol{\Delta}_r(S^\prime + \Delta^\prime)$ and write these onto another tape. Then it can be checked in polynomial time, if the two sets of update sets $\sigma(\boldsymbol{\Delta}_r(S + \Delta))$ and $\bar{\Delta}^\prime \in \boldsymbol{\Delta}_r(S^\prime + \Delta^\prime)$ coincide.

Therefore, it remains to show that all representatives of $W$-similarity classes that take the form $S^\prime + \Delta^\prime$ and isomorphisms $\sigma$ with $\sigma(\Delta) = \Delta^\prime$ can be computed in polynomial time. For this consider also the run $S_0, \dots, S_k$ that leads to the given state $S$, i.e. $S_k = S$. For $j = 0, \dots, k-1$ let $\Delta_j \in \boldsymbol{\Delta}_r(S_j)$ be an update set with $S_j + \Delta_j = S_{j+1}$. In addition let $\Delta_k = \Delta$.

We can assume that the simulating Turing machine has written all these update sets $\Delta_0, \dots, \Delta_k$ onto some dedicated tape. Due to the polynomial restriction of active objects and the polynomial bound on the length of a run, the memorisation of the update sets causes only a polynomial time overhead. In addition, with every update $(\ell, v) \in \Delta_j$ let the Turing machine keep the previous value of the location $\ell$, i.e. it stores $val_{S_j}(\ell)$ together with the update. This memorisation of update sets allows the simulating Turing machine to process backwards through the run starting from the given state $S$.

Now let $\lambda_1, \dots, \lambda_m$ be all the partitions of $W$ that determine $W$-similarity classes, and let $\rightarrow$ denote the transition relation on the set of these partitions, i.e. we have $\lambda_i \rightarrow \lambda_j$ iff there are states $S_i$ and $S_j$ in the $W$-similarity classs defined by $\lambda_i$ and $\lambda_j$, respectively, for which $S_j = S_i + \Delta_i$ holds for some update set $\Delta_i \in \boldsymbol{\Delta}_r(S_i)$. Furthermore, let $\varphi_{\lambda_i}$ be the condition defined in (\ref{eq-similarity}), i.e. a state $S_i$ satisfies $\varphi_{\lambda_i}$ iff $S_i$ is in the $W$-similarity class defined by $\lambda_i$.

Assume that $S \models \varphi_{\lambda_p}$ and $S + \Delta \models \varphi_{\lambda_q}$ with $\lambda_p \rightarrow \lambda_q$. Then the pair $(S,\Delta)$ for the $W$-similarity class defined by $\lambda_q$. The simulating Turing machine can further determine the $W$-similarity class of $S^\prime + \Delta^\prime$ for each $\Delta^\prime \in \boldsymbol{\Delta}_r(S)$, which defines further pairs $(S^\prime, \Delta^\prime)$ that need to be checked. In the worst case all $\Delta^\prime \in \boldsymbol{\Delta}_r(S)$ need to be considered, in which case Lemma \ref{lem-bc1} shows that checking BC satisfaction can be done in polynomial time.

This leaves those $W$-similarity classes defined by $\lambda_s$ with $\lambda_p \rightarrow \lambda_s$, for which there is no representative $S + \Delta^\prime$. For these cases we need other states $S^\prime$ with $S^\prime \models S_{\lambda_p}$. We can exclude $S^\prime \models \varphi_{\lambda_{p^\prime}}$ for $p^\prime \neq p$, as for such states there cannot exist an update set $\Delta^\prime \in \boldsymbol{\Delta}_r(S^\prime)$.

As we have $\lambda_p \rightarrow \lambda_s$, there must be some condition $\psi$ not in $\varphi_{\lambda_p}$ that can give rise to an update set in $\boldsymbol{\Delta}_r(S^\prime)$, but not in $\boldsymbol{\Delta}_r(S)$. Hence there must exist an ancestor state $S_i$ of $S$ with update sets $\Delta_i, \Delta_i^\prime \in \boldsymbol{\Delta}_r(S_i)$ such that $S_i + \Delta_i^\prime \models \psi$, but $S_i + \Delta_i \not\models \psi$. The simulating Turing machine can find such an ancestor by backward processing in polynomial time through the run.

Fix such an ancestor state. We can assume that the local insignificance and branching conditions have already been checked successfully for all ancestor states $S_0, \dots, S_{\ell-1}$ of $S$. Then we know that there exists an isomorphism $\sigma$ with $\sigma(\Delta_i) = \Delta_i^\prime$, which arises from checking the local insignificance condition for $S_i$. Then we obtain a run $S_0^\prime, \dots, S_\ell^\prime$ with $S_j^\prime = S_j$ for all $j \le i$ and $S_{j+1}^\prime = S_j^\prime + \Delta_j^\prime$ for $j \ge i$ with update sets $\Delta_j^\prime = \sigma(\Delta_j) \in \boldsymbol{\Delta}_r(S_j^\prime)$. In particular, for $S^\prime = S_\ell^\prime$ we have an update set $\Delta^\prime = \sigma(\Delta) \in \boldsymbol{\Delta}_r(S^\prime)$, and $S^\prime + \Delta^\prime \models \varphi_{\lambda_s}$. This construction of $(S^\prime, \Delta^\prime)$ and $\sigma$ takes in total polynomial time.

The simulating Turing machine proceeds in this way, until for all $W$-similarity classes defined by $\lambda_{p^\prime}$ with $\lambda_p \rightarrow \lambda_{p^\prime}$ a representative $S^\prime + \Delta^\prime$ together with an isomorphism between $\Delta$ and $\Delta^\prime$ has been found. As the number of $W$-similarity classes only depends on $W$, the whole process requires polynomial time as claimed, which completes the proof of the lemma.\qed

\end{proof}

\section{The Capture of PTIME}\label{sec:capture}

We now present our first main result, the capture of PTIME by ICPT. For the proof that ICPT is subsumed by PTIME we integrate all propositions from the previous subsections: Propositions \ref{lem-loc-insignificance} and \ref{lem-bc4} show that the defining conditions of icASMs can be verified on Turing machines in polynomial time. This verification is necessary, because we modified the semantics of ASM rules. In the modified semantics we can only obtain update sets, if the conditions of Definition \ref{def-icasm} are satisfied. Then Propositions \ref{lem-icpt-translation} and \ref{lem-icpt-ptime-simulation} show that we can effectively translate a PTIME icASM $M$ into a PTIME Turing machine $T_M$, which accepts exactly the standard encodings of ordered versions of those input structures $I$ for $M$ that are accepted by $M$.

The proof that PTIME is subsumed by ICPT is significantly simpler, as we only have to define a PTIME icASM that creates an order on the set of atoms, then computes the standard encoding of the ordered version of the input structure with respect to this order, and finally simulates a PTIME Turing machine for the given decision problem. These are three standard steps, and the combined ASM is easily shown to be a PTIME icASM, if for the last step we use a call rule.

Let us emphasise again that this argument cannot be simply reversed, because the constructed icASM simulates a PTIME Turing machine, of which we only know that it exists, because we start from a PTIME problem. However, we do not know how such a machine looks like in general; otherwise we would already have a logic capturing PTIME. Instead we identify three restrictions that hold for the simulating ASMs and use them to define icASMs. These conditions enable the sophisticated proof of the reverse direction. The restrictions do not assume the generation of an order.

\begin{theorem}\label{thm-capture}

ICPT is a logic capturing PTIME on arbitrary finite structures.

\end{theorem}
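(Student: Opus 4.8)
The plan is to establish the two inclusions $\text{ICPT} \subseteq \text{PTIME}$ and $\text{PTIME} \subseteq \text{ICPT}$ separately.

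For the upper bound $\text{ICPT} \subseteq \text{PTIME}$ I would essentially replay the upper bound argument for CPT \cite[Thm.3]{blass:apal1999}, adding one observation to handle non-determinism. Given a PTIME icASM $\tilde{M} = (M,p(n),q(n))$ separating $(K_1,K_2)$ and an input structure $I$ presented to a Turing machine as the standard encoding of an \emph{ordered} version of $I$, I would build a deterministic PTIME Turing machine that simulates a single, canonical run of $M$. The only feature absent from CPT is the choice rule, which I would resolve deterministically by always selecting the \emph{least} atom (with respect to the order supplied by the encoding) that satisfies the choice guard. Because $M$ is an icASM, Definition~\ref{def-icasm} guarantees that all halting runs agree on the value of \textit{Output\/}; hence the canonical least-atom run accepts iff every run accepts, so this resolution is sound. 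The time accounting is inherited from CPT: a state is represented by its active part, consisting of at most $q(n)$ hereditarily finite objects each of polynomially bounded transitive closure, there are at most $p(n)$ steps, and evaluating the rule---and thus computing the selected update set---at each step is polynomial in the size of the active part. This produces a PTIME Turing machine accepting exactly the encodings of structures in $K_1$ and rejecting those in $K_2$.

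For the lower bound $\text{PTIME} \subseteq \text{ICPT}$ I would invoke the Immerman--Livchak--Vardi theorem. Given an isomorphism-closed PTIME class $P$, adjoining a linear order turns it into a PTIME class $P^{<}$ of ordered structures, which is therefore defined by an IFP[FO] sentence $\varphi$. Since membership in $P$ is independent of the chosen order, $\varphi$ is \emph{order-invariant}: for any two linear orders, $(I,<_1) \models \varphi$ iff $I \in P$ iff $(I,<_2) \models \varphi$. It thus suffices to exhibit a PTIME icASM that first manufactures a linear order on \textit{Atoms\/} and then evaluates $\varphi$ over it. For the order I would use a loop in the spirit of the Parity example (Example~\ref{bsp-parity}): maintain the set of still-unordered atoms, and at each step \textbf{choose} one such atom and append it as the new maximum of the order, halting the construction when the set is exhausted; this uses only the permitted atom-choices, builds a complete order in $|A|$ steps, and activates only polynomially many objects. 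For the evaluation I would use that CPT subsumes IFP[FO] over ordered structures, together with the trivial fact that every choice-free ASM is vacuously an icASM, so that $\text{ICPT} \supseteq \text{CPT} \supseteq \text{IFP[FO]}$ and the deterministic part of the machine can compute $\varphi$ in PTIME once the order is present.

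It then remains to verify that the combined machine is genuinely a PTIME icASM. Here I would observe that its only choices are the atom selections building the order: distinct choices yield genuinely different intermediate orders, but by order-invariance the value ultimately written to \textit{Output\/} equals the truth of ``$I \in P$'' in every halting run. Since Definition~\ref{def-icasm} constrains only the final \textit{Output\/} and \textit{Halt\/} values and not the intermediate states, the insignificant-choice condition holds, and $\tilde{M}$ accepts $I$ iff $I \in P$ and rejects $I$ otherwise, with all runs agreeing. I expect the main obstacle to be precisely this interface between order-dependence and order-invariance: one must ensure that the manufactured order enters the computation only as scratch data feeding $\varphi$ and never leaks into \textit{Output\/}, so that the icASM promise is preserved; symmetrically, on the upper-bound side, the legitimacy of the least-atom resolution rests squarely on the icASM hypothesis of output-agreement. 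The remaining ingredients---the polynomial bookkeeping for hereditarily finite sets and the simulation of IFP[FO] inside CPT---are taken over essentially verbatim from the CPT development in \cite{blass:apal1999}.
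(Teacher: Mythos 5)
Your proposal is correct and follows the same skeleton as the paper's proof: for PTIME $\subseteq$ ICPT, manufacture a linear order on \textit{Atoms\/} by repeatedly choosing an atom and appending it as the new maximum, then reduce to the ordered case; for ICPT $\subseteq$ PTIME, replay the CPT upper bound and resolve each choice by taking the least atom in the order supplied by the encoding, which is sound by output-agreement of all runs.

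There are two divergences worth flagging. First, after building the order you evaluate an order-invariant IFP[FO] sentence inside the choice-free (CPT) fragment, whereas the paper's written proof builds the binary string encoding of $(I,<)$ and simulates the PTIME Turing machine directly; both routes work and the paper's own preamble to the theorem actually announces your route, so this is cosmetic. Second, and more substantively: you verify that the order-building machine is an icASM by appealing to the \emph{semantic} condition of Definition \ref{def-icasm} (all runs agree on \textit{Output\/} by order-invariance), whereas the paper explicitly checks the stronger \emph{local insignificance condition} of Definition \ref{def-ic-local} (at each step all update sets are related by transpositions of atoms, and successor states have isomorphic families of update sets), and then invokes Lemma \ref{lem-ic-local}. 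For the bare equality ICPT $=$ PTIME your weaker verification suffices, but the paper needs the local condition to obtain Corollary \ref{cor-icpt-ptime} and, ultimately, a recursive syntax for the logic: local insignificance is what a simulating Turing machine can check in polynomial time (by applying the at most $n-1$ transpositions to polynomially bounded update sets), whereas the global semantic condition quantifies over all runs and is not effectively checkable. Relatedly, your upper bound omits this polynomial-time insignificance check, which the paper includes for the same reason. So your argument proves the theorem as stated, but to support the rest of the paper you would want to upgrade the verification of your order-building machine to the local condition, which is exactly the computation the paper carries out (the update sets $\Delta_a$ are parameterised by the chosen atom $a$ and any two are related by the transposition $(a_1,a_2)$).
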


\begin{proof}

The proof comprises two parts, an easier one showing PTIME $\subseteq$ ICPT, which in essence requires to simulate a PTIME Turing machine by an icASM, and a more difficult one showing ICPT $\subseteq$ PTIME, which integrates all other results about ICPT from the previous subsections.

\textbf{\em PTIME $\boldsymbol{\subseteq}$ ICPT.} \ Consider a PTIME problem represented by a Boolean query $\phi$ and a signature for input structures $I$ for $\phi$. Then there exists a PTIME Turing machine $T$ accepting standard encodings of ordered versions of $I$ iff $I$ satisfies $\phi$. In particular, $T$ is order-invariant. Furthermore, if $T$ accepts the standard encoding of an ordered version of $I$, it also accepts the standard encoding of any ordered version of any structure $J$ that is isomorphic to $I$.

We need to construct a PTIME icASM that simulates the given Turing machine $T$. As the input for $T$ is the standard encoding of an ordered structure $(I,<)$, whereas for an icASM we only have the unordered structure $I$ as input, we have to first create an ordered version $(I,<)$ and then build the standard encoding with respect to this order $<$. Therefore, we define a PTIME icASM that comprises three steps:

\begin{enumerate}\renewcommand{\labelenumi}{(\arabic{enumi})}

\item First we show that with a PTIME icASM, i.e. a PTIME ASM that satisfies the local insignificance and branching conditions, we can construct an arbitrary order on the set of atoms of $State(I)$, so we obtain an ordered structure $(I,<)$. 

\item Then using an ASM rule \textsc{create\_encoding} we build the standard binary encoding of $(I,<)$ (see \cite[p.88]{libkin:2004}), which can be done by a PTIME ASM without choice. In particular, the local insignificance condition is trivially satisfied. As in the first step an arbitrary order $<$ on the set of atoms was constructed, there exists an isomorphism between any of these orders. Consequently,  such an isomorphism maps the different standard encodings onto each other, which implies that the simulating ASM satisfies the branching condition. 

\item In the third and most important step we use an ASM rule \textsc{run\_simulation} to simulate $T$ (see \cite[p.289]{boerger:2003}), which is defined by calling another PTIME ASM without choice. In particular, this deterministic ASM satisfies the local insignificance condition. As the called deterministic ASM is handled as a single step of the calling ASM, the fact that $T$ is order-invariant and the class of accepted structures is closed under isomorphisms implies that also the branching condition is satisfied.

\end{enumerate}

In more detail the rule of the combined PTIME icASM looks as follows:

\begin{tabbing}
xxx\=xxxx\=xxxxxx\=xxxxxx\=xxxxxx\=xxxxxx\=xxxxxx\= \kill
\> \textbf{par} \> \textbf{if} \> \textit{mode\/} = init \\
\>\> \textbf{then} \> \textbf{par} \> \textit{mode\/} := create-order \\
\>\>\>\> $A := \textit{Atoms\/}$ \\
\>\>\>\> $A^c := \emptyset$ \\
\>\>\> \textbf{endpar} \\
\>\> \textbf{endif} \\
\>\> \textbf{if} \> \textit{mode\/} = create-order \\
\>\> \textbf{then} \> \textbf{par} \\
\>\>\>\> \textbf{choose} $a \in A$ \\
\>\>\>\> \textbf{do} \> \textbf{par} \> \textbf{forall} $a^\prime \in A^c$ \\
\>\>\>\>\>\> \textbf{do} $<(a^\prime,a) := \;\textbf{true}$ \\
\>\>\>\>\>\> \textbf{enddo} \\
\>\>\>\>\>\> $A := A - \{ a \}$ \\
\>\>\>\>\>\> $A^c := A^c \cup \{ a \}$ \\
\>\>\>\>\> \textbf{endpar} \\
\>\>\>\> \textbf{enddo} \\
\>\>\>\> \textbf{if} $|A| =1$ \\
\>\>\>\> \textbf{then} \> \textit{mode\/} := build-tm \\
\>\>\>\> \textbf{endif} \\
\>\>\> \textbf{endpar} \\
\>\> \textbf{endif} \\
\>\> \textbf{if} \> \textit{mode\/} = build-tm \\
\>\> \textbf{then} \> \textbf{par} \> \textsc{create\_encoding} \\
\>\>\>\> \textit{mode\/} := simulate-tm \\
\>\>\> \textbf{endpar} \\
\>\> \textbf{endif} \\
\>\> \textbf{if} \> \textit{mode\/} = simulate-tm \\
\>\> \textbf{then} \> \textbf{par} \> \textsc{run\_simulation} \\
\>\>\>\> \textit{mode\/} := final  \\
\>\>\> \textbf{endpar} \\
\>\> \textbf{endif} \\
\> \textbf{endpar}
\end{tabbing}

The rule \textsc{create\_encoding} is defined as

\begin{tabbing}
xxx\=xxxx\=xxxxxx\=xxxxxx\=xxxxxx\=xxxxxx\=xxxxxx\= \kill
\> \textbf{par} \> \textbf{forall} $i$ \textbf{with} $0 \le i \le n-1$ \\
\>\> \textbf{do} $\text{val}_<(i) := \textit{TheUnique\/}( \{ a \mid a \in \textit{Atoms\/} \wedge \#\{ a^\prime \in \textit{Atoms\/} \mid <(a^\prime, a) \} = i \} )$ \\
\>\> \textbf{enddo} \\
\>\> \textbf{forall} $a_1 \in \textit{Atoms\/}, \dots, a_{n_1} \in \textit{Atoms\/}$ \\
\>\> \textbf{do} \> \textbf{let} $p_1 = \#\{ a \in \textit{Atoms\/} \mid <(a,a_1) \}, \dots, p_{n_1} = \#\{ a \in \textit{Atoms\/} \mid <(a,a_{n_1}) \}$ \\
\>\>\> \textbf{in} \> $\text{pos}_{in}(1, (a_1,\dots,a_{n_1})) := \sum\limits_{j=1}^{n_1} p_j \cdot n^{n_1 - j}$ \\
\>\> \textbf{enddo} \\
\>\>\> $\vdots$ \\
\>\> \textbf{forall} $a_1 \in \textit{Atoms\/}, \dots, a_{n_k} \in \textit{Atoms\/}$ \\
\>\> \textbf{do} \> \textbf{let} $p_1 = \#\{ a \in \textit{Atoms\/} \mid <(a,a_1) \}, \dots, p_{n_k} = \#\{ a \in \textit{Atoms\/} \mid <(a,a_{n_k}) \}$ \\
\>\>\> \textbf{in} \> $\text{pos}_{in}(k, (a_1,\dots,a_{n_k})) := \sum\limits_{j=1}^{k-1} n^{n_j} + \sum\limits_{j=1}^{n_k} p_j \cdot n^{n_k - j}$ \\
\>\> \textbf{enddo} \\
\> \textbf{endpar}
\end{tabbing}

where $n = | \textit{Atoms\/} |$ and $R_1, \dots R_k$ are the input relations with $n_i = ar(R_i)$. That is, for all $1 \le i \le k$ and all tuples $\bar{a} = (a_1, \dots, a_{n_i})$ the rule determines the position $\text{pos}_{in}(i, \bar{a})$ of an input tape of $T$ containing the value $R_i(a_1,\dots,a_{n_i})$. Furthermore, the rule determines $\text{val}_<(i)$, the $(i+1)$'th atom in the order $<$, which defines a string of atoms on another input tape of $T$.

The rule \textsc{run\_simulation} is defined as

\begin{tabbing}
xxx\=xxxx\=xxxxxx\=xxxxxx\=xxxxxx\=xxxxxx\=xxxxxx\= \kill
\> \textbf{par} \> \textit{Output\/} $\leftarrow N()$ \\
\>\> \textit{Halt\/} := 1 \\
\> \textbf{endpar}
\end{tabbing}

Here the called ASM $N$ uses as shared locations all locations $(R_i,\bar{a})$ for all $1 \le i \le k$, i.e. the whole input structure $I$, and also all locations $(\text{pos}_{in}, (i, \bar{a}))$ and $(\text{val}_<, i)$ that have been set by the rule \textsc{create\_encoding}. We can assume that the Turing machine $T$ uses three tapes, one working tape that is initially empty, one read-only tape containing the standard encoding of the ordered input structure $(I, <)$, and one read-only tape containing just the sequence of atoms in the order $<$. Therefore, the ASM $N$ uses 0-ary function symbols \text{pos}, \text{in-pos} and $\text{pos}_<$, which in any state contain the positions of the read-write heads of $T$ on these three tapes, and a 0-ary function symbol \text{ctl}, which in any state contains the (control-)state\footnote{In order to avoid confusion between the states of the ASM and the states of $T$ we refer to the latter ones as {\em control-states}. We can assume that these are elements of some alphabet $\Gamma$, which are encoded by hereditarily finite sets.} of $T$. Naturally, in the initial state $S_0$ of $N$ we have $val_{S_0}(\text{pos}) = 0$, $val_{S_0}(\text{in-pos}) = 0$, $val_{S_0}(\text{pos}_<) = 0$ and $val_{S_0}(\text{ctl}) = \gamma_0 \in \Gamma$, where $\gamma_0$ is the start control-state of $T$.

Furthermore, $N$ uses a unary function symbol \text{tape}, where in any state $S$ for every integer $i$ the value of $\text{tape}(i)$ is the value of position $i$ of the working tape of $T$, in particular, $val_{S_0}(\text{tape}(i)) = \lozenge$ (the blank symbol) holds for all integers $i$ in the initial state $S_0$. Finally, $N$ uses a derived unary function symbol \text{value} defined on integers $p$ by
\[ \text{value}(p) = b \leftrightarrow p = \text{in-pos}(i,\bar{a}) \wedge R_i(\bar{a}) = b \; , \]
i.e., $\text{value}(p)$ is the value in the $p$'th position on the read-only tape of $T$ containing the standard encoding of $(I,<)$.

If $\delta$ is the transition function of $T$, then $\delta(\gamma,v,a,b) = (\gamma^\prime, v^\prime, m, m_<, m_{in})$ defines the ASM rule

\begin{tabbing}
xxx\=xxxx\=xxxxxx\=xxxxxx\=xxxxxx\=xxxxxx\=xxxxxx\= \kill
\> \textbf{forall} $i,j,k$ \textbf{with} $\textit{isInteger\/}(i) \wedge \textit{isInteger\/}(j) \wedge \textit{isInteger\/}(k)$ \\
\> \textbf{do} \> \textbf{if} \> $\text{ctl} = \gamma \wedge \text{pos} = i \wedge \text{tape}(\text{pos}) = v \wedge \text{pos}_< = j \wedge \text{val}_<(\text{pos}_<) = a$ \\
\>\>\>\>\> $\wedge \text{in-pos} = k \wedge \text{value}(\text{pos}_{in}) = b$ \\
\>\> \textbf{then} \> \textbf{par} \\
\>\>\>\> $\text{ctl} := \gamma^\prime$ \; $\text{pos} := i^\prime$ \; $\text{tape}(\text{pos}) := v^\prime$ \; $\text{pos}_< := j^\prime$ \; $\text{in-pos} := k^\prime$ \\
\>\>\> \textbf{endpar} \\
\>\> \textbf{endif} \\
\> \textbf{enddo}
\end{tabbing}

where naturally, we must have $i-1 \le i^\prime \le i+1$, $j-1 \le j^\prime \le j+1$ and $k-1 \le k^\prime \le k+1$ depending, whether the movements $m$, $m_<$ and $m_{in}$ on the three tapes are to the left, stationary or to the right. 

In addition, we can assume that $T$ has a dedicated stop control-state $\gamma_{fin} \in \Gamma$ such that when $T$ terminates in this control-state, then the read-write head of $T$ on the working tape points to a position containing either 1 for acceptance or 0 for rejection. This defines another ASM rule
\[ \textbf{if} \; \text{ctl} = \gamma_{fin} \; \textbf{then} \; \textit{Output\/} := \text{tape}(\text{pos}) \; \textbf{endif} \; . \]
The parallel composition (we can use a \textbf{par}-rule, because the domain of $\delta$ is finite) of all these rules defines the rule of $N$, which simulates the changes of $T$ on the working tape and in the final control-state returns \textit{Output\/}.

We defined an ASM which only exploits choices among atoms. We show that the conditions of Definition \ref{def-icasm} are satisfied, and hence the constructed ASM is a PTIME icASM.

Concerning the local insignificance condition in a state $S$ we have $| \boldsymbol{\Delta}_r(S) | = 1$, whenever $val_S(\textit{mode\/}) \in \{ \text{init}, \text{build-tm}, \text{simulate-tm} \}$ holds, in which case there is nothing to show. In case $val_S(\textit{mode\/}) = \text{create-order}$ holds, a choice is made, an atom $a \in A$ is selected and the corresponding update set $\Delta_a$ contains updates $((<,(a^\prime,a)),\textbf{true})$ for all $a^\prime \in A^c$ as well as $((A,()), A_{new})$ with $A_{new} = \text{val}_S(A) - \{ a \}$, and $((A^c,()), A_{new}^c)$ with $A_{new}^c = \text{val}_S(A^c) \cup \{ a \}$. In case $val_S(A) = \{ a \}$ holds, there is an additional update $((\textit{mode\/}, ()), \text{build-tm}) \in \Delta_a$. These update sets are parameterised by the selected atom $a$. For two different choices $a_1, a_2$ let $\sigma$ be the isomorphism defined by the transposition of $a_1$ and $a_2$. Then $\sigma (\Delta_{a_1}) = \Delta_{a_2}$ holds.

Concerning the branching condition we have to distinguish several cases. Consider states $S$, $S^\prime$ and update sets $\Delta \in \boldsymbol{\Delta}_r(S)$, $\Delta^\prime \in \boldsymbol{\Delta}_r(S^\prime)$, and assume that there is an isomorphism $\sigma$ with $\sigma(\Delta) = \Delta^\prime$.

\begin{enumerate}\renewcommand{\labelenumi}{(\roman{enumi})}

\item If $val_S(\textit{mode\/}) = \text{init}$ holds, we have $((\textit{mode\/}, ()), \text{create-order}) \in \Delta$, so an isomorphism $\sigma$ mapping $\Delta$ to $\Delta^\prime$ only exists for $S^\prime = S$ and $\Delta^\prime = \Delta$. In this case every isomorphism $\sigma$ preserves $\Delta$, because \textit{Atoms\/} and $\emptyset$ are invariant under isomorphisms. As above $\boldsymbol{\Delta}_r(S + \Delta) = \boldsymbol{\Delta}_r(S^\prime + \Delta^\prime)$ contains update sets $\Delta_a$ for all atoms $a$, where $\Delta_a$ contains only updates $((A,()), \textit{Atoms\/} - \{ a \})$ and $((A^c,()), \{ a \})$ plus eventually $((\textit{mode\/}, ()), \text{build-tm})$ in case $\textit{Atoms\/} = \{ a \}$. Clearly, every isomorphism $\sigma$ maps the set $\{ \Delta_a \mid a \in \textit{Atoms\/} \}$ of update sets onto itself.

\item If $val_S(\textit{mode\/}) = \text{create-order}$ holds, then in state $S$ the locations for the function symbol $<$ define a total order on $val_S(A^c)$. Update sets in $\boldsymbol{\Delta}_r(S)$ are $\Delta_a$ for all atoms $a \in val_S(A)$, and $\Delta_a$ contains updates $((<, (a^\prime, a)), 1)$ for all atoms $a^\prime \notin val_S(A)$ as well as $((A,()), val_S(A) - \{ a \})$ and $((A^c,()), val_S(A^c) \cup \{ a \})$, i.e. the order is extended by a new largest element, plus eventually $((\textit{mode\/}, ()), \text{build-tm})$ in case $| val_S(A) | = 1$. Then an isomorphism $\sigma$ mapping $\Delta_a \in \boldsymbol{\Delta}_r(S)$ to some $\Delta_b \in \boldsymbol{\Delta}_r(S^\prime)$ can only exist, if $| val_S(A^c) | = | val_{S^\prime}(A^c) |$ holds. We consider two subcases:

\begin{enumerate}

\item Assume that $((\textit{mode\/}, ()), \text{build-tm}) \notin \Delta_a$, hence also $((\textit{mode\/}, ()), \text{build-tm}) \notin \Delta_b$. The isomorphism $\sigma$ maps the order on $val_S(A^c)$ to the order on $val_{S^\prime}(A^c)$, and we must have $\sigma(a) = b$. Update sets in $\boldsymbol{\Delta}_r(S + \Delta_a)$ and $\boldsymbol{\Delta}_r(S^\prime + \Delta_b)$, respectively, have again the form $\Delta_c$ and $\Delta_d$, respectively, with atoms $c \in val_S(A) - \{ a \}$ and $d \in val_{S^\prime}(A) - \{ b \}$, so $\sigma$ maps $\boldsymbol{\Delta}_r(S + \Delta_a)$ onto $\boldsymbol{\Delta}_r(S^\prime + \Delta_b)$.

\item Assume that $((\textit{mode\/}, ()), \text{build-tm}) \in \Delta_a$, hence also $((\textit{mode\/}, ()), \text{build-tm}) \in \Delta_b$. Then $\sigma$ defines an order-isomorphism between two total orders on the set \textit{Atoms\/}. In this case $| \boldsymbol{\Delta}_r(S + \Delta_a) | = | \boldsymbol{\Delta}_r(S^\prime + \Delta_b) | = 1$ holds, and the update sets are defined by the rule \textsc{create\_encoding}. That is, $\Delta \in \boldsymbol{\Delta}_r(S + \Delta_a)$ contains updates of locations $(\text{val}_<, i)$ for $0 \le i \le n-1$, which define a string of atoms on an input tape of the Turing machine $T$ in the order defined by $S + \Delta_a$, as well as updates of locations $(\text{pos}_{in}, (i, \bar{a}))$, which determine the position of the other input tape of $T$ containing the value $R_i(\bar{a})$. Likewise, $\Delta^\prime \in \boldsymbol{\Delta}_r(S^\prime + \Delta_b)$ contains such updates for the order on \textit{Atoms\/} defined by $S^\prime + \Delta_b$. As $\sigma$ maps the order on \textit{Atoms\/} defined by $S + \Delta_a$ to the order defined by $S^\prime + \Delta_b$, we also get $\sigma(\Delta) = \Delta^\prime$ and hence $\sigma(\boldsymbol{\Delta}_r(S + \Delta_a)) = \boldsymbol{\Delta}_r(S^\prime + \Delta_b)$.

\end{enumerate}

\item If $val_S(\textit{mode\/}) = \text{build-tm}$ holds, then $\boldsymbol{\Delta}_r(S) = \{ \Delta \}$ with $\Delta$ as in case (ii)(b). We can only have an isomorphism $\sigma$ with $\sigma(\Delta) = \Delta^\prime \in \boldsymbol{\Delta}_r(S^\prime)$, if also $val_{S^\prime}(\textit{mode\/}) = \text{build-tm}$ holds, i.e. we have $\boldsymbol{\Delta}_r(S^\prime) = \{ \Delta^\prime \}$ with $\Delta^\prime$ as in case (ii)(b). In this case we have $| \boldsymbol{\Delta}_r(S + \Delta) | = | \boldsymbol{\Delta}_r(S^\prime + \Delta^\prime) | = 1$, and the only update set $\bar{\Delta} \in \boldsymbol{\Delta}_r(S + \Delta)$ (or $\bar{\Delta}^\prime \in \boldsymbol{\Delta}_r(S^\prime + \Delta^\prime)$) is defined by the rule \textsc{run\_simulation}. That is, the updates in $\bar{\Delta}$ and $\bar{\Delta}^\prime$ are $((\textit{Output\/}, ()), 1)$ or $((\textit{Output\/}, ()), 0)$ depending on whether $T$ accepts or rejects the ordered version $(I,<)$ of the input structure $I$ with the order $<$ on \textit{Atoms\/} defined by $S + \Delta$ or $S^\prime + \Delta^\prime$, respectively, plus the update $((\textit{Halt\/}, ()), 1)$. As $T$ is order-invariant, we must have $\bar{\Delta} = \bar{\Delta}^\prime$, which implies $\sigma(\boldsymbol{\Delta}_r(S + \Delta)) = \boldsymbol{\Delta}_r(S^\prime + \Delta^\prime)$.

\item The case $val_S(\textit{mode\/}) = val_{S^\prime}(\textit{mode\/}) = \text{simulate-tm}$ can be ignored, as in such states the defined ASM makes a last step and thus the set of update sets on the corresponding successor states will be empty.

\end{enumerate}

\paragraph*{\textbf{ICPT $\boldsymbol{\subseteq}$ PTIME}.} 

Assume a PTIME icASM $\tilde{M} = (M, p(n), q(n))$. Analogously to the proof of \cite[Thm.3]{blass:apal1999} we create a simulating PTIME Turing machine, which takes strings encoding ordered versions of input structures $I$ of $\tilde{M}$ as input. 

As shown in Proposition \ref{lem-icpt-translation} we can effectively construct a simulating Turing machine. Every step of $M$ is simulated by first checking the branching condition, then applying the rule $r$ of $M$. For every choice subrule the local insignificance condition is checked. Due to Proposition \ref{lem-loc-insignificance} these checks are performed in polynomial time. Due to Proposition \ref{lem-bc4} also the check of the branching condition is performed in polynomial time. Due to Proposition \ref{lem-icpt-ptime-simulation} the simulating Turing machine is deterministic, because global insignificance holds by Proposition \ref{lem-icasm} and thus choices can be replaced by selecting always the smallest atom in the order added to the input structure. Furthermore, the polynomial bounds of $\tilde{M}$ guarantee that the simulating Turing machine accepts in polynomial time.\qed

\end{proof}

Theorem \ref{thm-capture} shows indeed that ICPT defines a logic capturing PTIME:

\begin{enumerate}\renewcommand{\labelenumi}{(\arabic{enumi})}

\item The sentences of the logic are PTIME icASMs. For their syntax we simply adopt ASM rules as defined in Section \ref{sec:asm}, so $Sen(\Upsilon)$ is a recursive set. However, we use a modified semantics for choice rules and for the closed rules associated with an ASM, which permits empty sets of update sets, if the defining conditions of icASMs are violated.

\item The modified semantics for choice rules yields an empty set of update sets in a state, if the local insignificance condition is violated. Likewise, the rule of an ASM yields an empty set of update sets in all states, in which the branching condition is violated. By adopting this modified semantics we enforce that the ASMs are icASMs.

\item For the $SAT$ relation a sentence $\varphi$ is satisfied by a structure $I$ iff $\varphi$ as an icASM accepts the input structure $I$. Clearly, this also gives a recursive relation.

\item In Theorem \ref{thm-capture} we proved that each PTIME icASM $\tilde{M} = (M, p(n), q(n))$ can be effectively translated into a deterministic PTIME Turing machine $T_M$, which accepts exactly the standard encodings of ordered versions of input structures $I$ for $M$ iff $M$ accepts $I$.

\end{enumerate}

\section{Conclusion}

In this article we proved that ICPT, a logic defined by restricted non-deterministic ASMs, captures PTIME thereby refuting Gurevich's conjecture from 1988 and providing an answer to Chandra's and Harel's question from 1982. To summarise, ICPT is based on three conditions: (1) choice is restricted to choice among atoms, (2) choices must satisfy that update sets in a state must be isomorphic, and (3) for any two isomorphic update sets on states $S$ and $S^\prime$, respectively, the sets of update sets of the corresponding successor states are isomorphic.

\bibliographystyle{abbrv}
\bibliography{icpt}

\end{document}